\documentclass[10pt,onecolumn,oneside]{article}
\usepackage[english]{babel}
\usepackage{verbatim}
\usepackage{longtable}
\usepackage{graphicx}
%\usepackage{showframe}

%%%%%%%%%%%%%%%%%%%%%%%%%%%%%%%%%%%%%%%%%%%%%%%%%%%%%%%%%%%%%%%%%%%%%%%%%%%%%%%%%%%%%%%%%%%%%%%%%%%%
%%%%%%%%%%%%%%%%%%%%%%%%%%%%%%%%%%%%%%%%%%%%%%%%%%%%%%%%%%%%%%%%%%%%%%%%%%%%%%%%%%%%%%%%%%%%%%%%%%%%

\usepackage{url}
\PassOptionsToPackage{hyphens}{url}\usepackage{hyperref}

\makeatletter
\g@addto@macro{\UrlBreaks}{\UrlOrds}
\makeatother

%%%%%%%%%%%%%%%%%%%%%%%%%%%%%%%%%%%%%%%%%%%%%%%%%%%%%%%%%%%%%%%%%%%%%%%%%%%%%%%%%%%%%%%%%%%%%%%%%%%%
%%%%%%%%%%%%%%%%%%%%%%%%%%%%%%%%%%%%%%%%%%%%%%%%%%%%%%%%%%%%%%%%%%%%%%%%%%%%%%%%%%%%%%%%%%%%%%%%%%%%

\usepackage[font=small,skip=6pt]{caption}
%\captionsetup[table]{font=small,skip=0pt}
%\captionsetup[figure]{font=small,skip=0pt}

%%% Default values %%%
%\floatsep: space left between floats (12.0pt plus 2.0pt minus 2.0pt).
%\textfloatsep: space between last top float or first bottom float and the text (20.0pt plus 2.0pt minus 4.0pt).
%\intextsep : space left on top and bottom of an in-text float (12.0pt plus 2.0pt minus 2.0pt).
%\dbltextfloatsep is \textfloatsep for 2 column output (20.0pt plus 2.0pt minus 4.0pt).
%\dblfloatsep is \floatsep for 2 column output (12.0pt plus 2.0pt minus 2.0pt).
%\abovecaptionskip: space above caption (10.0pt).
%\belowcaptionskip: space below caption (0.0pt).

\setlength{\floatsep}{6pt}
\setlength{\textfloatsep}{6pt}
\setlength{\abovecaptionskip}{4pt}
\setlength{\belowcaptionskip}{0pt}

%%%%%%%%%%%%%%%%%%%%%%%%%%%%%%%%%%%%%%%%%%%%%%%%%%%%%%%%%%%%%%%%%%%%%%%%%%%%%%%%%%%%%%%%%%%%%%%%%%%%
%%%%%%%%%%%%%%%%%%%%%%%%%%%%%%%%%%%%%%%%%%%%%%%%%%%%%%%%%%%%%%%%%%%%%%%%%%%%%%%%%%%%%%%%%%%%%%%%%%%%

% For Lua(La)Tex

%%\RequirePackage{luatex85} %compat with old stuff
%\usepackage{tikz}
%\usepackage{pgf}
%\usetikzlibrary{arrows.meta,decorations,decorations.pathmorphing,decorations.pathreplacing,decorations.markings,decorations.text,intersections,positioning,graphs,graphdrawing,calc,patterns}
%\usegdlibrary{trees,layered}
%\makeatletter
%\newcommand{\gettikzxy}[3]{%
%  \tikz@scan@one@point\pgfutil@firstofone#1\relax
%  \edef#2{\the\pgf@x}%
%  \edef#3{\the\pgf@y}%
%}
%\makeatother

%%%%%%%%%%%%%%%%%%%%%%%%%%%%%%%%%%%%%%%%%%%%%%%%%%%%%%%%%%%%%%%%%%%%%%%%%%%%%%%%%%%%%%%%%%%%%%%%%%%%
%%%%%%%%%%%%%%%%%%%%%%%%%%%%%%%%%%%%%%%%%%%%%%%%%%%%%%%%%%%%%%%%%%%%%%%%%%%%%%%%%%%%%%%%%%%%%%%%%%%%

\usepackage{amsmath,amssymb,amsfonts,amsthm,dsfont}

\theoremstyle{plain} \newtheorem{lemma}{\textbf{Lemma}}
\theoremstyle{plain} 
\theoremstyle{remark} \newtheorem{remark}{\textbf{Remark}}
\theoremstyle{plain} \newtheorem{theorem}{\textbf{Theorem}}
\theoremstyle{plain} 
\theoremstyle{plain} \newtheorem{corollary}{\textbf{Corollary}}
\theoremstyle{definition} \newtheorem{definition}{\textbf{Definition}}
\theoremstyle{plain} \newtheorem*{nonumbertheorem}{\textbf{Theorem}}
\theoremstyle{plain}

%\newtheoremstyle{thmstyleAAA1}{3pt}{3pt}{}{}{\bfseries}{ }{\newline}{}
%\newtheoremstyle{thmstyleAAA2}{3pt}{3pt}{}{}{\bfseries}{ }{}{}
%\newtheoremstyle{thmstyleAAA3}{3pt}{3pt}{}{}{\bfseries}{}{0.5em}{}
%\theoremstyle{thmstyleAAA1} \newtheorem{fact}{\textbf{Fact}}
%\theoremstyle{thmstyleAAA3} \newtheorem{cumjacere}{\textbf{Conjecture}}
%\theoremstyle{thmstyleAAA2} \newtheorem{thmnopuct}{\textbf{Theorem}}
%\theoremstyle{thmstyleAAA2} \newtheorem*{nonumberresult}{\textbf{Result}}
%\theoremstyle{thmstyleAAA1} \newtheorem{exampleX}{\textbf{Example}}

%%%%%%%%%%%%%%%%%%%%%%%%%%%%%%%%%%%%%%%%%%%%%%%%%%%%%%%%%%%%%%%%%%%%%%%%%%%%%%%%%%%%%%%%%%%%%%%%%%%%
%%%%%%%%%%%%%%%%%%%%%%%%%%%%%%%%%%%%%%%%%%%%%%%%%%%%%%%%%%%%%%%%%%%%%%%%%%%%%%%%%%%%%%%%%%%%%%%%%%%%

%NOTE: use array package, get diagonal triple dots

%\newcolumntype{R}[2]{%
%    >{\adjustbox{angle=#1,lap=\width-(#2)}\bgroup}%
%    l%
%    <{\egroup}%
%}
%\newcommand*\rot{\multicolumn{1}{R{90}{1em}}}% no optional argument here, please!

%%%%%%%%%%%%%%%%%%%%%%%%%%%%%%%%%%%%%%%%%%%%%%%%%%%%%%%%%%%%%%%%%%%%%%%%%%%%%%%%%%%%%%%%%%%%%%%%%%%%
%%%%%%%%%%%%%%%%%%%%%%%%%%%%%%%%%%%%%%%%%%%%%%%%%%%%%%%%%%%%%%%%%%%%%%%%%%%%%%%%%%%%%%%%%%%%%%%%%%%%

\usepackage{algorithm,algorithmicx,algpseudocode,float}
%%For algorithm settings, etc

\makeatletter
\newenvironment{breakablealgorithm}
  {% \begin{breakablealgorithm}
   \begin{center}
     \refstepcounter{algorithm}% New algorithm
     \hrule height.8pt depth0pt \kern2pt% \@fs@pre for \@fs@ruled
     \renewcommand{\caption}[2][\relax]{% Make a new \caption
       {\raggedright\textbf{\ALG@name~\thealgorithm} ##2\par}%
       \ifx\relax##1\relax % #1 is \relax
         \addcontentsline{loa}{algorithm}{\protect\numberline{\thealgorithm}##2}%
       \else % #1 is not \relax
         \addcontentsline{loa}{algorithm}{\protect\numberline{\thealgorithm}##1}%
       \fi
       \kern2pt\hrule\kern2pt
     }
  }{% \end{breakablealgorithm}
     \kern2pt\hrule\relax% \@fs@post for \@fs@ruled
   \end{center}
  }
\makeatother

\floatname{algorithm}{Procedure}

\renewcommand{\algorithmicreturn}[1]{\bgroup\\  ~#1\egroup}
\renewcommand{\algorithmiccomment}[1]{\bgroup\hfill//~#1\egroup}

%%%%%%%%%%%%%%%%%%%%%%%%%%%%%%%%%%%%%%%%%%%%%%%%%%%%%%%%%%%%%%%%%%%%%%%%%%%%%%%%%%%%%%%%%%%%%%%%%%%%
%%%%%%%%%%%%%%%%%%%%%%%%%%%%%%%%%%%%%%%%%%%%%%%%%%%%%%%%%%%%%%%%%%%%%%%%%%%%%%%%%%%%%%%%%%%%%%%%%%%%

\newcommand{\rmv}[1]{}
%\newcommand{\REDNOTE}[1]{\color{red}{\textbf{\{} #1 \textbf{\}}}\color{black}{}}

%black, blue, brown, cyan, darkgray, gray, green, lightgray, lime, magenta, olive, orange, pink, purple, red, teal, violet, white, yellow

\DeclareMathOperator*{\esssup}{ess\,sup}

\newcommand*\rfrac[2]{{}^{#1}\!/_{#2}}
\newcommand{\squash}[1]{\raisebox{0.04ex}[0pt][0pt]{\small$\textstyle #1$}}

\makeatletter %solve pmod spacing issue
\let\@@pmod\pmod
\DeclareRobustCommand{\pmod}{\@ifstar\@pmods\@@pmod}
\def\@pmods#1{\mkern4mu({\operator@font mod}\mkern 6mu#1)}
\makeatother

%%%%%%%%%%%%%%%%%%%%%%%%%%%%%%%%%%%%%%%%%%%%%%%%%%%%%%%%%%%%%%%%%%%%%%%%%%%%%%%%%%%%%%%%%%%%%%%%%%%%
%%%%%%%%%%%%%%%%%%%%%%%%%%%%%%%%%%%%%%%%%%%%%%%%%%%%%%%%%%%%%%%%%%%%%%%%%%%%%%%%%%%%%%%%%%%%%%%%%%%%

%\renewcommand{\baselinestretch}{1.0}
%\setlength{\parindent}{5mm}
\allowdisplaybreaks

%%%%%%%%%%%%%%%%%%%%%%%%%%%%%%%%%%%%%%%%%%%%%%%%%%%%%%%%%%%%%%%%%%%%%%%%%%%%%%%%%%%%%%%%%%%%%%%%%%%%%
%%%%%%%%%%%%%%%%%%%%%%%%%%%%%%%%%%%%%%%%%%%%%%%%%%%%%%%%%%%%%%%%%%%%%%%%%%%%%%%%%%%%%%%%%%%%%%%%%%%%

\usepackage{authblk}
\usepackage{stackengine}
\setlength{\affilsep}{1mm}

\title{Random variate generation using only finitely many unbiased, independently and identically distributed random bits}
\author{Luc Devroye}
\affil{\stackunder{\small{McGill University}}{\stackunder{\small{Canada}}{\mbox{\small{\texttt{lucdevroye@gmail.com}}}}}}
\author{Claude Gravel}
\affil{\stackunder{\small{EAGLYS Inc.}}{\stackunder{\small{Japan}}{\stackunder{\mbox{\small{\texttt{claudegravel1980@gmail.com}}}}{\mbox{\small{\texttt{c\_gravel@eaglys.co.jp}}}}}}}
\date{\today}

\begin{document}

\tableofcontents
\thispagestyle{empty}
\setcounter{page}{0}
\newpage

\maketitle

\begin{abstract}

For any discrete probability distributions with bounded entropy, we can generate exactly a random variate using only a finite expected number of perfect coin flips. A perfect coin flip is the outcome of an unbiased Bernoulli random variable. Coin flips are unbiased, independently and identically distributed in all our work. We survey well-known algorithms for the discrete case such as the one from Knuth and Yao as well as the one from Han and Hoshi. We also discuss briefly about a practical implementation for the algorithm proposed by Knuth and Yao. For the continuous case, only approximations can be hoped for. The freedom to choose the accuracy for the approximations matters, and, for that, we propose to measure accuracy in terms of the Wasserstein $L_\infty$-metric. We derive a universal lower bound for the expected number of perfect coin flips required to reach a desired accuracy. We also provide several algorithms for absolutely continuous distributions that come within our universal lower bound.

\textbf{Keywords: } random number generation, entropy, discretization, inversion, probability integral transform, tree-based algorithms, random sampling, randomness processing, rejection sampling, absolutely continuous probability distribution, singular probability distribution

\textbf{AMS subject classifications: }65C10 Random number generation, 68Q25 Analysis of algorithms and problem complexity, 68Q30 Algorithmic information theory, 68Q87 Probability in computer science (algorithm analysis, random structures, phase transitions, etc.), 68W20 Randomized algorithms, 68W40 Analysis of algorithms

\end{abstract}

\section*{List of symbols}

Symbols and their short meanings used throughout this work are listed below. There might be some variants of the symbols mentioned below but we try to follow as much as possible the semantics given hereafter. Proper definitions are given in the following sections whenever required.

\begin{enumerate}
\item $\mathbf{p}$ is a probability vector.
\item $X$, $Y$, $Z$ or $U$ are random variables. Uppercase letters are generally used for random variables with the exceptions of $F$, $H$ and $W$.
\item $X_i$ is a sample or sequence of random variables for some $i\in I\subset \mathbb{Z}$.
\item $f$ is generally the probability density function of some random variable.
\item $F$ is generally the cumulative distributive function of some random variable.
\item $\stackrel{\mathrm{p}}{\to}$ denotes convergence in probability.
\item $\stackrel{\mathcal{D}}{=}$ denotes equality of distribution.
\item $\mathbf{P}$ is a generic symbol for a probability measure. Usage examples are: $\mathbf{P}\{X=x\}$, $\mathbf{P}\{X=x,Y=y\}$ or $\mathbf{P}\{X=x\mid Y=y\}$.
\item $\mathds{1}$ the indicator function. For instance $\mathds{1}\{X\in A\}$ is $0$ if $X\notin A$ and $1$ if $X\in A$.
\item $\mu_{i}$ or $\mu_{A}$ may be used as a generic symbol to render a more compact notation for $\mathbf{P}\{X=i\}$ (discrete) or $\mathbf{P}\{X\in A\}$ (continuous) respectively.
\item $H$ is generic symbol for the entropy of a distribution or the differential entropy. Usage examples are: $H(\mathbf{p})$, $H(X)$, $H(f)$ for the differential entropy, $H(X\mid Y=i)$ or $H(X\mid \mathcal{A})$ whenever $\mathcal{A}$ is a partition into disjoint sets of the support of $X$ which yields a discrete distribution. For the case of a continuous random variable $X$ and $\mathcal{A}$ a partition of its support, we sometime use the terminology of ``partition entropy'' to refer to $H(X\mid \mathcal{A})$.
\item $W_{p}$ is the Wasserstein $L_{p}$-metric defined over the product space of probability measures.
\item $\esssup$ is essential supremum which is the supremum excepted on sets of measure zero.
\item $\texttt{A}$ is a sampling algorithm.
\item $\texttt{RandomBit}$ is an instance of random bit generator.
\item $\texttt{FetchBit}$ is an almost-like instance of a random bit generator with the exception that it fetches random bits left by some processes and usually queued before calling $\texttt{RandomBit}$ as a subroutine. When the queue it is given to it is empty, it invokes $\texttt{RandomBit}$ automatically.
\item $T$ or $N$ are random variables for the complexities of interests (number of bits, stopping time, etc).
\item $f(x)^{+}=\max\{0,f(x)\}$ for a function $f:I\to\mathbb{R}$ and $I\subseteq\mathbb{R}^{d}$.
\end{enumerate}

\section{Introduction}\label{sect_intro}

Let $N$ be a discrete random variable with range $I\subseteq \mathbb{Z}$, and with distribution denoted by $p_i$ that is $p_i=\mathbf{P}\{N=i\}$ for $i\in I$. The binary entropy of $N$ is the quantity $-\sum_{i\in I}{p_{i}\log_{2}(p_i)}=H(N)$. To generate a random variate, we assume the existence of a source of random unbiased bits. More specifically, we denote by \texttt{RandomBit} a device, a method, or an oracle that is assumed to return an unbiased bit independently of any previous calls when invoked. We do not discuss how to create software or hardware instances of \texttt{RandomBit} here.  A random unbiased bit is a Bernoulli random variable with equal probability for either of its two outcomes. Knuth and Yao \cite{KnuYao_1976} showed that the expected number of independent unbiased random bits needed to generate an instance of $N$ is at least equal to the binary entropy of $N$. They also exhibited an algorithm called the Discrete Data Generator tree algorithm, abbreviated DDG tree hereafter, for which the expected number of random unbiased bits is not more than $H(N)+2$. Another famous DDG based tree algorithm appeared later from Han and Hoshi \cite{HanHos_1997}. DDG based tree algorithms rely on the perfect knowledge of the probability vector $(p_i)_{i\in I}$ and therefore assumes a computational capability with arbitrary precision over real numbers. In some cases such as the discrete uniform distribution or some other distributions with particular structures, there is no need to perform computations with arbitrary finite precision for the probabilities. For instance, Lumbroso \cite{Lum_PHD_2012} created an algorithm to sample discrete uniform distributions with an expected complexity that fits within the information theoretical interval provided by Knuth and Yao, and for which only \emph{integer} arithmetic is required. We mention briefly later a simple and practical implementation of the algorithm from Knuth and Yao. From a theoretical point of view, we also exhibit an interesting batch-type algorithm which has asymptotically the binary entropy as expected complexity; the key idea of our algorithm is to extract random bits left in the generation process based on DDG trees. We may use both the terms variate and variable interchangeably.

While the aforementioned results settle the discrete random variate case quite satisfactorily, the generation of continuous or mixed random variables has not been treated satisfactorily in the literature. One of our goal is to study the expected number of random unbiased bits to generate a continuous variate $X\in\mathbb{R}^{d}$ from a continuous distribution with a given precision or accuracy $\epsilon>0$. A few important concepts to recall, and upon which rely the definition of a sampling algorithm, are the $\ell_{p}$-norm and the differential entropy of an absolutely continuous probability distribution. For a vector $v\in\mathbb{R}^{d}$, let $\|v\|_{p}$ denote the $\ell_{p}$-norm of $v$ for $p\geq 1$:
$\|v\|_{p}=\big(\sum_{i=1}^{d}{|v_{i}|^{p}}\big)^{1/p}$. For $p=\infty$, the $\infty$-norm is $\|v\|_{\infty}=\sup_{1\leq i\leq d}{|v_{i}|}$. With $d=1$, all $p$-norms are the same for $p\in[1,\infty]$. If the distribution of $X$ is absolutely continuous with density $f$ on a support $S\subseteq \mathbb{R}^{d}$, then we denote by $H(f)$ the differential entropy of $f$ (or $X$) which is given by
\begin{displaymath}
H(f) = \int_{S}{f(x)\log_{2}\bigg(\frac{1}{f(x)}\bigg)dx}.
\end{displaymath}
The differential entropy can be ill-defined, $-\infty$, finite or $+\infty$. We refer to Cover and Thomas \cite{CovTho_book1991} for more information on differential entropy and entropy in general. When $f$ has a compact support, then the case $+\infty$ cannot occur. When $f$ is bounded, then the case $-\infty$ is excluded. When $H(\lfloor X_{1}\rfloor, \ldots, \lfloor X_{d}\rfloor) < \infty$, it can be shown that $H(f)$ is well-defined and is either finite or $-\infty$; see R\'{e}nyi \cite{Ren_1959}, Csisz\`{a}r \cite{Csi_1961} for a proof.

A satisfactory choice of metric to measure the accuracy is the Wasserstein $L_\infty$-metric between two probability measures. The Wasserstein $L_{p}$-metrics are explained in details in Rachev and R{\"u}schendorf \cite{RacRus_volI} and \cite{RacRus_volII}. Let $\mathcal{M}$ denote the product space of all distributions of pairs \mbox{$(X,Y) \in \mathbb{R}^{d}\times\mathbb{R}^{d}$} with fixed marginal distributions $F$ and $G$ for $X$ and $Y$, respectively. Then the Wasserstein $L_\infty$-distance between $X$ and $Y$, or between $F$ and $G$, is
\begin{displaymath}
\begin{split}
W_{p}(F,G)=\inf\big\{\esssup\|X-Y\|_{p}\phantom{1}:\phantom{1}(F,G)\in \mathcal{M}\},
\end{split}
\end{displaymath}
where $\esssup$ denotes the essential supremum. The Wasserstein $L_\infty$-metric defines a distance between $X$ and $Y$ that is $\mathrm{dist}_{p}(X,Y)= W_{p}(F,G)$. If $\mathrm{dist}_{p}(X,Y) < \epsilon$, then there exists a random variable $Y$ (output) coupled with $X$ (target) such that \mbox{$\esssup\|X-Y\|_{p} < \epsilon$} that is, with probability one, $\|X-Y\|_{p}<\epsilon$. We go beyond the existence of $Y$ and show in later sections how to generate such instances of $Y$. This definition of distance satisfies almost all simulation scenarios that require the evaluation of a \emph{continuous} real-valued function $\Psi(X_{1},\ldots,X_{d})$ where the $X_i$'s are independent random variables. For $X,Y,y\in\mathbb{R}^{d}$, then, almost surely, we have $|\Psi(Y)-\Psi(X)|\leq\sup\{|\Psi(y)-\Psi(X)|\colon \|y-X\|_{p}<\epsilon\}$ which can be controlled by the user.

In the following definition, let $\epsilon$ be the accuracy between a desired target random variable $X$ and the output $Y$ from a generation algorithm. Also $T$ is a random variable that denotes the number of times \texttt{RandomBit} is invoked by a algorithm that generates a random instance of $Y$.
\begin{definition}[$(\epsilon,p)$-sampling algorithm]\label{def_arbsamalg}
On inputs $\epsilon$ and $p$, an $(\epsilon,p)$-sampling algorithm $\mathtt{A}$ for $X$ is a probabilistic algorithm that returns $Y$ such that $\|X-Y\|_{p}<\epsilon$ with probability one and halts when \texttt{RandomBit} is invoked $T$ times.
\end{definition}
We are interested in sampling algorithms for which the expectation of the stopping time $T$ from definition \ref{def_arbsamalg} is finite. One of our main result is the following:
\begin{nonumbertheorem}
Let $X \in \mathbb{R}^{d}$ be a random vector with density $f$, and assume that the entropy of the integer parts of the components of $X$ is finite, that is, \mbox{$H(\lfloor X\rfloor)<\infty$} where $X=(\lfloor X_{1}\rfloor, \ldots, \lfloor X_{d}\rfloor)$. The expected number of random i.i.d.~unbiased bits, $\mathbf{E}(T)$, used by any sampling algorithm for $X$ and output accuracy $\epsilon$ is bounded below by
\begin{displaymath}
H(f)+d\log_{2}\bigg(\frac{1}{\epsilon}\bigg)-\log_{2}V_{d,p}\quad\text{with}\quad V_{d,p}=\frac{2^{d}\Gamma\big(\frac{1}{p}+1\big)}{\Gamma\big(\frac{d}{p}+1\big)},
\end{displaymath}
and the latter quantity is the volume of the unit ball in $\mathbb{R}^{d}$. For $p=1$ and $p=\infty$, the third term in the lower bound is $\log_{2}\big(2^{d}\slash d!\big)$ and $d$, respectively. For $d=1$, it is $1$.
\end{nonumbertheorem}

We provide most importantly the foundational background to research universal lower bounds for the generation of continuous random variate with arbitrary finite precision and finite expected complexity. We also provide a methodology for various useful upper bounds for practical algorithms. Among those practical algorithms is the one by Devroye and Gravel \cite{DevGra_2017} which is an extension of the Von Neumann's rejection method to our realistic practical framework. For the sampling of absolutely continuous distributions, we observe that $\mathbf{E}(T)$ relates to the binary entropy almost in the way done in Knuth and Yao \cite{KnuYao_1976} for the discrete case. Some authors have addressed the problem of arbitrary finite precision for sampling algorithms for continuous distributions. Among them, Flajolet and Saheb \cite{FlaSah_1986} explain how to generate the first $k$ bits of an exponential random variable for an integer $k\geq 1$, and Karney \cite{Kar_2016} describes an algorithm for the standard normal distribution.

This article is divided into two major sections: section \ref{sect_disc_var_gen} treats the discrete case and section \ref{sect_main_sect_cts} treats the continuous distributions. These two main sections start with a brief summary of their content.

\section{Discrete variate generation}\label{sect_disc_var_gen}

We discuss here the two main approaches to generate discrete random variables: the Knuth and Yao \cite{KnuYao_1976} algorithm and the Han and Hoshi \cite{HanHos_1997} algorithm. Both former algorithms make extensive use of trees as data structures, and particularly, a type of tree called Discrete Data Generator tree or DDG-tree for short. In section \ref{sect_KY_algo}, we recall the Knuth and Yao's algorithm which encode a probability mass function into a DDG tree. In section \ref{sect_HH}, we recall Han and Hoshi's algorithm which encodes a cumulative distribution function into a DDG tree. If $\mathbf{p}$ denotes the target probability vector to be sampled, then both of the previous algorithms needs an expected number of random bits of about $H(\mathbf{p})+O(1)$ given that $H(\mathbf{p})$ is bounded.

In section \ref{sect_batch_gen}, we develop further and generalize the concept of a DDG-tree based algorithm, and, from there, it naturally follows our main contribution to the discrete case found in section \ref{sect_assymptotic_batch}: an algorithm that reaches the Shannon's lower bound \cite{Sha_1948}, that is $H(\mathbf{p})$. To reach asymptotically and in probability Shannon's bound, we develop a method in section \ref{sect_rnd_ext} to extract randomness from i.i.d.\ random variables distributed according to some distributions. A batch is a sequential generation of i.i.d.\ random variables. The randomness extraction procedure is then used within our asymptotic batch generation method from \ref{sect_assymptotic_batch}.

In section \ref{sect_lumbroso_fast_dice}, we recall Lumbroso's algorithm \cite{Lum_PHD_2012} to generate a discrete uniform distribution. In section \ref{sect_concrete_imple}, we explain a C++ implementation for the Knuth and Yao's algorithms for general mass functions, other than just the uniform distribution, that uses lists as data structures, and which shares lots of similarities with Lumbroso's algorithm.

\subsection{DDG tree algorithm and probability mass function}\label{sect_KY_algo}

In this section, we detail principles and facts behind the Knuth and Yao \cite{KnuYao_1976} sampling algorithm. The concept of a DDG-tree as a data structure is central in order to encode a probability vector and obtain an almost optimal expected complexity. The interval for the expected complexity of the number of random bits is $[H,H+2]$ where $H$ is the binary entropy of the discrete distribution to be sampled whenever $H$ is finite. By the result of Shannon \cite{Sha_1948}, the expected complexity must be at least $H$ bits.

For $I\subseteq \mathbb{Z}$, let $\mathbf{p}=(p_{i})_{i\in I}$ be a probability vector, that is, $p_i>0$ for all $i\in I$ and $\sum_{i\in I}{p_{i}}=1$. For $i\in I$, we write the binary expansion of $p_i$ as
\begin{displaymath}
p_{i}=\sum_{j=1}^{\infty}{p_{ij}2^{-j}}\quad\text{for $p_{ij}\in\{0,1\}$}.
\end{displaymath}
For a while, suppose we have the ability to compute $p_{ij}$ on the fly or the ability of infinite storage whenever $p_i$ are irrational numbers. For $j\geq 1$, consider the family of sets (lists) $L_{j}$ defined by $L_{j}=\{i\in I\text{ and }p_{ij}=1\}$. In other words, $L_j$ is the set of outcomes which have non-zero coefficient for the term $2^{-j}$ in their probability of occurrence. We have that
\begin{displaymath}
\sum_{i\in I}{p_{i}}=\sum_{i\in I}\sum_{j=1}^{\infty}{p_{ij}2^{-j}}=\sum_{j=1}^{\infty}\sum_{i\in I}{p_{ij}2^{-j}}=\sum_{j=1}^{\infty}{\frac{|L_j|}{2^{j}}}=1.
\end{displaymath}
Clearly $0\leq |L_{j}|\leq 2^{j}$ for all $j\geq 1$. More importantly, $L_j$ is \emph{uniformly} distributed that is $\mathbf{P}\{i\in L_j\}=1/|L_{j}|$. We observe that the only case for which $|L_{j'}|=0$ for some $j>j'$ corresponds to the uniform distribution with $2^{j'}$ atoms. An atom is an element from the support of a discrete distribution.

The family of lists $L=\{L_{j}\}_{j=1}^{\infty}$ defines \emph{uniquely} a tree that Knuth and Yao termed the Discrete Data Generator tree, abbreviated DDG tree. We can add a member to $L$, namely $L_0=\emptyset$, to represent the root of the tree and $|L_0|=0$.
A probability vector $\mathbf{p}=(p_i)_{i\in I}$ has a unique (often of infinite size) DDG binary tree representation for which
\begin{enumerate}
\item[(1)] leaves with depth $j$ are the elements of $L_j$,
\item[(2)] the number of nodes with depth $j$ that are not leaves is denoted by $s_j$ and equals $t_j-|L_j|$ where $t_j$ is the total number of nodes with depth $j$.
\end{enumerate}
Without loss of generality, $L_0=\emptyset$ and we have for all $j\geq 1$ that
\begin{displaymath}
t_{j}=2^{j}-\sum_{k=0}^{j-1}{2^{j-k}|L_{k}|}\quad\text{and}\quad t_j=s_{j}+|L_{j}|.
\end{displaymath}
A visual example may help and let us consider for instance $\mathbf{p}=(p_0,p_1,p_2)$ where
\begin{align}
p_{0} & = \frac{1}{\pi} = (0.010100010111110\ldots)_{2},\label{onara_irrationnel1}\\
p_{1} & = \frac{1}{e} = (0.010111100010110\ldots)_{2},\label{onara_irrationnel2}\\
p_{2} & = 1-p_{1}-p_{2} = (0.010100000101010\ldots)_{2}\label{onara_irrationnel3}.
\end{align}
The DDG tree of $\mathbf{p}$ has infinite size and is represented on figure \ref{figDDG}. Elements in $L_{j}$ are all equally likely, and their indexing as leaves on the corresponding level does not matter as well. However it is custom to take same canonical order.

\begin{figure}
\begin{centering}
\includegraphics{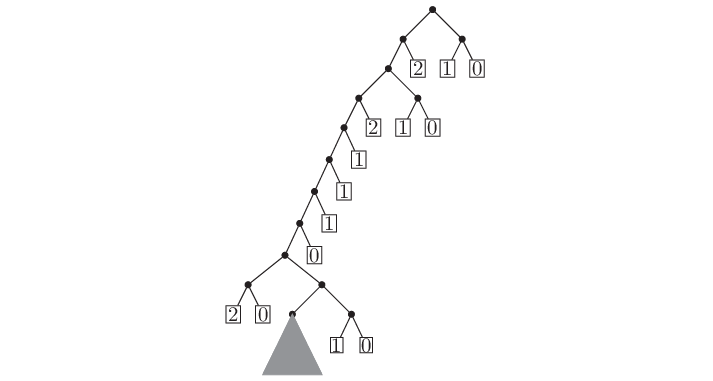}
\caption{\label{figDDG}DDG tree for a probability vector of length $3$ with irrational probabilities given by Equations (\ref{onara_irrationnel1}), (\ref{onara_irrationnel2}) and (\ref{onara_irrationnel3}). The gray triangle means the tree continues ad aeternam.}
\end{centering}
\end{figure}

Given the ability to generate uniform i.i.d.~bits and given a (non)-uniform discrete distributions $\mathbf{p}=(p_1,p_2\ldots)$, a random walk beginning from the root of the tree to a leave with depth $j$ generates an instance of the latter distribution. We adopt the convention to walk to the left when a random bit $0$ is returned by \texttt{RondomBit} and to the right when $1$ is returned. The algorithm halts with probability one. We can give other meanings to the quantities $t_j$, $|L_j|$ and $s_j$ as follow:
\begin{enumerate}
\item[(1)] $t_j$ is the number of decisions at depth $j$,
\item[(2)] $|L_j|$ is the number of ways the random walk stops at depth $j$. Given the walk has reached depth $j$, then it stops with probability $|L_j|/t_j$,
\item[(3)] $s_j$ is the number of ways the random walk continues to depth $j+1$. Given the walk has reached level $j$, then it continues with probability $s_j/t_j$.
\end{enumerate}
As shown in \cite{KnuYao_1976}, the former quantities entirely characterized the expected number of bits and hence the running time or stopping time of the random walk over the DDG tree.

To generate a random outcome given the knowledge of $L=\{L_j\}_{j\geq 1}$ and given an access to an instance of \texttt{RandomBit}, we use the latter to obtain \emph{uniform} random values in the intervals $[0,t_1)$, $[0,t_2)$, $\ldots$, $[0,t_j)$, $\ldots$, and stops as soon as the latter uniform value is in $[0,|L_{j}|)$. We have therefore the following algorithm in which $L_{j}[x]$ means the $x$-th member of $L_{j}$ and $|L_{j}|$ stands for the size of $L_{j}$.

\begin{breakablealgorithm}\label{alg_ky_by_lists}
\caption{The Knuth and Yao's sampling algorithm using lists}
\begin{algorithmic}[1]
\raggedright
\Require Lists $L_{j}$ for $j\geq 1$
\Ensure $X$ distributed according to $\mathbf{p}$
\State{$X\leftarrow 0$}
\State{$Y\leftarrow 1$}
\State{$j\leftarrow 1$}
\Loop
\State{$Y\leftarrow 2Y$}
\State{$X\leftarrow 2X+\texttt{RandomBit}$}
\If{$Y\geq t_{j}$}
\If{$X<|L_{j}|$}
\State{\textbf{Return} $L_{j}[X]$}
\Else
\State{$Y\leftarrow Y-|L_{j}|$}
\State{$X\leftarrow X-|L_{j}|$}
\EndIf
\EndIf
\State{$j\leftarrow j+1$}
\EndLoop
\end{algorithmic}
\end{breakablealgorithm}

Algorithm \ref{alg_ky_by_lists} is correct from the discussions preceding its elaboration. We give now a proof of its expected complexity.
\begin{theorem}[Knuth and Yao \cite{KnuYao_1976}]\label{KYthm}
The expected number of calls to \texttt{RandomBit} in algorithm \ref{alg_ky_by_lists} is bounded from below by $H(\mathbf{p})$ and from above by $H(\mathbf{p})+2$.
\end{theorem}

\begin{proof}[Proof of theorem \ref{KYthm}]
Given a probability vector $\mathbf{p}=(p_1,p_2,\ldots,p_n)$ with $n$ possibly infinite, recall the binary expansion of $p_i$ which is
\begin{align*}
p_i&=\sum_{j=1}^{\infty}{\frac{p_{ij}}{2^{j}}}\quad\text{for $p_{ij}\in\{0,1\}$}.
\end{align*}
If $T$ denotes the number of bits required by the random walk to sample $\mathbf{p}$, then for $j\geq 1$
\begin{align}
\mathbf{P}\{T=j\}&=\frac{\textrm{number of leaves at level $j$}}{2^{j}}=\sum_{i=1}^{n}{\frac{p_{ij}}{2^{j}}}=\sum_{i=1}^{n}{\frac{|L_{j}|}{2^{j}}},\nonumber\\
\mathbf{E}(T)&=\sum_{j=0}^{\infty}{t\mathbf{P}\{T=j\}}=\sum_{j=1}^{\infty}{t\sum_{i=1}^{n}{\frac{p_{ij}}{2^{j}}}}=\sum_{i=1}^{n}{\bigg(\sum_{j=1}^{\infty}{\frac{jp_{ij}}{2^{{j}}}}\bigg)}\label{eqn:expcomp_foo}.
\end{align}
We focus now on the quantity between parentheses from (\ref{eqn:expcomp_foo}). For that, let $m_i$ be the smallest integer such that $p_{i{}m_i}=1$, and $p_{ij}=0$ for $1\leq j\leq m_{i}-1$, then
\begin{displaymath}
m_i=\bigg\lceil\log_{2}\bigg(\frac{1}{p_i}\bigg)\bigg\rceil=\log_{2}\bigg(\frac{1}{p_i}\bigg)+\delta_i\quad\text{for some $0\leq\delta_i<1$.}
\end{displaymath}
The quantity within parentheses of (\ref{eqn:expcomp_foo}) is bounded above by
\begin{displaymath}
\sum_{j=m_i}^{\infty}{\frac{j}{2^{j}}}=\frac{m_i+1}{2^{m_i-1}}=\bigg(\log_{2}\bigg(\frac{1}{p_i}\bigg)+\delta_i+1\bigg)p_{i}2^{1-\delta_i}= \vartheta_{i}
\end{displaymath}
The first derivative of $\vartheta_{i}$ with respect to $\delta_i$ vanishes only when $\delta_i=\log_{2}\big(\rfrac{ep_i}{2}\big)$, and its second derivative is positive when $\delta_i=\log_{2}\big(\rfrac{ep_i}{2}\big)$. Therefore we analyze the quantity $\vartheta_{i}$ with respect to the three possible cases: (1) when $p_i>\rfrac{4}{e}$ (impossible since $p_i<1$ by definition), (2) when $\rfrac{2}{e}<p_i\leq 1$ (more generally when $\rfrac{2}{e}<p_i\leq\rfrac{4}{e}$, but again $p_i<1$ by definition), and (3) when $p_i\leq \rfrac{2}{e}$. If $\rfrac{2}{e}<p_i\leq 1$, then the minimum of $\vartheta_{i}$ occurs within the interval $(0,1]$, and $\vartheta_{i}<2$ for both the boundaries. If $p_i\leq\rfrac{2}{e}$, then $\vartheta_{i}\leq p_i\log_{2}\big(\rfrac{1}{p_i}\big)+2p_i$ since the maximum occurs at the right boundary $\delta_i=1$ since, in this case, $\vartheta_{i}$ is increasing on $(0,1]$. By summing over the $i$'s, one obtains the upper bound.

Hence given $i$, the quantity within parentheses of (\ref{eqn:expcomp_foo}) is bounded below from
\begin{displaymath}
\sum_{j=m_i}^{\infty}{\frac{j{p_{ij}}}{2^{j}}}\geq m_i\sum_{j=m_i}^{\infty}{\frac{p_{ij}}{2^j}}=m_{i}p_{i}>p_{i}\log_{2}\bigg(\frac{1}{p_{i}}\bigg).
\end{displaymath}
By summing over the $i$'s the quantity , we obtain the lower bound.
\end{proof}

\subsubsection{A brief discussion about a concrete implementation}\label{sect_concrete_imple}

Given a mathematical representation for $\mathbf{p}=(p_i)_{i\in I}$, we can wonder how to obtain the corresponding lists $L_{j}$'s for sufficiently large $j$'s such that, in practice, an implementation of algorithm \ref{alg_ky_by_lists} halts with probability as close to one as the storage allows. When implementing algorithm \ref{alg_ky_by_lists} in practice, we need to take into account that storage is finite, and, more importantly, how storage and accuracy (entropy) relates to each other. The exactness of the lists relies on the ability of libraries to perform exact arithmetic, and this is itself a topic in computational arithmetic number theory. What matters of a library is the guarantee to obtain truncations of $p_i$ with sufficiently enough bits so that an implementation of \ref{alg_ky_by_lists} halts with probability as close to one as the storage allows. We observe that generally $p_i$ is an irrational aperiodic number and therefore truncations with sufficiently many bits from its binary expansion are needed to halt.

Computing libraries such as \cite{GMP}, \cite{MPFR} or \cite{NTL} can compute functions and perform arithmetic operations with guaranteed accuracy. For instance, outputs from GMP are always truncated to the destination variable’s precision. MPFR is a GMP extension for multiple-precision floating-point computations with correct rounding. MPFR provides well-defined precision and accurate rounding, and thereby naturally extends IEEE P754. NTL provides a module to represent arbitrary-precision floating point
numbers. The functions from that NTL module guarantee very strong accuracy conditions which make it easy to reason about the behavior of programs using these functions. The arithmetic operations always round their results to the current precision.

The storage required for the lists $L_j$ depends on a few factors that we briefly mention. First we need to compute and store at least $\lceil H(\mathbf{p})\rceil=j_1$ lists $L_{j}$'s for $1\leq j\leq j_1$. The quantity $j_1$ is an average case complexity. For worst-case scenarios, another choice is $j_1=\lceil-\log_{2}(\min\{p_i\colon i\in I\})\rceil$ given $I$ is finite. Note that in all of our work, we always have assumed $p_i>0$ so that the former and latter quantities are well-defined. Second when $\mathbf{p}$ has infinite support which is truncated to a finite support which must yield to a properly normalized probability vector $\mathbf{q}$; let us write $\delta$ for the leftover probability due to the truncation. Then in addition to the first $j_1$ lists, we need $\lceil -\log_{2}(1-\delta)\rceil=j_2$ lists. If $\mathbf{p}$ has a finite support, then usually no truncation of the support is required so that $\delta=0$ and no additional list are required. Third, it is desirable often to make sure that an implementation is indistinguishable from the ideal target. For that matter, an additional number of lists, say $j_3$, should be computed. This $j_3$ additional lists mostly depends on the computational power to perform statistical goodness-of-fit tests. Stein's theorem for instance can be helpful in bounding the quantity $j_3$ here.

For instance, the second author’s GitHub \url{https://github.com/63EA13D5/} contains a C++ implementation of algorithm \ref{alg_ky_by_lists}. The implementation is rather straightforward and uses only classes from the C++ standard library excepted for the exact computations of the probabilities upon which the implementation depends on NTL \cite{NTL} which itself depends on \cite{MPFR} for its class on exact arithmetic over real numbers. We point out, that for a good implementation, the wall time to execute algorithm \ref{alg_ky_by_lists} is linearly proportional to the expected complexity. The ratio of the wall time by the number of random coins needed solely depends on the machine architecture. Also, for many libraries such as NTL, the accuracy required for exact arithmetic must be determined at the time of compilation. Different libraries may use different arithmetic methods and more accuracy may be required through intermediate or auxiliary computations.

In order to give two simple examples, let us take the case of the binomial and the Zeta-Dirichlet distributions. We recall that if a random variable $X$ has a binomial distribution with parameters $N$ and $p$ as the number of trials and occurrence probability, respectively, then $\mathbf{P}\{X=i\}=\binom{N}{i}p^{i}(1-p)^{N-i}$ for $0\leq x\leq N$. If $X$ is distributed has a Zeta-Dirichlet with concentration parameter $u>0$, then $\mathbf{P}\{X=i\}=\big(C_u{}i(\log(i))^{1+u}\big)^{-1}$ for $i\geq 3$ and $C_u$ is the normalization constant. In the case of the Zeta-Dirichlet, we truncate its support at $i=10000$ for our needs and re-normalize using sufficiently enough accuracy. In table \ref{bino_ky_implementaton_prog_experiment_data} and \ref{trunc_zeta_dir_ky_implementaton_prog_experiment_data}, column titles are abbreviated by ``Par.'', ``Ent.'', ``Emp. est. $\mathbf{E}(T)$'', and ``Ave. time gen.'' and stands for parameters, entropy, empirical estimation of $\mathbf{E}(T)$, and average time generation, respectively. The empirical estimation of the expected number of coin flips, denoted here $\mathbf{E}(T)$, is the average over the sample of the number of calls to our instance of \texttt{RandomBit}. The empirical estimation of $\mathbf{E}(T)$ has to be compared with the theoretical entropy, and must not differs by $2$ bits above the entropy as a bare criteria to check the correctness as implied by theorem \ref{KYthm}. The average time generation is the average over the sample of the wall time to generate random outcomes in the sample; it is given in milliseconds for the sake of completeness and many architectural factors of the hardware for instance influence it.

\begin{centering}
\begin{longtable}{|c|c|c|c|}\caption{Empirical complexities for a sample of size $100000$ of binomial random variables}\label{bino_ky_implementaton_prog_experiment_data}\\ \hline
Par. $(N,p)$ & Ent. & Emp. est. $\mathbf{E}(T)$ & Ave. time gen. (ms)\\ \hline
\endfirsthead
Par. $(N,p)$ & Ent. & Emp. est. $\mathbf{E}(T)$ & Ave. time gen. (ms)\\ \hline
\endhead
\multicolumn{4}{c}{Continued on next page}
\endfoot
\endlastfoot
\hline
$(100,0.005)$ & $1.337262$ & $2.278150$ & $0.00000848600$\\
\hline
$(200,0.005)$ & $1.880768$ & $3.373520$ & $0.00003801600$\\
\hline
$(500,0.5)$ &   $5.529987$ & $6.496250$ & $0.00000122200$\\
\hline
\end{longtable}
\end{centering}

\begin{centering}
\begin{longtable}{|c|c|c|c|}
\caption{Empirical complexities for a sample of size $100000$ of truncated Zeta-Dirichlet random variables}\label{trunc_zeta_dir_ky_implementaton_prog_experiment_data}\\ \hline
Par. $u$ & Ent. & Emp. est. $\mathbf{E}(T)$ & Ave. time gen. (ms)\\ \hline
\endfirsthead
Par. $u$ & Ent. & Emp. est. $\mathbf{E}(T)$ & Ave. time gen. (ms)\\ \hline
\endhead
\multicolumn{4}{c}{Continued on next page}
\endfoot
\endlastfoot
\hline
$1/64$ & $7.921181$ & $8.926670$ & $0.000452516$\\
\hline
$1/4$ & $7.281616$ & $8.501400$ & $0.000468623$\\
\hline
$1$ & $5.354125$ & $6.240620$ & $0.000483760$\\
\hline
\end{longtable}
\end{centering}
The Zeta-Dirichlet has unbounded entropy for $0<u\leq 1$. However once we truncate its infinite support to a finite one, the resulting truncated distribution has bounded entropy. The sampling algorithm is applied on the distribution with truncated support.

\subsubsection{The Fast Roller Dice algorithm}\label{sect_lumbroso_fast_dice}

We judge important to recall a result from Lumbroso's PhD thesis \cite{Lum_PHD_2012} concerning the sampling of uniform distribution. Suppose we want for instance to simulate a dice with six faces so that $\mathbf{p}=(\frac{1}{6},\frac{1}{6},\frac{1}{6},\frac{1}{6},\frac{1}{6},\frac{1}{6})$ given that we have an access to an instance of \texttt{RandomBit}. We observe that the binary expansion of $\frac{1}{6}$ is $0.0\overline{01}$ where $\overline{01}$ means $01$ is repeated ad infinitum. The amount of randomness in $\mathbf{p}$ is $H(\mathbf{p})=\log_{2}(6)$ where the latter is the binary entropy of $\mathbf{p}$. Thus for an optimal algorithm, we expect between $\log_{2}(6)$ and $\log_{2}(6)+2$ calls to \texttt{RandomBit} and, from an information theoretical point of view, we cannot do better. Figure \ref{fig:dice} shows the tree with an infinite countable number of lists for the simulation of the dice where the loops must be seen as infinite repetitions of the corresponding subtrees. Actually there is only one kind repeated subtree on figure \ref{fig:dice} which is for the discrete uniform distribution over three elements since $6=2\cdot 3$.

\begin{figure}
\begin{centering}
\includegraphics{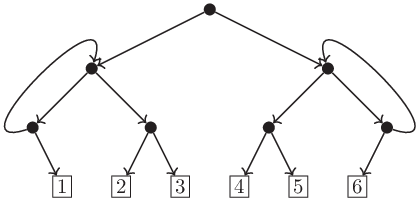}
\caption{\label{fig:dice}A fair dice}
\end{centering}
\end{figure}

With the help figure \ref{fig:dice}, we find $t_j$, $|L_j|$ and $s_j$. We have that $t_1=2$, $|L_1|=0$, and $s_1=2$. For $j\geq 2$, if $j-2\equiv 0\pmod{2}$ then $t_j=4$, $|L_j|=0$, and $s_j=4$. For $j\geq 2$, if $j-2\equiv 1\pmod{2}$ then $t_j=8$, $|L_j|=6$, and $s_j=2$. The Fast Dice Roller from \cite{Lum_PHD_2012} is an efficient implementation of Knuth and Yao ideas for the discrete uniform distribution over $n$ points which is almost identical to algorithm \ref{alg_lum_fast_dice} and exploits the regularity of the quantities $t_j$, $s_j$ and $|L_j|$.

\begin{breakablealgorithm}\label{alg_lum_fast_dice}
\caption{Fast Dice Roller (Lumbroso, 2012)}
\begin{algorithmic}[1]
\raggedright
\Require Integer $n>1$,
\Ensure $X$
\State{$X\leftarrow 0$}
\State{$Y\leftarrow 1$}
\Loop
\State{$Y\leftarrow 2Y$}
\State{$B\leftarrow \texttt{RandomBit}$}
\State{$X\leftarrow 2X+B$}
\If{$Y\geq n$}
\If{$X<n$}
\State{\textbf{Return} $X$}
\Else
\State{$Y\leftarrow Y-n$}\label{Lumb:line:Y:recycle}
\State{$X\leftarrow X-n$}\label{Lumb:line:X:recycle}
\EndIf
\EndIf
\EndLoop
\end{algorithmic}
\end{breakablealgorithm}

We observe that $X$, in the ``loop'' of the Fast Dice Roller, is uniformly distributed. Instructions from lines \ref{Lumb:line:Y:recycle} and \ref{Lumb:line:X:recycle} are executed if and only if $X\geq n$ upon which $X$ is uniformly distributed on $\{n,\ldots,Y-1\}$. Moreover, given that $X\geq n$, the set $\{n,\ldots,Y-1\}\neq \emptyset$ since $Y>X\geq n$ and $\{n,\ldots,Y-1\}$ is translated by $n$ which allows random bits to be ``recycled''.
\begin{theorem}[Lumbroso (2012)]
For all $\alpha>0$, the expected number of calls to \texttt{RandomBit} for the Fast Dice Roller is
\begin{displaymath}
\log_{2}(n)+\frac{1}{2}+\frac{1}{\log 2}-\frac{\gamma}{\log 2}+P(\log_{2}(n))+O(n^{-\alpha})\label{lumbroso_comp},
\end{displaymath}
where $P$ is a trigonometric periodic polynomial and $\gamma$ is the Euler constant.
\end{theorem}

\subsection{DDG tree algorithm and inversion}\label{sect_HH}%Han+Hoshi

We recall the algorithm from Han and Hoshi \cite{HanHos_1997} which is the inversion method for discrete distributions. Given $I\subset\mathbb{N}$, and a probability mass vector $\mathbf{p}=(p_i)_{i\in I}$, the algorithm partitions the interval $[0,1]$ into a countable collection of disjoint subintervals $[q_{i-1},q_{i})$ with $q_{0} = 0$ and $q_{i} = \sum_{k=1}^{i}{p_{k}}$ for $i\in I$. The algorithm refines iteratively a random interval by halving a subset $J\subset [0,1)$ and stops when $J\subset[q_{i-1},q_{i})$ for $i>0$. When $J$ is just small enough such that $J\subset [q_{i-1},q_{i})$, then the outcome $i$ is output. By the probability integral transform, if $U$ is a uniformly distributed random variable on $[0,1]$, then there is unique $i\in I$ such that $q_{i-1}\leq U<q_{i}$. For a binary random source of unbiased i.i.d.\ bits, their algorithm is as follow:

\begin{breakablealgorithm}
\caption{Algorithm from Han and Hoshi}\label{algoHH_algo}
\begin{algorithmic}[1]
\State $T\leftarrow 0$
\State $\alpha_{T}\leftarrow 0$
\State $\beta_{T}\leftarrow 1$
\Repeat
\State $T\leftarrow T+1$
\State $B\leftarrow\texttt{RandomBit}$
\State $\alpha_{T}\leftarrow\alpha_{T-1}+(\beta_{T-1}-\alpha_{T-1})(B\slash2)$
\State $\beta_{T}\leftarrow\alpha_{T-1}+(\beta_{T-1}-\alpha_{T-1})((B+1)\slash2)$
\State $J\leftarrow [\alpha_{T},\beta_{T})$
\Until{$J\subset [q_{i-1},q_{i})$}
\State \textbf{Return} $i$.
\end{algorithmic}
\end{breakablealgorithm}

Let $T$ be the number of random coins needed by \texttt{RandomBit} which is also the number of iterations for the ``repeat'' loop. For $T\geq 1$, $\big[\alpha_{T},\beta_{T}\big)\supset \big[\alpha_{T+1},\beta_{T+1}\big)$. To every node (internal or external) corresponds an interval $\big[\alpha_{T},\beta_{T}\big)$. The root corresponding to the interval $[0,1)$. For each internal node corresponds an interval $\big[\alpha_{T},\beta_{T}\big)$ that is not contained in one of the interval $\big[q_{i-1},q_{i}\big)$, and, if the source produces $B=0$, then the left child corresponds to the interval $\big[\alpha_{T},(\alpha_{T}+\beta_{T})\slash 2)\big)=\big[\alpha_{T+1},\beta_{T+1}\big)$ and, if $B=1$, then the right child corresponds to $\big[(\alpha_{T}+\beta_{T})\slash 2),\beta_{T}\big)=\big[\alpha_{T+1},\beta_{T+1}\big)$. Each leaf (external node) corresponds to an interval $\big[\alpha_{T},\beta_{T}\big)$ entirely contained in $\big[q_{i-1},q_{i}\big)$ upon which the integer $i$ is returned with probability $p_{i}$.

Figures \ref{exec_fig_hanhoshi1} and \ref{exec_fig_hanhoshi2} that are examples of DDG trees for the Han and Hoshi algorithm on some distributions. We observe from figures \ref{exec_fig_hanhoshi1} and \ref{exec_fig_hanhoshi2} that an outcome may appear twice on a level; we can show that it cannot appear also more than twice. In comparison to Knuth and Yao algorithm, an outcome cannot appear more than once on any given levels.

\begin{figure}
\begin{centering}
\includegraphics[scale=0.875]{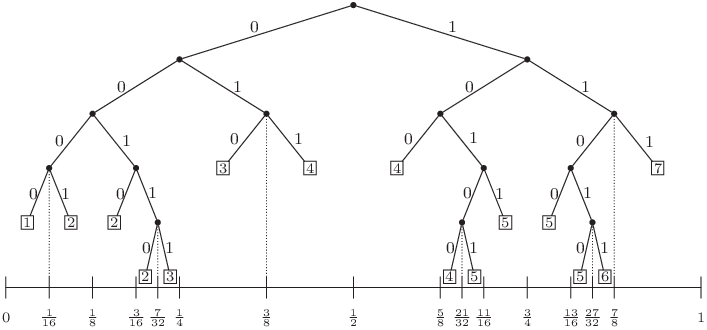}
\caption{Illustration of the algorithm of Han and Hoshi on the vector $(p_1,p_2,p_3,p_4,p_5,p_6,p_7)=\big(\frac{1}{16},\frac{5}{32},\frac{5}{32},\frac{9}{32},\frac{3}{16},\frac{1}{32},\frac{1}{8}\big)$. The cumulative values are $q_1=\frac{2}{32}=(0.00010)_2$, $q_2=\frac{7}{32}=(0.00111)_2$, $q_3=\frac{12}{32}=(0.01100)_2$, $q_4=\frac{21}{32}=(0.10101)_2$, $q_5=\frac{27}{32}=(0.11011)_2$, $q_6=\frac{28}{32}=(0.11100)_2$, and $q_7=\frac{32}{32}=(1.00000)_2$.}\label{exec_fig_hanhoshi1}
\end{centering}
\end{figure}

\begin{figure}
\begin{centering}
\includegraphics[scale=0.875]{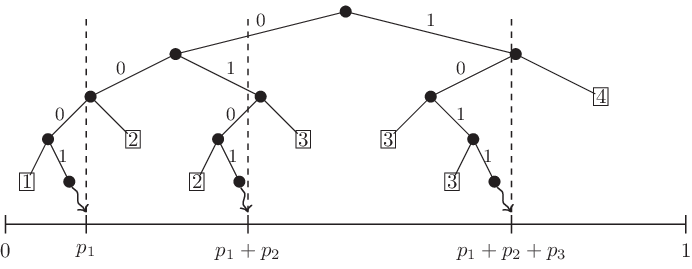}
\caption{Illustration of the Han and Hoshi algorithm on the vector $(p_1,p_2,p_3,p_4)$ such that $p_1=0.0001\cdots$, $p_1+p_2=0.0101\cdots$, and $p_1+p_2+p_3=0.1011\cdots$.}\label{exec_fig_hanhoshi2}
\end{centering}
\end{figure}

\begin{theorem}[Han and Hoshi \cite{HanHos_1997}]\label{HHthm}
The expected number of calls to \texttt{RandomBit} in algorithm \ref{algoHH_algo} is bounded from below by $H(\mathbf{p})$ and from above by $H(\mathbf{p})+3$.
\end{theorem}

\begin{proof}[Proof of theorem \ref{HHthm}]
Our new proof partitions the leaves $L_i$ for symbol $i$ in the DDG tree arbitrarily into two sets, $A_i$ and $B_i$, such that $A_i$ and $B_i$ each possesses at most one leaf at depth $i$. Let $\alpha_i=\sum_{u\in A_i}{2^{-\texttt{d}(u)}}$, $\beta_i=\sum_{u\in B_i}{2^{-\texttt{d}(u)}}$ where $\texttt{d}(u)$ the depth of leaf $u$, so that $p_i=\alpha_i+\beta_i$. By using elementary calculations and facts, we have
\begin{align*}
&\sum_{i=1}^{\infty}{p_i\log_{2}\bigg(\frac{1}{p_i}\bigg)}\leq \sum_{i=1}^{\infty}{\alpha_i\log_{2}\bigg(\frac{1}{\alpha_i}\bigg)}+\sum_{i=1}^{\infty}{\beta_i\log_{2}\bigg(\frac{1}{\beta_i}\bigg)}\leq \sum_{i=1}^{\infty}{p_i\log_{2}\bigg(\frac{1}{p_i}\bigg)}+1.
\end{align*}
Let $(\alpha_{i})_{j}$ be the $j$-th bit in the binary expansion of $\alpha_i$, and let $(\beta_{i})_{j}$ be the $j$-th bit for $\beta_i$. Then we have
\begin{displaymath}
\mathbf{E}(T)=\sum_{j=1}^{\infty}{\frac{j(\alpha_{i})_{j}}{2^{j}}}+\sum_{j=1}^{\infty}{\frac{j(\beta_{i})_{j}}{2^{j}}}= \textrm{I}+\textrm{II}.
\end{displaymath}
As in the proof of theorem \ref{KYthm}, we have
\begin{align*}
\sum_{i=1}^{\infty}{\alpha_i\log_{2}\bigg(\frac{1}{\alpha_i}\bigg)}&\leq \textrm{I}\leq \sum_{i=1}^{\infty}{\alpha_i\log_{2}\bigg(\frac{1}{\alpha_i}\bigg)}+2\alpha_i,\\
\sum_{i=1}^{\infty}{\beta_i\log_{2}\bigg(\frac{1}{\beta_i}\bigg)}&\leq \textrm{II} \leq \sum_{i=1}^{\infty}{\beta_i\log_{2}\bigg(\frac{1}{\beta_i}\bigg)}+2\beta_i,
\end{align*}
and hence, using the above,
\begin{displaymath}
\sum_{i=1}^{\infty}{p_i\log_{2}\bigg(\frac{1}{p_i}\bigg)}\leq \mathbf{E}(T)\leq\sum_{i=1}^{\infty}{p_i\log_{2}\bigg(\frac{1}{p_i}\bigg)}+1+2\sum_{i=1}^{\infty}{p_i}\leq H(\mathbf{p})+3.
\end{displaymath}
\end{proof}

\subsection{Sequential generation of random variables}\label{sect_batch_gen}

In this section, we explore the generation of a sample of i.i.d.\ random variables distributed according to some discrete probability distribution. The term batch generation is sometimes used in the literature as well. A batch is a sample of i.i.d.\ random variables generated sequentially using a sampling method. Before studying batch generation with optimal asymptotic complexity in section \ref{sect_assymptotic_batch}, and its core component which is randomness extraction in section \ref{sect_rnd_ext}, we explain key concepts for general DDG-tree based algorithms in section \ref{sect_generic_DDG}.

The main goal of this section is to show that slight modifications of any DDG-tree algorithms, used within a batch generation algorithm, allows to reach the optimal expected complexity. More precisely, suppose $\mathbf{p}$ is a probability vector for some random variable $Y$ and that a sample of size $n$ instances of $Y$ is generated sequentially; then the expected number of unbiased random bits needed to generate the batch is tightly concentrated around $nH(Y)$ for sufficiently large values of $n$. Here $H(Y)$ denotes the binary entropy of the distribution of $Y$. We know from the previous sections that upon the generation of a single instance of $Y$ using for example one of the DDG-tree based algorithms, the expected complexity of the method lies between $H(Y)$ and $H(Y)+O(1)$.

Two modifications are to be done at a DDG-tree algorithm to allow asymptotic optimality in batch generation. The first consists to return the depth of a leave upon halting together with the label (outcome) of that leave. By returning a pair (depth, label), we can extract random bits sequentially so that as $n$ gets larger and larger the interval for the expected complexity shrinks around $nH(Y)$. The second modification is the use of $\mathtt{FetchBit}$ as an algorithm to either retrieve previously recycled bits from a queue or call $\mathtt{RandomBit}$ when the queue is empty.

For clarity, we split this section \ref{sect_batch_gen} into three sub-sections. We decided to insert section \ref{sect_generic_DDG} on generic DDG-tree based algorithm within section \ref{sect_batch_gen} on batch generation to keep a logical flow logical, but it could deserve an entire section.

\subsubsection{Generic DDG-tree based algorithms}\label{sect_generic_DDG}

Suppose we aim to generate an outcome of a random variable $Y$ with probability vector $\mathbf{p}=(p_i)_{i\in I}$ for $I\subset\mathbb{Z}$. We explain here what a generic DDG-tree algorithm is and why Knuth and Yao or Han and Hoshi algorithms are special nearly optimal algorithms. Let $\texttt{A}$ be a DDG-tree algorithm to sample $\mathbf{p}$. We write the binary expansions of the $p_{i}$'s as
\begin{equation}
\mathbf{P}\{Y=i\}=p_{i}=\sum_{j>0}{\frac{p_{ij}}{2^{j}}}\quad\text{for $p_{ij}\in\{0,1\}$}.\label{eq_dist_target}
\end{equation}

Let $\Lambda$ be the set of leaves of the DDG-tree. We write $\texttt{label}(u)$ to denote the label of a leaf $u\in \Lambda$. Labels are instances of $Y$ which are distributed according to $\mathbf{p}$. We write $\texttt{depth}(u)$ for the depth of a leaf $u\in \Lambda$. A useful variant of the traditional DDG-tree based method is one which returns a random pair $(X,Y)=(\texttt{depth}(u),\texttt{label}(u))$.

We now discuss some facts of that variant, and, for that, it is convenient to define the matrix $\Theta$ with integer entries denoted by $\Theta_{ij}\geq 0$ for $i\in I$ and $j>0$, by
\begin{displaymath}
\Theta_{ij}=\text{card}\big\{u\in\Lambda\colon\texttt{label}(u)=i,\hspace{2pt}\texttt{depth}(u)=j\big\}.
\end{displaymath}
The entry $\Theta_{ij}$ is the number of leaves at depth $j$ with outcomes labelled $i$. Necessarily we have also that $\sum_{i\in I}{\Theta_{ij}} \leq 2^{j}$ for all $j>0$, that is, a sum over rows for a fixed column is bounded. A sum over columns for a fixed row is usually unbounded since its corresponding $p_i$ is an irrational number most often. We observe for a given $i\in I$ that $\max\{\Theta_{ij}\colon j>0\}\geq 1$ because $p_{i}>0$. If the latter maximum is strictly less than one that is equal to zero and since all entries on a given row are less than the maximum, then all entries are zeros for this row which means that the algorithm would never outputs the symbol associated with this given row. Therefore let us define the soon-useful quantities $\kappa_{i}$ by
\begin{align*}
\kappa_{i}&=\max\{\Theta_{ij}\colon j>0\}\quad\text{for $i\in I$, and,}\\
\kappa&=\max\{\Theta_{ij}\colon i\in I,j>0\}=\max\{\kappa_{i}\colon i\in I\}.
\end{align*}
The quantity $\kappa$ is necessarily bounded above otherwise there would be a non-zero probability to do not halt.

For every row of $\Theta$ or equivalently for every outcome from $\mathbf{p}$, we associate a set of leaves in order to perform conditional analysis. More precisely, for $i\in I$, let
\begin{displaymath}
\Lambda_{i}=\{u\in\Lambda\colon\texttt{label}(u)=i\}.
\end{displaymath}
We observe that if a DDG algorithm is correct, then it holds that
\begin{displaymath}
p_{i}=\sum_{u\in\Lambda_{i}}{\frac{1}{2^{\texttt{d}(u)}}}=\sum_{j>0}{\frac{\Theta_{ij}}{2^{j}}}.
\end{displaymath}

\begin{remark}[The case of Knuth and Yao]
Knuth and Yao's algorithm is designed such that $\Theta_{ij}=p_{ij}\leq 1$ for all $i\in I$ and $j>0$. Thus we have $\kappa=1$. For examples, refer to figures \ref{figDDG} and \ref{fig:dice} from section \ref{sect_KY_algo}.
\end{remark}

\begin{remark}[The case of Han and Hoshi]
Han and Hoshi's algorithm is designed such that $\Theta_{ij}\leq 2$ for all $i\in I$ and $j>0$. Thus we have $\kappa=2$. For examples, refer to figures \ref{exec_fig_hanhoshi1} and \ref{exec_fig_hanhoshi2} from section \ref{sect_HH}.
\end{remark}

For each $i \in I$, we denote $J_{i}=\{j>0\colon\Theta_{ij}\neq 0\}$. A useful quantity for later is the binary joint entropy of $X$ and $Y$ given by
\begin{align}
H(X,Y)&=\sum_{i\in I}\sum_{j>0}{\frac{\Theta_{ij}}{2^{j}}\log_{2}\bigg(\frac{2^{j}}{\Theta_{ij}}\bigg)}=\sum_{i\in I}\sum_{j\in J_{i}}{\frac{\Theta_{ij}}{2^{j}}\log_{2}\bigg(\frac{2^{j}}{\Theta_{ij}}\bigg)}\label{eq_joint_ent_var}\\
&\quad\leq \sum_{i\in I}\sum_{j\in J_{i}}{\frac{j\Theta_{ij}}{2^{j}}}=\sum_{i\in I}\sum_{u\in \Lambda_{i}}{\frac{\texttt{d}(u)+1+\delta_{i}}{2^{\texttt{d}(u)}}}=H(Y)+\sum_{i\in I}{p_{i}(1+\delta_{i})}\nonumber\\
&\quad\leq H(Y)+1+\sup\{i\in I\colon\delta_{i}\}.\label{eq_joint_ent0}
\end{align}
For instance in the case of the Knuth and Yao's algorithm, (\ref{eqn:expcomp_foo}) from section \ref{sect_KY_algo} shows that $\sup\{i\in I\colon\delta_{i}\}\leq 1$ in (\ref{eq_joint_ent0}). For Han and Hoshi's algorithm, we can show that $\delta_{i}\leq 2$. A generic DDG-tree based algorithm must have $\delta_{i}\leq\kappa$ for all $i\in I$.

\begin{remark}
In some cases, it could be possible to get the fluctuation term from (\ref{eq_joint_ent_var}) by using complex analysis methods. Here we simply used the fact $1\leq \Theta_{ij}$ for $j\in J_{i}$ and hence $-\log_{2}(\Theta_{ij})\leq 0$.
\end{remark}

To prepare us for sections \ref{sect_rnd_ext} and \ref{sect_assymptotic_batch}, we now observe that, conditional on the event $\{Y=i\}$, the amount of randomness left in the sampling process is given by
\begin{align}
&H(X\mid Y)=\sum_{i\in I}{p_{i}H(X\mid Y=i)}=H(X,Y)-H(Y)\leq 1+ \sup\{i\in I\colon\delta_{i}\}\label{eq_cond_poutine}.
\end{align}
If a sample of independent pairs $\{(X_i,Y_i)\}_{i=1}^{n}$ is generated sequentially, therefore we could at most extract $1+\sup\{i\in I\colon\delta_{i}\}$ bits every time we generate a pair and use the extracted bits for the next pair. The extraction of randomness allows us to reach asymptotic optimal complexity in probability for any DDG-tree based algorithms. We emphasize that the coupling $(X,Y)$ is correlated. The marginal variable $X$ is the number of random bits required to generate $Y$. Conditional on the event $\{Y=i\}$, we have that
\begin{align}
X&=\log_{2}\bigg(\frac{1}{p_{i}}\bigg)+\sum_{j>0}{\mathbf{P}\{X=j\mid Y=i\}\log_{2}\bigg(\frac{1}{\mathbf{P}\{X=j\mid Y=i\}}\bigg)}\nonumber\\
&=\log_{2}\bigg(\frac{1}{p_{i}}\bigg)+H(X\mid Y=i),\nonumber\\
\mathbf{E}(X)&=\mathbf{E}_{Y}\big(\mathbf{E}(X\mid Y)\big)=\sum_{i\in I}{p_{i}\bigg(\log_{2}\bigg(\frac{1}{p_{i}}\bigg)+H(X\mid Y=i)\bigg)}\\
&=H(Y)+H(X\mid Y)\label{average_comp_1out}.
\end{align}

We make before ending this section an observation about the rate at which $\mathbf{P}\{X=x,Y=y\}$ decays when $y$ is kept fixed and $x$ increased. Indeed, we have
\begin{equation}
\mathbf{P}\{X=x,Y=y\}\leq \frac{1}{2}\mathbf{P}\{X=x-1,Y=y\}\quad\text{for all $x\geq 0$ and $y\in I$.}\label{eq_crapeau_qui_saute}
\end{equation}
Inequality (\ref{eq_crapeau_qui_saute}) means that the probability of halting decreasing geometrically for every new call to \texttt{RandomBit} and for any $y\in I$. Equivalently, inequality (\ref{eq_crapeau_qui_saute}) can be stated as
\begin{equation}
\mathbf{P}\{X=x\mid Y=y\}\leq \frac{1}{2}\mathbf{P}\{X=x-1\mid Y=y\}\quad\text{for all $x\geq 0$ and $y\in I$.}\label{eq_crapeau_qui_saute2}
\end{equation}

\subsubsection{Randomness extraction}\label{sect_rnd_ext}

In this section we discuss randomness extraction which can be seen to some extents as the inverse of random variate generation. Randomness extraction turns a sequence of i.i.d. random variables into a sequence of i.i.d. random unbiased Bernoulli random variables.

For intuitional purposes, we mention ahead of time the main idea to keep in mind for this section. Suppose a DDG-tree algorithm returns a random pair $(X,Y)$ as explained in section \ref{sect_generic_DDG} where $X$ is the depth of a leaf with label $Y$ when halting. Given a sample of size $n$ pairs, we can wonder what is the conditional likelihood of the depths given the labels. The conditional event $\{X\mid Y=i\}$ pertains to the number $X$ of consumed bits given label $Y=i$. We show here that the conditional likelihood for large enough $n$ is close to the conditional entropy $H(X\mid Y)$. If we denote the outcomes $(x_{\ell},y_{\ell})$ for $1\leq \ell\leq n$, the conditional likelihood $\prod_{\ell=1}^{n}{\mathbf{P}\{X=x_{\ell}\hspace{2pt}|\hspace{2pt}Y=y_{\ell}\}}$ suitably normalized is highly concentrated around $H(X\mid Y)$ as in (\ref{eq_cond_poutine}) for large $n$ by using the weak law of large numbers as explained for instance in Kullback \cite{Kul_book_1997ed}.

\begin{theorem}\label{quibivilortrinuloxifene}
With the notation of before, there exits an algorithm (described below) such that with inputs $(X_1,Y_1),(X_2,Y_2)\ldots,(X_n,Y_n)$ outputs a sequence of i.i.d.\ unbiased bits of random length $R_n$ where
\begin{displaymath}
\frac{R_n}{n}\stackrel{\mathrm{p}}{=}H(Y\mid X)\text{ as }n\rightarrow \infty.
\end{displaymath}
\end{theorem}

The next algorithm maps a uniform random variable $U=0.U_1U_2\cdots$ to a random sequence $(X_1,Y_1),\ldots{},(X_n,Y_n)$.\ The bits $U_i$'s are unbiased i.i.d.\ and independent of $X_1,X_2,\ldots,X_{n}$.
\begin{breakablealgorithm}\label{algo_extractx}
\caption{Randomness extraction adapted to DDG-tree based algorithms}
\begin{algorithmic}[1]
\raggedright
\Require A sequence of pairs $(x_1,y_1)$, \ldots{}, $(x_n,y_n)$ as previously described.
\Ensure A sequence of unbiased i.i.d. bits.
\State $U_{0}^{-}\leftarrow 0$
\State $U_{0}^{+}\leftarrow 1$
\For{$\ell = 1$ \textbf{to} $n$}
\State $U_{\ell}^{-}\leftarrow U_{\ell-1}^{-}+\big(U_{\ell-1}^{+}-U_{\ell-1}^{-}\big)\mathbf{P}\{X=x_{\ell}-1\mid Y=y_{\ell}\}$
\State $U_{\ell}^{+}\leftarrow U_{\ell-1}^{-}+\big(U_{\ell-1}^{+}-U_{\ell-1}^{-}\big)\mathbf{P}\{X=x_{\ell}\mid Y=y_{\ell}\}$
\EndFor
\State $R_{n}\leftarrow \max\big\{t\geq 0\phantom{1}:\phantom{1}\lfloor 2^{t}U_{n}^{-}\rfloor = \lfloor 2^{t}U_{n}^{+}\rfloor\big\}$ \Comment{$R_n$ is the number of bits of the longest prefix common to both $U_{n}^{-}$ and $U_{n}^{+}$.}
\State{\textbf{Return} $\lfloor 2^{R_n}U_{n}^{-}\rfloor=U_{1}^{-1}\cdots U_{R_{n}}^{-1}$}
\end{algorithmic}
\end{breakablealgorithm}

For the correctness of algorithm \ref{algo_extractx}, we observe that the intervals $[U_{\ell}^{-},U_{\ell}^{+}]$ are nested by inequality (\ref{eq_crapeau_qui_saute}) from section \ref{sect_generic_DDG}, that is
$[U_{\ell}^{-},U_{\ell}^{+}] \supseteq [U_{\ell+1}^{-},U_{\ell+1}^{+}]$ for $\ell>0$. By the nested property, $U=\limsup_{n\to\infty}{U_{n}^{-}} = \liminf_{n\to\infty}{U_{n}^{+}}$, and we have that
\begin{displaymath}
U\in [U_{n}^{-},U_{n}^{+}]=\bigcap_{\ell=1}^{n}{[U_{\ell}^{-},U_{\ell}^{+}]}.
\end{displaymath}
For every $\ell$-th iteration, we have $U(U_{\ell}^{+}-U_{\ell}^{-})+U_{\ell}^{-}$ is uniformly distributed on the interval $[U_{j}^{-},U_{j}^{+}]$. Because $R_n=\max\big\{t\geq 0\phantom{1}:\phantom{1}\lfloor 2^{t}U_{n}^{-}\rfloor = \lfloor 2^{t}U_{n}^{+}\rfloor\big\}$, then it holds that
\begin{displaymath}
\frac{1}{2^{R_n}}\lfloor 2^{R_n}U\rfloor\leq U_{n}^{-}\leq U\leq U_{n}^{+}\leq \frac{1}{2^{R_n}}\Big(\lfloor2^{R_n}U\rfloor+1\Big),
\end{displaymath}
and thus the bits $U_{1},U_2\ldots,U_{R_n}$ are unbiased i.i.d.

In order to keep the notation compact, we write $\mu_{ij}=\mathbf{P}\{X=j,Y=i\}$, $\mu_{i}=p_i=\mathbf{P}\{Y=i\}$, and $\mu_{j}=\mathbf{P}\{X=j\}$; it is clear to which of the marginal distributions or the joint we refer to in the next proof.

\begin{proof}[Proof of theorem \ref{quibivilortrinuloxifene}]
Let $t\in\mathbb{R}$ and consider the two cases $\{R_n\geq t\}$ and $\{R_n<t\}$. We show that $t$ is concentrated around $n\sum_{i\in I}{\mu_iH(X\mid Y=i)}$. More precisely, we show for all $\epsilon>0$ that
\begin{displaymath}
\lim_{n\to\infty}\mathbf{P}\bigg\{\bigg|\frac{R_n}{n}-\mathbf{E}\bigg(-\log_{2}\frac{\mu_{XY}}{\mu_{Y}}\bigg)\bigg|<\epsilon\bigg\}=0.
\end{displaymath}

For all $t>0$, we have that $\big\{R_n\geq t\big\}\subseteq \big\{U_{n}^{+}-U_{n}^{-}<2^{-t}\big\}$ and that
\begin{equation}
\Big\{U_{n}^{+}-U_{n}^{-}<\frac{1}{2^{t}}\Big\}=\bigg\{\prod_{\ell=1}^{n}{\frac{\mu_{X_{\ell}Y_{\ell}}}{\mu_{Y_{\ell}}}<\frac{1}{2^{t}}}\bigg\}=\bigg\{-\sum_{\ell=1}^{n}\log_{2}\bigg(\frac{\mu_{X_{\ell}Y_{\ell}}}{\mu_{Y_{\ell}}}\bigg)>t\bigg\}\label{mimayoxilugenufene}
\end{equation}
To apply the weak law of large numbers on (\ref{mimayoxilugenufene}), choose an $\epsilon>0$ and set
\begin{displaymath}
t=n\bigg(\mathbf{E}\bigg(-\log_{2}\frac{\mu_{XY}}{\mu_{X}}\bigg)+\epsilon\bigg)=n\bigg(\sum_{i\in I}{p_{i}H(X\mid Y=i)}+\epsilon\bigg)=n(H(X\mid Y)+\epsilon).
\end{displaymath}
Then we obtain
\begin{align*}
&\lim_{n\to\infty}\mathbf{P}\bigg\{\frac{R_n}{n}\geq \mathbf{E}\bigg(-\log_{2}\frac{\mu_{XY}}{\mu_{Y}}\bigg) + \epsilon\bigg\}\\
&\quad\leq\lim_{n\to\infty}\mathbf{P}\bigg\{\bigg(-\frac{1}{n}\sum_{\ell=1}^{n}\log_{2}\bigg(\frac{\mu_{X_{\ell}Y_{\ell}}}{\mu_{Y_{\ell}}}\bigg)\bigg)-\mathbf{E}\bigg(-\log_{2}\frac{\mu_{XY}}{\mu_{Y}}\bigg)>\epsilon\bigg\}=0.
\end{align*}
For the event $\{R_n<t\}$, we have that
\begin{align*}
\big\{R_n< t\big\}&\subseteq \big\{\lfloor 2^{t}U_{n}^{+}\rfloor > \lfloor 2^{t}U_{n}^{-}\rfloor \big\}\subseteq\Big\{U_{n}^{+}-U_{n}^{-}\geq \frac{1}{2^{t}}\Big\}.
\end{align*}
By a same argument as for the event $\{R_n>t\}$, we choose $\epsilon>0$, set $t=n(H(X\mid Y)-\epsilon)$, and evaluate limits to obtain that
\begin{align*}
\lim_{n\to\infty}\mathbf{P}\bigg\{\frac{R_n}{n}\leq \mathbf{E}\bigg(-\log_{2}\frac{\mu_{XY}}{\mu_{Y}}\bigg) - \epsilon\bigg\}=0.
\end{align*}
\end{proof}

Now that we have showed theorem \ref{quibivilortrinuloxifene}, the next section explains how to recycle bits sequentially.

\subsubsection{Asymptotic complexity for batch generation}\label{sect_assymptotic_batch}

Our method of batch generation is valid for any DDG-tree based algorithm such that $H(Y)<\infty$. Given a global queue data structure, denoted by $Q$, that contains random bits, our algorithm fetches bits from a non-empty $Q$ through an operation that we denote \texttt{FetchBit}. If $Q$ is empty, then \texttt{FetchBit} invokes \texttt{RandomBit}. The operation \texttt{FetchBit} drives the DDG-tree algorithm and that the analysis following algorithm \ref{alg_batch} is mostly about the expected size of $Q$.

\begin{breakablealgorithm}
\caption{Batch generation}\label{alg_batch}
\begin{algorithmic}[1]
\Require $Q$, $\mathbf{p}$, and $n$
\Ensure Batch $(Y_1,\ldots,Y_n)$
\State{$Q\leftarrow \emptyset$ \Comment{Initially, the queue is empty.}}
\State{$R_{0}\leftarrow 0$ \Comment{There is no ``recycled'' bit initially.}}
\For{$i=1$ \textbf{to} $n$}
\State Obtain $(X_i,Y_i)$ by a DDG-tree algorithm. \Comment{The DDG algorithm uses the operation \texttt{FetchBit} to get bits either from the source or from the queue $Q$.}
\State{\textbf{Return} $Y_i$}
\State{Feed $(X_i,Y_i)$ to the retrieval algorithm (randomness extraction procedure), and recover $R_i-R_{i-1}$ bits which are added to $Q$.}
\EndFor
\end{algorithmic}
\end{breakablealgorithm}

\begin{theorem}\label{thm_batch_algo}
Whenever $H(Y)$ is bounded, algorithm \ref{alg_batch} uses $T_n$ random bits where
\begin{displaymath}
\frac{T_{n}}{n}\stackrel{\textrm{p}}{\rightarrow} H(Y)\quad\text{as $n\rightarrow \infty$.}
\end{displaymath}
\end{theorem}

\begin{remark}
By Shannon's lower bound, the procedure is asymptotically optimal for we have $\mathbf{E}(T_n)\geq nH(Y)$.
\end{remark}

\begin{proof}[Proof of theorem \ref{thm_batch_algo}]
We choose a large integer constant $k$ and look at $T_{nk}$. Let $Q_{t}$ be the size of the queue at time $t$, and set $Q_{0}=0$. For $j\in\{1,\ldots, nk\}$, let $X_{j}$ be the number of bits needed to generate $Y_{j}$ without extraction. By (\ref{average_comp_1out}), we have that $\mathbf{E}(X_{j})=H(Y)+H(X\mid Y)$. The random variables $X_j$ are i.i.d.\ Then we have the following simple identity:
\begin{displaymath}
T_{nk}=\sum_{j=1}^{nk}{X_{j}}-R_{nk}+Q_{nk}.
\end{displaymath}
Given $k>0$, we have by the law of large numbers and by theorem \ref{quibivilortrinuloxifene} from section \ref{sect_rnd_ext}, we have that
\begin{displaymath}
\frac{X_1+X_2+\ldots+X_{nk}}{nk}\stackrel{\mathrm{p}}{\rightarrow}\mathbf{E}(X)=H(Y)+H(X\mid Y)\text{ as $n\rightarrow \infty$.}
\end{displaymath}
Thus it follows for some $\delta_{nk}$ that
\begin{align*}
\frac{T_{nk}}{nk}&=H(Y)+H(X\mid Y)-H(X\mid Y)+\delta_{nk}+\frac{Q_{nk}}{nk}\\
&=H(Y)+\delta_{nk}+\frac{Q_{nk}}{nk}\quad\text{with $\delta_{nk}\stackrel{\mathrm{p}}{\rightarrow} 0$ as $n\to\infty$.}
\end{align*}
If $\big(Q_{nk}\slash nk\big)\stackrel{\textrm{p}}{\rightarrow}0$ as $n\rightarrow \infty$, then the result follows. For this, we need only to consider an upper bound, since $Q_{nk}\geq 0$, and then
\begin{align*}
Q_{nj}&\leq Q_{nj-1}+\big(R_{nj}-R_{n(j-1)}\big)-\min_{1\leq j\leq k}\{T_{nj}-T_{n(j-1)}, Q_{n(j-1)}\}.
\end{align*}
Since $\big(R_n\slash n\big)\stackrel{\textrm{p}}{\rightarrow} H(X\mid Y)$, we have
\begin{align}
\max_{1\leq j\leq k}&\bigg|\frac{R_{nj}-R_{n(j-1)}}{n}-H(X\mid Y)\bigg|\stackrel{\textrm{p}}{\rightarrow}0\label{grumblegrumble1}
\end{align}
and
\begin{align}
\max_{1\leq j\leq k}&\bigg|\frac{T_{nj}-T_{n(j-1)}}{n}-H(Y)\bigg|\stackrel{\textrm{p}}{\rightarrow}0.\label{grumblegrumble2}
\end{align}
Fix $\epsilon>0$, and let $A$ be the event that both left-hand sides in (\ref{grumblegrumble1}) and (\ref{grumblegrumble2}), respectively, are less than $\epsilon$, so that $\mathbf{P}\{A^{c}\}=o(1)$. The critical observation is that on $A$,
\begin{align*}
Q_{nj}&\leq \left\{
\begin{array}{ll}
Q_{n(j-1)} + \big(H(X\mid Y)-H(Y)\big)n\quad\text{if  $Q_{n(j-1)} \geq \big(H(Y)-\epsilon\big)n$,}\\
Q_{n(j-1)}+  \big(H(X\mid Y)+\epsilon\big)n- Q_{n(j-1)}\quad\text{else.}
\end{array} \right.\\
&\leq  \left.
\begin{array}{ll}
\max\big\{Q_{n(j-1)},\hspace{2pt}\big(H(X\mid Y)+\epsilon\big)\big\}\quad\text{if $2\epsilon\leq H(Y)$},
\end{array}\right.
\end{align*}
and therefore
\begin{displaymath}
\max_{1\leq j\leq k}\{Q_{nj}\}\leq \big(H(X\mid Y)+\epsilon)n\quad\text{and}\quad\frac{Q_{nk}}{nk}\leq \frac{H(X\mid Y)+\epsilon}{k}.
\end{displaymath}
If we choose $k$ large enough such that \mbox{$\big(\big(H(X\mid Y)+\epsilon\big)\slash k\big)\leq \epsilon$}, then
\begin{equation*}
\mathbf{P}\bigg\{\frac{Q_{nk}}{nk}>\epsilon\bigg\}\leq \mathbf{P}\{A^{c}\}=o(1).
\end{equation*}
\end{proof}

\section{Continuous variate generation}\label{sect_main_sect_cts}

This section is divided into various parts that reflect the main approaches to random variate generation. However the most important section is our plea in section \ref{sect_cts_gen} for the natural choice of the Wasserstein metric to measure distances between outputs and their corresponding ideal distributions. In section \ref{sect_wasserstein_inversion}, we establish an important result in the one-dimensional case that connects the Wasserstein $L_{\infty}$ metric to the inverses of the distribution functions for which we measure the distance between. Many facts from sections \ref{sect_cts_gen} and \ref{sect_wasserstein_inversion} apply to any continuous distributions, singular or absolutely continuous. However, if we impose that the target distribution is absolutely continuous, then, we obtain our main result, mentioned in the introduction, which describes the expected complexity of the number of random bits to generate an $\epsilon$-accurate random outcome for $\epsilon>0$. Our result states that $-\log_{2}(\epsilon)+H(f)+O(1)$ random bits are expected whenever $H(f)<\infty$, and our complexity result is reminiscent to those for the discrete cases involving the entropy. Given the importance of the differential entropy, we make a little interlude in section \ref{sect_diff_ent_usage} to explain when the use of the differential entropy is justified in the context of sampling an absolutely continuous distribution.

We discuss various methods to generate absolutely continuous random variables and their upper bounds in sections \ref{sect_discretization}, \ref{sect_inversion}, \ref{sect_bisection}, \ref{sect_vN} and \ref{sect_convolution} which analyzes the discretization, inversion, bisection, Von Neuman and convolution methods, respectively. Those methods, excepted for the convolutional approach in \ref{sect_convolution}, concern absolutely continuous distributions.

\subsection{Wasserstein metric and universal lower bound}\label{sect_cts_gen}%Problematique pour le cas continu; Wasserstein

From this section and the following ones, we deal with continuous probability distributions. Results from the previous sections settle the discrete case and we are now ready to extend the former sections to the continuous case. Let us recall the Wasserstein metric which is central in our extension to the continuous case. In the continuous, we must deal with accuracy necessarily and the Wasserstein $L_\infty$-metric takes into account the accuracy in our computations. Generally speaking let $X$ and $Y$ be two random variables. Let $F$ and $G$ be the distributions of $X$ and $Y$, respectively. If $\mathcal{M}$ denote the product space of all distributions of pairs \mbox{$(X,Y) \in \mathbb{R}^{d}\times\mathbb{R}^{d}$} with fixed marginal distributions $F$ and $G$, respectively, then the Wasserstein $L_\infty$-distance between $X$ and $Y$, or between $F$ and $G$, is
\begin{displaymath}
\begin{split}
W_{p}(F,G)=\inf\big\{\esssup\|X-Y\|_{p}\hspace{2pt}:\hspace{2pt}(F,G)\in \mathcal{M}\},
\end{split}
\end{displaymath}
where $\esssup$ denotes the essential supremum. The Wasserstein $L_\infty$-metric defines intrinsically a distance between $X$ and $Y$ that is $\mathrm{dist}_{p}(X,Y)= W_{p}(F,G)$. By definition, if $\mathrm{dist}_{p}(X,Y) \le \epsilon$, then there exists a random variable $Y$ (output) coupled with $X$ (target) such that \mbox{$\esssup\|X-Y\|_{p} \le \epsilon$} that is, with probability one,
 $\|X-Y\|_{p}<\epsilon$. Clearly we observe that the three axioms for a distance metric are satisfied.

Before we prove our main theorem that is mentioned in the introduction, we recall definition \ref{def_arbsamalg}, introduce a new one, and also proves useful lemmas. Definition \ref{def_arbsamalg} defines an $(\epsilon,p)$-sampling algorithm $\mathtt{A}$ as a probabilistic algorithm such that for a target random variable $X\in\mathbb{R}^{d}$, algorithm $\mathtt{A}$ returns an output random variable $Y\in\mathbb{R}^{d}$ such that $\|X-Y\|_{p}<\epsilon$ with probability one, and $\mathtt{A}$ halts when \texttt{RandomBit} gets invoked $T$ times. In the following definition, $G$ denotes a graph, $V$ its corresponding set of vertices and $E$ its corresponding set of edges.
\begin{definition}[sampling graph]\label{defn_samplinggraph}
Given the joint random variable $(X,Y)\in\mathbb{R}^{d}\times\mathbb{R}^{d}$ of an $(\epsilon,p)$-sampling algorithm for the target $X$ with output $Y$ and a countable partition $\mathcal{A}$ of $\mathbb{R}^{d}$, we say that $G=(V,E)$ is a sampling graph for $X$ on partition $\mathcal{A}$ if
\begin{enumerate}
\item[(1)] $V=\mathcal{A}$,
\item[(2)] $(A,B)\in E\Longleftrightarrow \inf\big\{\|x-y\|_{p}\colon (x,y)\in V\times V\big\}<\epsilon$.
\end{enumerate}
\end{definition}
We denote by $\Delta$ be the maximal degree of any vertex of $G$. We observe that the partition $\mathcal{A}$ induces a discrete distribution with probability masses given by $\mathbf{P}\{X\in A\}$ for $A\in\mathcal{A}$; we denote by $H(X\mid\mathcal{A})$ the entropy of the latter discrete distribution, that is,
\begin{displaymath}
H(X\mid\mathcal{A})=-\sum_{A\in\mathcal{A}}{\mathbf{P}\{X\in\mathcal{A}\}\log_{2}\mathbf{P}\{X\in\mathcal{A}\}}.
\end{displaymath}
For $B\in \mathcal{A}$, we denote the neighbourhood of $B$, with respect to $\epsilon$ and $p$, by $\partial(B)=\{A\in A\colon (A,B)\in E\}$.
\begin{lemma}\label{lowerbndlem}
With the above notation, if $T$ is the number of bits used by an $(\epsilon,p)$-sampling algorithm for a target $X$ and an output $Y$, then
\begin{displaymath}
\mathbf{E}(T)\geq \sup_{\mathcal{A}}\big\{H(X\mid\mathcal{A})-\log_{2}(\Delta+1)\big\}.
\end{displaymath}
\end{lemma}

\begin{proof}[Proof of lemma \ref{lowerbndlem}]
Let $X$ and $Y$ be two (dependent) random variables on $\mathbb{R}^{d}$ with $X$ and $Y$ the target and output of an $(\epsilon,p)$-sampling algorithm. For notational purposes, let us denote the joint distribution of $\{X\in A,Y\in B\}$ by $\mu_{AB}=\mathbf{P}\{X\in A,Y\in B\}$ and the marginal distributions by $\mu_{A}=\mathbf{P}\{X\in A\}$ or $\mu_{B}=\mathbf{P}\{Y\in B\}$; in the case of the marginal distributions, it is always clear to which one we refer to. In the inequality that follows, we use the fact $\mu_{AB}=0$ if $(A,B)$ is not an edge of $G$ as well as the definition of convexity. We then have
\begin{align}
&\big|H(X\mid\mathcal{A})-H(Y\mid\mathcal{A})\big|=\Bigg|\sum_{A\in\mathcal{A}}{\mu_{A}\log_{2}\Bigg(\frac{1}{\mu_{A}}\Bigg)}-\sum_{B\in\mathcal{A}}{\mu_{B}\log_{2}\Bigg(\frac{1}{\mu_{B}}\Bigg)}\Bigg|\nonumber\\
&\quad=\Bigg|\sum_{A\in\mathcal{A}}\sum_{B\in\mathcal{A}}{\mu_{AB}\log_{2}\Bigg(\frac{\mu_{B}}{\mu_{A}}\Bigg)}\Bigg|=\Bigg|\sum_{(A,B)\in E}{\mu_{AB}\log_{2}\Bigg(\frac{\mu_{B}}{\mu_{A}}\Bigg)}\Bigg|\nonumber\\
&\quad\leq\log_{2}\Bigg(\sum_{(A,B)\in E}{\Bigg(\frac{\mu_{AB}\mu_{B}}{\mu_{A}}\Bigg)}\Bigg)\leq\log_{2}\Bigg(\sum_{(A,B)\in E}\frac{\mu_{AB}}{\mu_{A}}\Bigg)\nonumber\\
&\qquad=\log_{2}\Bigg(\sum_{B\in\partial(A)}\mathbf{P}\{Y\in B\mid X\in A\}\Bigg)\leq \log_{2}(\Delta+1).\nonumber
\end{align}
The quantity ``1'' within $\log_{2}(\Delta+1)$ comes from the fact that we have always $A\in\partial(A)$. If $T$ is the random number of bits needed to generate a discrete random variable $Y$ that outputs a vertex $A$ of $G$ with probability $\mathbf{P}\{Y\in A\}$, then
\begin{align}
\mathbf{E}(T)\geq H(Y\mid\mathcal{A}) \geq H(X\mid\mathcal{A})-\log_{2}(\Delta+1).\label{onara_eq_lower}
\end{align}
Because the inequality from (\ref{onara_eq_lower}) is valid for all choice of $\mathcal{A}$, then
\begin{align*}
\mathbf{E}(T)&\geq \sup_{\mathcal{A}}\big\{H(X\mid\mathcal{A})-\log_{2}(\Delta+1)\big\},
\end{align*}
which ends the proof.
\end{proof}

\begin{remark}
The bound from lemma \ref{lowerbndlem} is equal to Shannon's bound \cite{Sha_1948} when the distribution $X$ is discrete with a finite support, in that case, $\Delta=0$ by choosing $\epsilon$ sufficiently small.
\end{remark}

We recall a result from Csisz\`{a}r \cite{Csi_1962} about the hypercubic partition entropy of an absolutely continuous random vector $X$. Of particular interest to us is the hypercubic partition, denoted by $\mathcal{A}_{h}^{\star}$ for some $h>0$, and for which a cell $A\in\mathcal{A}_{h}^{\star}$ has the form
\begin{displaymath}
\prod_{j=1}^{d}{\big[i_{j}h,(i_{j}+1)h\big)\quad\text{for some }(i_{1},\ldots,i_{d})}\in\mathbb{Z}^{d}.
\end{displaymath}
Integer vectors $(i_{1},\ldots,i_{d})\in\mathbb{Z}^{d}$ are used if necessary to index cells in $\mathcal{A}_{h}^{\star}$. We recall that $\lfloor X\rfloor = (\lfloor X_1\rfloor,\ldots,{}\lfloor X_d\rfloor)$ has a finite entropy, a condition we refer to as R\'{e}nyi's condition. If $F$ is absolutely continuous with density $f$, then
\begin{displaymath}
H(f)=-\int{\log_{2}(f){d}F}
\end{displaymath}
is well-defined, that is either finite or $-\infty$.
\begin{theorem}[Csisz\`{a}r \cite{Csi_1962}]\label{csiszar_lemma}
Let $X\in\mathbb{R}^{d}$ be a continuous random variable with distribution $F$. If $F$ is absolutely continuous with density $f$, and the R\'{e}nyi's condition is satisfied, then
\begin{equation}
H(X\mid\mathcal{A}_{h}^{\star})-d\log_{2}\bigg(\frac{1}{h}\bigg)\to H(f)\quad\text{as}\quad h\to 0^{+}\label{hippo_farts}
\end{equation}
If $H(f)$ is bounded, then the right side of (\ref{hippo_farts}) is finite. If $H(f)=-\infty$, then the right side of (\ref{hippo_farts}) is $-\infty$.
If $F$ is singular, then
\begin{displaymath}
H(X\mid\mathcal{A}_{h}^{\star})-d\log_{2}\bigg(\frac{1}{h}\bigg)\rightarrow -\infty\quad\text{as}\quad h\to 0^{+}.
\end{displaymath}
\end{theorem}

R\'{e}nyi \cite{Ren_1959}, Csisz\`{a}r \cite{Csi_1961}, and Linder and Zeger \cite{LinZeg_1994} contain further information about the asymptotic theory of entropy arising from general partitions.

\begin{lemma}\label{absctsrenyithm}
Under R\'{e}nyi's condition, for general partition $\mathcal{A}$ and a random variable $X\in\mathbb{R}^{d}$ with density $f$, we have
\begin{displaymath}
H(X\mid\mathcal{A})\geq H(f)+\sum_{A\in\mathcal{A}}{\mathbf{P}\{X\in A\}\log_{2}\bigg(\frac{1}{\lambda(A)}\bigg)},
\end{displaymath}
where $\lambda$ denotes the Lebesgue measure. In particular, we have
\begin{displaymath}
H(X\mid\mathcal{A}_{h}^{*})\geq H(f)+d\log_{2}\bigg(\frac{1}{h}\bigg).
\end{displaymath}
\end{lemma}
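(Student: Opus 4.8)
\textit{Proof proposal.} The plan is to identify the gap between the two sides of the first inequality as a probability-weighted sum of relative entropies, each of which is non-negative by Gibbs' inequality; the second inequality is then an immediate specialization.

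\textbf{Reduction.} Cells $A$ with $\lambda(A)=0$ have $\mathbf{P}\{X\in A\}=0$ and contribute nothing to $\mathcal{E}_{\mathcal{A}}(X)$, to $\mathcal{E}(f)$, or to the sum on the right, so they may be discarded. If some cell with positive probability has $\lambda(A)=\infty$, then $\sum_{A}\mathbf{P}\{X\in A\}\log_{2}(1/\lambda(A))=-\infty$ and the inequality is trivial; likewise if $\mathcal{E}_{\mathcal{A}}(X)=+\infty$. Hence I may assume every cell $A$ carrying mass satisfies $0<\lambda(A)<\infty$, and I write $p_{A}=\mathbf{P}\{X\in A\}=\int_{A}f\,dx>0$.

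\textbf{A per-cell identity.} For such a cell let $g_{A}=f\mathbf{1}_{A}/p_{A}$ be the conditional density of $X$ given $X\in A$, and let $u_{A}=\mathbf{1}_{A}/\lambda(A)$ be the uniform density on $A$. Expanding $\log_{2}(g_{A}/u_{A})=\log_{2}f-\log_{2}p_{A}+\log_{2}\lambda(A)$ and integrating against $f/p_{A}$ over $A$ gives
\begin{displaymath}
p_{A}\,D(g_{A}\,\|\,u_{A})\;=\;-\!\int_{A}f\log_{2}\frac{1}{f}\,dx\;+\;p_{A}\log_{2}\frac{1}{p_{A}}\;+\;p_{A}\log_{2}\lambda(A),
\end{displaymath}
where $D(\cdot\,\|\,\cdot)$ denotes relative entropy in base $2$. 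On a cell of finite measure the positive part of $f\log_{2}(1/f)$ is bounded and hence integrable, so $\int_{A}f\log_{2}(1/f)\,dx\in[-\infty,\infty)$ while the other two terms are finite; the identity therefore makes sense cell by cell.

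\textbf{Summation and Gibbs.} Summing this identity over all mass-carrying cells, and using that R\'enyi's condition makes $\mathcal{E}(f)$ well-defined so that $\sum_{A}\int_{A}f\log_{2}(1/f)\,dx=\mathcal{E}(f)$ while $\sum_{A}p_{A}\log_{2}(1/p_{A})=\mathcal{E}_{\mathcal{A}}(X)$, I obtain
\begin{displaymath}
\sum_{A}p_{A}\,D(g_{A}\,\|\,u_{A})\;=\;\mathcal{E}_{\mathcal{A}}(X)-\mathcal{E}(f)-\sum_{A}p_{A}\log_{2}\frac{1}{\lambda(A)}.
\end{displaymath}
Since $D(g_{A}\,\|\,u_{A})\ge 0$ for every $A$ by non-negativity of relative entropy (equivalently, among densities supported on a set of finite Lebesgue measure the uniform one maximizes differential entropy, with value $\log_{2}\lambda(A)$), the left side is non-negative, which is precisely the first inequality. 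For the cubic partition $\mathcal{A}_{h}^{*}$ every cell has $\lambda(A)=h^{d}$, so $\sum_{A}p_{A}\log_{2}(1/\lambda(A))=d\log_{2}(1/h)\sum_{A}p_{A}=d\log_{2}(1/h)$, giving the second inequality.

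The argument is routine; the only point needing care is the $\pm\infty$ bookkeeping in the summation step, i.e.\ verifying that once the trivial cases have been removed no $\infty-\infty$ can arise (each $D(g_{A}\,\|\,u_{A})\in[0,\infty]$, and by R\'enyi's condition the positive part of $f\log_{2}(1/f)$ is globally integrable). I expect that — a minor technicality — to be the main obstacle; the inequality itself is a one-line consequence of Gibbs.
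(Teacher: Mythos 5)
Your proof is correct and is essentially the paper's argument in different clothing: the paper fixes a cell $A$, takes $Z$ uniform on $A$ and applies Jensen's inequality to the concave map $x\mapsto x\log_{2}(1/x)$ evaluated at $Y=f(Z)$, which is precisely your statement $D(g_{A}\,\|\,u_{A})\ge 0$, and then sums over cells exactly as you do. The extra $\pm\infty$ bookkeeping you supply is a harmless refinement of what the paper leaves implicit.
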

\begin{proof}[Proof of lemma \ref{absctsrenyithm}]
Given $A\in\mathcal{A}$, let $Z$ be uniformly distributed on $A$ so that, for any $B\subset A$, we have $\mathbf{P}\{z\in B\}=\lambda(B)/\lambda(A)$. Then we have as well that
\begin{displaymath}
\mathbf{E}(f(Z))=\int_{z\in A}{f(z){d}\mathbf{P}(z)}=\frac{1}{\lambda(A)}\int_{x\in A}{f(x){d}x}=\frac{\mathbf{P}\{X\in A\}}{\lambda(A)}.
\end{displaymath}
For notational convenience, we write $Y=f(Z)$, and then
\begin{displaymath}
\frac{\mathbf{P}\{X\in A\}}{\lambda(A)}\log_{2}\bigg(\frac{\lambda(A)}{\mathbf{P}\{X\in A\}}\bigg)=\mathbf{E}(Y)\log_{2}\bigg(\frac{1}{\mathbf{E}(Y)}\bigg).
\end{displaymath}
By Jensen's inequality and the concavity of~\mbox{$-x\log_{2}(x)$},
\begin{displaymath}
\mathbf{E}(Y)\log_{2}\bigg(\frac{1}{\mathbf{E}(Y)}\bigg)\geq\mathbf{E}\bigg(Y\log_{2}\bigg(\frac{1}{Y}\bigg)\bigg)=\frac{1}{\lambda(A)}\int_{A}{f\log_{2}\bigg(\frac{1}{f}\bigg)},
\end{displaymath}
and the inequality follows by summing over $A\in\mathcal{A}$.
\end{proof}

\begin{theorem}\label{lowerbndthm}
Let $X\in\mathbb{R}^{d}$ be a random variable with density $f$ and let $Y$ be an output of $(\epsilon,p)$-sampling algorithm for $X$ that requires to process $T$ random bits in order to produce $Y$. If R\'{e}nyi's condition is true, then
\begin{displaymath}
\mathbf{E}(T)\geq H(f)+d\log_{2}\bigg(\frac{1}{\epsilon}\bigg)-\log_{2}V_{d,p}\quad\text{with}\quad V_{d,p}=\frac{2^{d}\Gamma\big(\frac{1}{p}+1\big)}{\Gamma\big(\frac{d}{p}+1\big)},
\end{displaymath}
and the latter quantity is the volume of the unit ball in $\mathbb{R}^{d}$.
\end{theorem}
\begin{proof}[Proof of theorem \ref{lowerbndthm}]
Let $\mathcal{A}_{h}^{\star}$ be a cubic partition for some $h>0$. Using lemma \ref{lowerbndlem}, we have
\begin{displaymath}
\mathbf{E}(T)\geq \sup_{h>0}\Big(H(X\mid\mathcal{A}_{h}^{\star})-\log_{2}\big(\Delta_{h}+1\big)\Big)
\end{displaymath}
where $\Delta_h$ is the maximal degree of the sampling graph with vertices in \mbox{$\mathcal{A}_{h}^{\star}\times\mathcal{A}_{h}^{\star}$}. Also \mbox{$\inf\{\|x-y\|_{p}<\epsilon\hspace{1pt}\colon\hspace{1pt}x\in A,y\in B\}$} if and only if $(A,B)\in\mathcal{A}_{h}^{\star}\times\mathcal{A}_{h}^{\star}$ defines an edge of the sampling graph. We set $h=\epsilon\slash n$ and use
\begin{displaymath}
\mathbf{E}(T)\geq \limsup_{n\to\infty}\big(H(X\mid\mathcal{A}_{\epsilon\slash n}^{\star})-\log_{2}\big(\Delta_{\epsilon\slash n}+1\big)\big).
\end{displaymath}
If $B_{r}$ denotes the $\ell_{p}$-ball of radius $r$ centered at $0$, then by elementary considerations on sphere packing,
\begin{displaymath}
\frac{\lambda(B_{\epsilon})}{h^{d}}\leq \Delta_{h}\leq \frac{\lambda\big(B_{\epsilon'}\big)}{h^{d}}\quad\text{where }\epsilon'=\epsilon+2hd^{1\slash p},
\end{displaymath}
so that as $n\to\infty$,
\begin{displaymath}
\Delta_{\epsilon\slash n}\sim \bigg(\frac{n}{\epsilon}\bigg)^{d}\lambda(B_{\epsilon})=V_{d,p}n^{d}.
\end{displaymath}
We have also that $
H(X\mid\mathcal{A}_{\epsilon\slash n}^{\star})\geq H(f)+d\log_{2}\big(\frac{n}{\epsilon}\big)$,
and we conclude the proof because
\begin{align*}
&H(X\mid\mathcal{A}_{\epsilon\slash n}^{\star})-\log_{2}\big(\Delta_{\epsilon\slash n}+1\big)\\
&\quad\geq H(f)+d\log_{2}\bigg(\frac{1}{\epsilon}\bigg)+\log_{2}\bigg(\frac{n^{d}}{1+V_{d,p}n^{d}(1+o(1))}\bigg)\\
&\quad\stackrel{n\to\infty}{\rightarrow}H(f)+d\log_{2}\bigg(\frac{1}{\epsilon}\bigg)-\log_{2}V_{d,p}.
\end{align*}
\end{proof}

We end this section with a few observations and examples. One of the observations is that the concept of Wasserstein metric may not be always needed in order to obtain arbitrarily accurate random outcomes. As a first example of a very-easy-to-generate continuous distributions, we can take the Cantor singular distribution defined over the unit length interval $[0,1]$. We recall that a random variable distributed according to the Cantor distribution has only $0$ or $2$ in its triadic expansion. In fact, a $(\epsilon,\infty)$-sampling algorithm for the Cantor distribution consists to generate a sequence of $\ell=\lceil-\log_{3}(\epsilon)\rceil$ random bits $X_1, \ldots, X_{\ell}$, and to output $X=(2X_{1})3^{-1}+\ldots + (2X_{\ell})3^{-\ell}$. Then $X$ is distributed accordingly to the Cantor distribution with precision $\epsilon$. A second easy-to-generate continuous distribution is the truncated exponential upon which a standard (non-truncated) exponential can be generated. Indeed the truncated exponential can be written as the convolution of Bernoulli random variables that is a the sum of independent random variables. The last two examples bring us to mention a third observation which is that the Bernoulli random variables $X_i$'s are not identically distributed. We defer actually to section \ref{sect_convolution} an interesting theorem due to Kakutani which characterizes distributions obtained by convolution, and which turns out to be useful to generate random variables by convolution. We also point out a fourth observation that when a (continuous) distribution is expressed as a linear combination of independent, not necessarily identical, discrete random variables, then batch generation still obviously works; we can therefore extract randomness as we are generating sequentially the $X_{i}$'s for $1\leq i\leq \ell$ to reach asymptotically the optimal complexity for the number of expected bits required.

\subsection{Wasserstein metric and probability integral transform}\label{sect_wasserstein_inversion}

We establish an interesting connection between the Wasserstein $L_{\infty}$-metric between two distributions and the probability integral transform or inverse of the latter two distributions. Our theorem applies to one-dimensional distribution and more research is needed to extend it to multi-dimensional distribution. Our theorem here further supports the choice of the Wasserstein $L_{\infty}$-metric as the natural metric in the realm of non-uniform random variate generation to measure accuracy. Given two random variables $X$ and $Y$ with distributions $F$ and $G$ respectively, let $\mathcal{M}$ be the class of joint distributions defined over $\mathbb{R}^{2}$ for which their marginal distributions coincide with $F$ and $G$. We recall the $\ell_{\infty}$-Wasserstein distance between $F$ and $G$ given by
\begin{align*}
W_{\infty}(F,G)&= \inf\{\esssup\|X-Y\|_{\infty}\colon(F,G)\in\mathcal{M}\}\\
&=\inf\{\esssup|X-Y|\colon(F,G)\in\mathcal{M}\}.
\end{align*}
From now on, we write simply $W(F,G)$.

\begin{theorem}\label{wasser_inversion_thm}
With the notation of before, we have
\begin{displaymath}
W(F,G)=\sup_{u\in(0,1)}|F^{-1}(u)-G^{-1}(u)|.
\end{displaymath}
\end{theorem}
\begin{proof}
Let $U$ be a continuous uniform random variable on the real interval $(0,1)$. By the probability integral transform, we have $F^{-1}(U) \stackrel{\mathcal{D}}{=} X$ and $G^{-1}(U) \stackrel{\mathcal{D}}{=}Y$. Therefore, we have
\begin{equation}
W(F,G)\leq \esssup|F^{-1}(U)-G^{-1}(U)|\leq \sup_{u\in[0,1]}|F^{-1}(u)-G^{-1}(u)|.\label{namatoragaluptere}
\end{equation}
If $F$ or $G$ are constant on some sub-intervals, then, on any set with non-zero measure, (\ref{namatoragaluptere}) holds for any value $F^{-1}(u)$ such that
\begin{displaymath}
\inf\{x\colon F(x)\geq u\}\leq F^{-1}(u) \leq \sup\{x\colon F(x)\geq u\}.
\end{displaymath}
To show also that $\sup_{u\in[0,1]}|F^{-1}(u)-G^{-1}(u)|\leq W(F,G)$ on any set with non-zero measure, we proceed by contradiction. We recall that
\begin{align*}
F(x)&=\mathbf{P}\{X\leq x\},\hspace{12pt} F^{-1}(u)=\inf\{x\colon F(x)\geq u\},\\
G(y)&=\mathbf{P}\{Y\leq x\},\hspace{12pt} G^{-1}(u)=\inf\{y\colon G(y)\geq u\}.
\end{align*}
Given $\delta>0$ such that $W(F,G)>\delta$, then the pair $(X,Y)$ satisfies
\begin{align}
\mathbf{P}\{|X-Y|\leq \delta\}=1.\label{blatumividarol}
\end{align}
We extend the joint distribution of $(X,Y)$ over sets of measure zero in such a way that $|X-Y|\leq\delta$; we derandomize (\ref{blatumividarol}) in other words. Then we can obtain a contradiction if we suppose that $F^{-1}(u)=G^{-1}(u)-\theta\hspace{4pt}$ and $\theta>\delta$. For convenience, let us write $x_{u}=F^{-1}(u)$ and $y_{u}=G^{-1}(u)$. Graphically we have the situation represented on figure \ref{sedatuminolex}.

\begin{figure}
\begin{centering}
\includegraphics{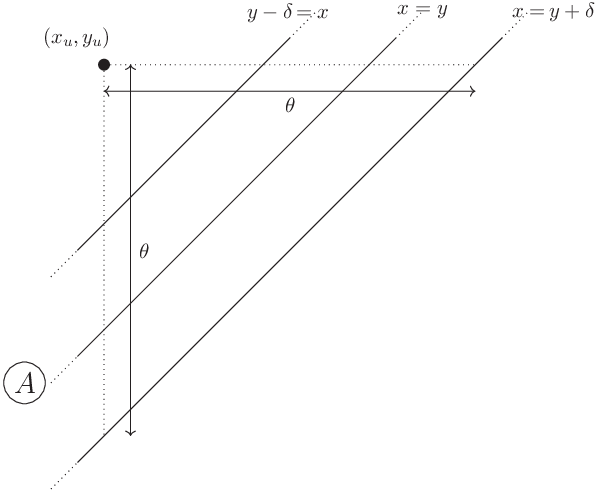}
\caption{Wasserstein and the probability integral transform}\label{sedatuminolex}
\end{centering}
\end{figure}

Let $A=\{(x,y)\in\mathbb{R}^{2}\colon |x-y|\leq \delta\}$ as represented on figure \ref{sedatuminolex} so that the plane is split into regions defined by the following probabilistic relations:
\begin{align*}
\mathbf{P}\{(X,Y)\in A\}&=1,\\
\mathbf{P}\{(X,Y)\in A\cap \big((-\infty,x_u)\times\mathbb{R}\big)\}&=u,\\
\mathbf{P}\{(X,Y)\in A\cap \big(\mathbb{R}\times (y_u,\infty)\big)\}&=1-u.
\end{align*}
Therefore we have
\begin{align*}
\mathbf{P}\{(X,Y)\in A\cap \big((x_u,\infty)\cap (-\infty,y_u)\big)\}&=0,\\
\mathbf{P}\{Y\leq y_u\}=\mathbf{P}\{Y\leq x_u+\delta\}&\Longrightarrow G(y_u)=G(x_u+\delta),
\end{align*}
and since, by definition $G^{-1}(u)=\inf\{y\colon G(y)\geq u\}$, we have that $y_u\neq G^{-1}(u)$ which is a contradiction.

We end the proof by extending to all pair $(X,Y)\in\mathcal{M}$ by the use of an approximation based argument. For that, let $\epsilon>0$ and $U$ uniformly distributed over $(0,1)$ independent of $X$ and $Y$. Define the random variables $X_{\epsilon}=X+\epsilon U$ and $Y_{\epsilon}=Y+\epsilon U$ with respective distributions $F_{\epsilon}$ and $G_{\epsilon}$. We observe that
\begin{align}
W(F_{\epsilon},G_{\epsilon})=\esssup|F_{\epsilon}^{-1}(U)-G_{\epsilon}^{-1}(U)|.\label{trimunotriti}
\end{align}
If $X$ and $Y$ are paired in such a way that $\esssup|X-Y|=W(F,G)$, and since $|X-X_{\epsilon}|\leq \epsilon$ and $|Y-Y_{\epsilon}|\leq \epsilon$, then by the triangle inequality,
\begin{displaymath}
\big|\esssup|X_{\epsilon}-Y_{\epsilon}|-\esssup|X-Y|\big|\leq 2\epsilon.
\end{displaymath}
The proof is completed because $|F_{\epsilon}^{-1}(U)-F^{-1}(U)|\leq\epsilon$ and $|G_{\epsilon}^{-1}(U)-G^{-1}(U)|\leq\epsilon$ which combines with (\ref{trimunotriti}) yields to
\begin{displaymath}
\esssup|F^{-1}(U)-G^{-1}(U)|\leq W(F_{\epsilon},G_{\epsilon})\leq\esssup|F^{-1}(U)-G^{-1}(U)|+4\epsilon.
\end{displaymath}
\end{proof}

\subsection{Interlude about the differential entropy}\label{sect_diff_ent_usage}

Due to the importance that the differential entropy plays in the remainder of our work, we deemed necessary to discuss when the use of the differential entropy is justified in the context of the generation of continuous random variables. This section illustrates actually a somewhat counter intuitive fact. Namely the fact that there exists absolutely continuous distributions with bounded differential entropy, but for which infinitely many possible discretizations yield to discrete probability distributions with unbounded entropy. We show how to construct an instance of such probability density function. The concept of differential entropy of a density is not meaningful in the context of random number generation whenever the entropy corresponding to one of its non-trivial discretization diverges.

Let $k>0$ be an integer and let $\mathbf{p}=(p_1,p_2,\ldots)$ be an infinite length probability vector such that
\begin{displaymath}
\sum_{k=1}^{\infty}{p_k}=1\quad\text{and}\quad\sum_{k=1}^{\infty}{p_k\log_{2}\bigg(\frac{1}{p_k}\bigg)}=\infty.
\end{displaymath}
For instance, a choice for $\mathbf{p}$ can be any probability vector from the family of Zeta-Dirichlet distributions parameterized by a real number $u>0$, and for which
\begin{displaymath}
C_{u}=\sum_{k=3}^{\infty}{\frac{1}{k(\log{k})^{1+u}}}\quad\text{and}\quad
p_{k}=\frac{1}{C_{u}}\frac{1}{k(\log{k})^{1+u}}\quad\text{for $k\geq 3$.}
\end{displaymath}
As shown in Hardy and Riesz \cite{HarRis_BookNewEd2005}, the entropy is unbounded for all $0<u\leq 1$ and bounded for $u>1$. For $u\leq 0$, the sum $C_{u}$ diverges.

For the construction, let $\mathbf{a}=(a_{k})_{k\geq 1}$ be an increasing sequence of real numbers such that $a_{k}+p_{k}\leq a_{k+1}$. Let $\mathcal{A}=\{A_{k}\}_{k\geq 1}$ be a family of disjoints subsets of $\mathbb{R^{+}}$ such that $A_{k}=[a_{k},a_{k}+p_{k})$. We observe that $\mathcal{A}$ may or may not cover $\mathbb{R^{+}}$. We now define the density of our random variable of interest, say $X$, as well as its law as follow. We denote its density by $f$ and define $f:\mathbb{R}^{+}\to I$ for some $I\subset[0,\infty)$ through $\mathcal{A}$ as $f(x)=\sum_{k=1}^{\infty}{\mathds{1}\{x\in A_{k}\}}$. Clearly we have for all $x\in\mathbb{R}$ that $f(x)\in\{0,1\}$, and so $I=\{0,1\}$. The cumulative distribution function $F$ is given by
\begin{align*}
F(x)&=\sum_{k=1}^{\infty}{(x-a_k+q_{k-1})\mathds{1}\{x\in A_k\}}+\sum_{k=0}^{\infty}{q_k\mathds{1}\{x\in B_k\}}\\
q_{0}&=0,\quad B_{0}=(-\infty,a_1),\quad\text{and}\\
q_{k}&=\sum_{j=1}^{k}{p_{j}},\quad B_{k}=[a_k+p_k,a_{k+1})\quad\text{for $k\geq 1$,}
\end{align*}
For $k\geq 1$, if $B_{k}=\emptyset$, then there is no gap between $A_{k}$ and $A_{k+1}$ that is $a_k+p_k=a_{k+1}$ and so $\mathcal{A}$ covers $\mathbb{R}^{+}$ into disjoint subsets. To get an unbounded entropy from a discretization, we choose $\mathcal{A}$ such that there are infinitely many consecutive intervals $A_{k}$ and $A_{k+1}$ with gaps, that is with $B_k\neq \emptyset$. Figure \ref{fig_dist_entrop_infini} represents $F$ around some interval $B_{k}$ for some $k>0$.

\begin{figure}
\begin{centering}
\includegraphics{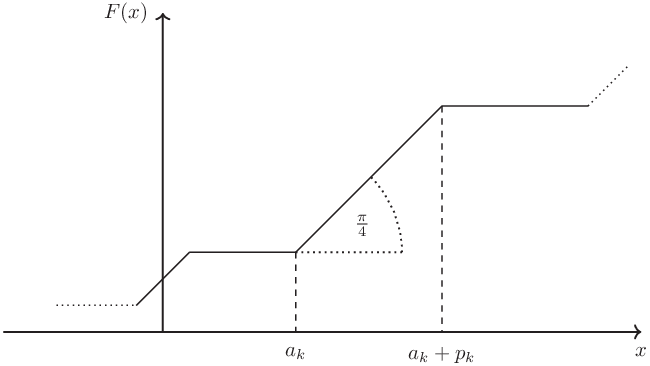}
\caption{\label{fig_dist_entrop_infini}Distribution $F$ shown around an interval $B_{k}$ for some $k\geq 1$.}
\end{centering}
\end{figure}

The differential entropy is $0$ since
\begin{displaymath}
H(f)=\int_{-\infty}^{+\infty}{f(x)\log_{2}\bigg(\frac{1}{f(x)}\bigg)dx}=\sum_{k=1}^{\infty}{\int_{x\in A_{k}}{f(x)\log_{2}\bigg(\frac{1}{f(x)}\bigg)dx}}=0.
\end{displaymath}

Now let $\mathcal{R}$ be a partition covering $\mathbb{R^{+}}$ into disjoints intervals. More precisely, choose arbitrarily a real $\epsilon>0$ and let $\mathcal{R}=\{R_{i}\}_{i \geq 0}$ be such that $R_{i}=[i\epsilon,(i+1)\epsilon)$. For $\mathcal{A}$ with infinitely many consecutive intervals $A_{k}$ and $A_{k+1}$ having a non-zero gap, then there exists a sequence of indices $k_{j}$ with $j\geq 0$ such that $a_{k_{j}}+p_{k_{j}}<a_{k_{j}+1}$. For all $\epsilon>0$, there is $K>0$ such that for all $k\geq K$, $p_{k}<\epsilon$ since the $p_{k}$'s are the terms of a convergent series. Because $\mathcal{R}$ covers $\mathbb{R}^{+}$ into disjoint sets, there exists an index $i_{K}$ such that for all $i\geq i_{K}$ we have $[a_{i},a_{i}+p_{i})\subset R_{i}=[i\epsilon,(i+1)\epsilon)$, which is always possible by the choice of $\mathcal{A}$. Therefore the entropy of the discretization of $X$ under $\mathcal{R}$ is given by
\begin{align*}
&H(X\hspace{1pt}|\hspace{1pt}\mathcal{R})=\sum_{i=0}^{\infty}{\mathbf{P}\{X\in R_{i}\}\log_{2}\bigg(\frac{1}{\mathbf{P}\{X\in R_{i}\}}\bigg)}\\
&\quad>\sum_{i=i_{K}}^{\infty}{\mathbf{P}\{X\in R_{i}\}\log_{2}\bigg(\frac{1}{\mathbf{P}\{X\in R_{i}\}}\bigg)}=\sum_{i=i_{K}}^{\infty}{p_i\log_{2}\bigg(\frac{1}{p_i}\bigg)}=\infty.
\end{align*}

\begin{remark}
As explained in previous sections, we should therefore expect an unbounded number of random bits to generate an instance $X$ with at least $\lceil\log_{2}(1/\epsilon)\rceil$ bits of accuracy. Due to the Chebeshev's inequality apply to the tail of the probability of halting, in practice it may however terminates with a significant high probability depending of the target distribution.
\end{remark}

We end this section by mentioning a sufficient condition for the convergence of the differential entropy.
\begin{lemma}\label{lem:conv_diff_ent}
Let $X\in\mathbb{R}$ be an absolutely continuous random variable with density $f$.
\begin{displaymath}
\text{If }\phantom{1}\mathbf{E}\big(\log_{2}(1+|X|)\big)<\infty \text{ then }\phantom{1}\int_{\mathbb{R}}{f(x)\log_{2}\bigg(\frac{1}{f(x)}\bigg)dx}<\infty.
\end{displaymath}
\end{lemma}
\begin{proof}
Let $A=\{x\in\mathbb{R}\colon f(x)\leq 1\}$ and $g$ be the density for a centered-around-zero Cauchy random variable. We recall that $g$ is given by
\begin{displaymath}
g(x)=\frac{1}{\pi}\frac{1}{1+x^{2}}\text{ for $x\in\mathbb{R}$}.
\end{displaymath}
We have that
\begin{displaymath}
\big(1+|x|\big)^{2}\geq 1+|x|^{2}=1+x^{2}\quad\text{and}\quad t\log_{2}\bigg(\frac{1}{t}\bigg)\leq \frac{1}{e\log(2)}\quad\text{for all $t>0$}.
\end{displaymath}
If $\mathbf{E}\big(\log_{2}(1+|X|)\big)<\infty$, then
\begin{align*}
&\int_{\mathbb{R}}{f(x)\log_{2}\bigg(\frac{1}{f(x)}\bigg)dx}< \int_{A}{f(x)\log_{2}\bigg(\frac{1}{f(x)}\bigg)dx}\\
&\quad=\int_{A}{f(x)\log_{2}\bigg(\frac{1}{g(x)}\frac{g(x)}{f(x)}\bigg)dx}\\
&\quad=\int_{A}{f(x)\log_{2}\bigg(\frac{1}{g(x)}\bigg)dx}+\int_{A}{g(x)\frac{f(x)}{g(x)}\log_{2}\bigg(\frac{g(x)}{f(x)}\bigg)dx}\\
&\quad< \int_{\mathbb{R}}{\log_{2}(\pi) f(x)dx}+2\mathbf{E}\big(\log_{2}(1+|X|)\big)+\frac{1}{e\log(2)}\int_{\mathbb{R}}{g(x)dx}<\infty.
\end{align*}
\end{proof}

\subsection{Discretization}\label{sect_discretization}

We explain in this section the complexities that we should expect when we apply a DDG-tree based algorithms as in section \ref{sect_HH} to generate a discretized continuous random variable. A discretization is a partition as explained in section \ref{sect_cts_gen}. We recall that the expected complexity of a generic DDG-tree algorithm $\mathtt{A}$ to generate a discrete random variable, say $Y$, is bounded above by $H(Y)+C$ for some $C>0$. The constant $C$ is related to the expected conditional entropy of the depths at which $\mathtt{A}$ outputs an instance of $Y$.

Consider a continuous $d$-dimensional random variable $X$ with support $I\subset\mathbb{R}^{d}$. Let $\epsilon>0$, and $\mathcal{A}_{\epsilon}$ be a partition of $I$, as in section \ref{sect_cts_gen}, that defines an $(\epsilon,p)$-sampling graph. In addition, we denote a center of $A\in\mathcal{A}_{\epsilon}$ by $x_A$. The center might not be unique especially if $X$ is singular. By definition we have that $\sup\{\|x_A-y\|_{p}\colon y\in A\}\leq \epsilon$. A sampling algorithm for the discrete distribution $\{\mathbf{P}(X\in A)\}_{A\in\mathcal{A}}$ that returns $A\in\mathcal{A}_{\epsilon}$ can be used to generate a random variable $Y$ that approximate $X$ to within $\epsilon$. Indeed, once the sampling algorithm returns $A$, we only need to set $Y=x_{A}$ and return $Y$. Thus there is a coupling $(X,Y)$ with \mbox{$\|X-Y\|_{p}\leq \epsilon$}.

Now for the rest of this section, we assume that the distribution of $X$ is absolutely continuous and we analyse two cases: $p=\infty$ or $1\leq p <\infty$. For $p=\infty$, a good choice is the hypercubic partition with sides $2\epsilon$, that is $\mathcal{A}_{2\epsilon}^{\star}$. If the distribution of $X$ has density $f$ and satisfies R\'{e}nyi's condition with \mbox{$H(f)>-\infty$}, then we have
\begin{align}
\mathbf{E}(T)\leq H(Y)+C&\leq H(f) + d\log_{2}\bigg(\frac{1}{2\epsilon}\bigg)+C+o(1)\nonumber\\
&\quad=H(f) + d\log_{2}\bigg(\frac{1}{\epsilon}\bigg)+C-d+o(1).\label{up_bnd_discret_machin}
\end{align}
We compare the lower bound from theorem \ref{lowerbndthm} with (\ref{up_bnd_discret_machin}) to observe a difference of $C+o(1)$. We also observe that if $d=1$, the a simple partition into intervals of length $2\epsilon$ can be used for all values of $p$ to obtain (\ref{up_bnd_discret_machin}).

For general $p\in[1,\infty)$, if we fix the $\ell_{p}$-balls radius to $2\epsilon{}d^{-\frac{1}{p}}$, then we observe unfortunately a linear growth in $d$ for the expected complexity which we do not have for $p=\infty$ as just explained. As before we assume that $f$ is the density of $X$ and that R\'{e}nyi's condition with $H(f)>-\infty$. Then we have
\begin{align}
\mathbf{E}(T)\leq H(Y)+C&\leq H\Big(X\mid \mathcal{A}_{2\epsilon{}d^{-\frac{1}{p}}}^{\star}\Big)+2\nonumber\\
&\quad\leq d\log_{2}\bigg(\frac{1}{\epsilon}\bigg)+H(f)+C-d+\frac{d}{p}\log_{2}(d)+o(1).\label{caca_collant}
\end{align}
Using $\Gamma\big(1+u\big)\geq \big(u\slash e\big)^{u}\sqrt{2\pi u}$, $u>0$, the difference between (\ref{caca_collant}) and the lower bound from theorem \ref{lowerbndthm} is
\begin{align*}
&=2+\frac{d}{p}\log_{2}(d)+d\log_{2}\Gamma\bigg(\frac{1}{p}+1\bigg)-\log_{2}\Gamma\bigg(\frac{d}{p}+1\bigg)+o(1)\\
&\leq 2+d\log_{2}\bigg(\Gamma\bigg(\frac{1}{p}+1\bigg)(ep)^{\frac{1}{p}}\bigg)-\frac{1}{2}\log_{2}\bigg(2\pi \frac{d}{p}\bigg)+o(1),
\end{align*}
which unfortunately increases linearly with $d$. To avoid this growing differential, it seems necessary to consider partitions that better approximate $\ell_{p}$-balls, a topic of further research.

\subsection{Inversion}\label{sect_inversion}

We analyze in this section the expected complexity for the inversion method. For a one-dimensional random variable $X$ with distribution function $F$, we recall that $X\stackrel{\mathcal{D}}{=}F^{-1}(U)$ where $F^{-1}$ denotes the inverse of $F$, and $U$ is uniformly distributed on $[0,1)$. Because $F$ is a right-continuous and increasing, we can generate a sequence of approximations for $X$ for which consecutive approximations get closer and closer to $X$ as we are invoking more and more \texttt{RandomBit}.

If $(U_{i})_{i>0}$ is a sequence of unbiased, independently and identically distributed random bits, then let
\begin{align*}
U&=0.U_1U_2\cdots=\sum_{j=1}^{\infty}{\frac{U_j}{2^{j}}}\quad\text{for}\quad U\in[0,1),\\
U_{(t)}&=0.U_1\cdots U_{t}\quad\text{and}\quad U_{(t)}^{+}=0.U_1\cdots U_{t}+\frac{1}{2^{t}}=0.U_{1}\cdots U_{t}1^{\infty}.
\end{align*}
We have $U_{(t)}\leq U\leq U_{(t)}^{+}$, $U_{(0)}=0$ and $U_{(1)}=1$. As a visual illustration, we have figure \ref{fig_inversion}.
\begin{figure}
\begin{centering}
\includegraphics{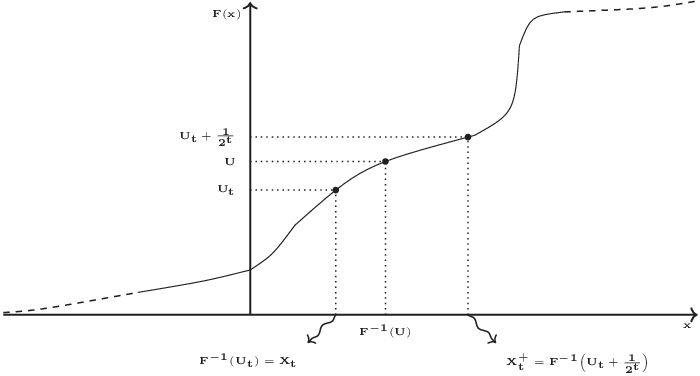}
\caption{Inversion method illustrated}\label{fig_inversion}
\end{centering}
\end{figure}
For $\epsilon>0$, if we define
\begin{displaymath}
Y=\frac{F^{-1}\big(U_{(t)}^{+}\big)+F^{-1}\big(U_{(t)}\big)}{2},
\end{displaymath}
then $X$ and $Y$ are coupled in such a way that $|X-Y|\leq 2\epsilon$. We need to analyze the expected value of $T_{\epsilon}$ defined by
\begin{displaymath}
T_{\epsilon}=\min\{t\geq 0\phantom{1}:\phantom{1}F^{-1}\big(U_{(t)}^{+}\big)-F^{-1}\big(U_{(t)}\big)\leq 2\epsilon\}.
\end{displaymath}

The inversion method mimics in spirit the method of Han and Hoshi, and indeed, this observation leads to a simple bound. Let $\mathcal{A}_{2\epsilon}^{\star}$ be a partition of $\mathbb{R}$ into disjoint intervals of equal length $2\epsilon$. Denote the probabilities of these intervals by $\mathbf{P}\{X\in A\}$ for $A\in\mathcal{A}_{2\epsilon}^{*}$ and its corresponding probability vector $\mathbf{p}$. Assume that we select randomly an interval according to $\mathbf{p}$ using the method of Han and Hoshi. It is easy to see that the number of bits needed to halt in the inversion method is smaller. Therefore, for the inversion method, we have also that $\mathbf{E}(T)\leq H(X\mid \mathcal{A}_{2\epsilon}^{\star})+3$. From the aforementioned upper bound and lemma \ref{csiszar_lemma}, we conclude theorem \ref{inversion_borne_sup_thm} that follows immediately.

\begin{theorem}\label{inversion_borne_sup_thm}
If $X$ has a density $f$ satisfying R\'{e}nyi's condition and if $H(f)>-\infty$, then, as $\epsilon\to 0^{+}$,
\begin{align}
\mathbf{E}(T)\leq \log_{2}\bigg(\frac{1}{\epsilon}\bigg)+H(f)+2+o(1)\label{caca_juteux}
\end{align}
\end{theorem}
In addition, one can tighten the analysis under additional conditions on $f$ such unimodality, monotonicity, or for specific forms. For that we the following theorem.
\begin{theorem}\label{thm_inv_mono_bornee}
Assume that $X$ has a bounded nonincreasing density $f$ on $[0,\infty)$. For the inversion method described above, if $\epsilon\to 0^{+}$, then
\begin{displaymath}
\mathbf{E}(T)\leq \log_{2}\bigg(\frac{1}{\epsilon}\bigg)+H(f)+o(1).
\end{displaymath}
\end{theorem}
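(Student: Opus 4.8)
The plan is to reuse the comparison argument that gave $\mathbf{E}(T)\leq\mathcal{E}_{\mathcal{A}_{2\epsilon}^{\star}}(X)+3$ for the general inversion method, but to save the additive constant by exploiting monotonicity. First I would set up the cubic partition of $[0,\infty)$ into intervals $A_k=[2k\epsilon,2(k+1)\epsilon)$, $k\geq 0$, with probabilities $p_k=\mathbf{P}\{X\in A_k\}=\int_{A_k}f$. Because $f$ is nonincreasing, the sequence $(p_k)_{k\geq 0}$ is itself nonincreasing, and $p_k\leq 2\epsilon f(2k\epsilon)$ while $p_k\geq 2\epsilon f(2(k+1)\epsilon)$. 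The key observation is that, for the inversion method, halting is governed by the length of the image interval $F^{-1}(U_{(t)}^+)-F^{-1}(U_{(t)})$, and this is closely tied to how the Lebesgue mass $2\epsilon$ of each $A_k$ is distributed in probability; monotonicity forces the ``hard'' small-probability cells to the right tail, where $f$ is small, so the dyadic refinement needed to resolve them is cheaper than the worst case.

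The main steps, in order, would be: (1) bound $\mathbf{E}(T)$ by the expected number of bits of a Han--Hoshi-type selection of the cell $A_k$, exactly as in Section~\ref{sectionupperbnd1D}, giving $\mathbf{E}(T)\leq \mathcal{E}_{\mathcal{A}_{2\epsilon}^{\star}}(X)+c$ for the crude constant $c=3$; (2) sharpen the Han--Hoshi overhead analysis using the nesting structure from the proof of Theorem~\ref{HHthm}: the extra ``$+3$'' came from splitting each symbol's leaves into two streams and paying ``$+1$'' for the splitting plus ``$+2\sum p_i$'' for the rounding. Under monotonicity, I would argue that the relevant refinement depths telescope: when the cells are sorted in decreasing probability, the inversion method reads the binary expansion of $U$ and stops as soon as the dyadic interval $[U_{(t)},U_{(t)}^+)$ falls inside a single $[Q_{k-1},Q_k)$; the total expected cost is then controlled by $\sum_k p_k\log_2(1/p_k)$ \emph{without} the $+2\sum p_k$ slack, because for nonincreasing $(p_k)$ one has the sharper bound $\sum_j j x_j/2^j\le x\log_2(1/x)+x$ improved further on average by a summation-by-parts / Abel argument across consecutive cells whose cumulative boundaries $Q_k$ share long common binary prefixes; (3) invoke Lemma~\ref{csiszar_lemma} to replace $\mathcal{E}_{\mathcal{A}_{2\epsilon}^{\star}}(X)$ by $\int f\log_2(1/f)+\log_2(1/(2\epsilon))+o(1)=\int f\log_2(1/f)+\log_2(1/\epsilon)-1+o(1)$; (4) show the combined constant overhead is $+1$ rather than $+3$, which exactly cancels the $-1$ from $\log_2(1/(2\epsilon))$, leaving the claimed $\log_2(1/\epsilon)+\int f\log_2(1/f)+o(1)$.

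I expect the crux to be step~(2): quantifying precisely how much monotonicity of $(p_k)$ buys back in the Han--Hoshi / inversion overhead. The honest way is probably to bound $T$ pathwise. Since the $Q_k$ are cumulative sums of a nonincreasing sequence summing to (at most) one, the gaps $Q_k-Q_{k-1}=p_k$ shrink, so a random dyadic point $U$ lands in the bulk of a wide early interval with high probability and needs few bits, while the rare event of landing near a fine tail boundary contributes a convergent tail sum. Concretely I would write $T=\min\{t:\ [U_{(t)},U_{(t)}^{+})\subseteq [Q_{k-1},Q_k)\ \text{for some }k\}$ and bound $\mathbf{P}\{T>t\}$ by the total length of the $2\epsilon$-neighborhoods of the breakpoints $\{Q_k\}$ at dyadic scale $2^{-t}$, namely $\sum_k \min(2^{-t}, \text{something}) $; summing $\mathbf{P}\{T>t\}$ over $t$ and using $\sum_k p_k\log_2(1/p_k)$-type estimates for a nonincreasing sequence should yield $\mathbf{E}(T)\le \sum_k p_k\log_2(1/p_k)+1$. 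The potential pitfall is boundary/tail behavior where $f\to 0$ slowly (e.g. heavy tails): there I would use boundedness of $f$ near $0$ plus R\'enyi's condition to ensure $\sum_k p_k\log_2(1/p_k)$ converges and the $o(1)$ is genuine; the monotone tail makes the relevant series dominated and the interchange of limits ($\epsilon\downarrow0$ inside the sum) legitimate by monotone/dominated convergence.
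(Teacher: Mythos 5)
There is a genuine gap at your step~(2), and the route through the Han--Hoshi cell-selection comparison cannot be repaired. Your plan needs the inequality $\mathbf{E}(T)\leq \mathcal{E}_{\mathcal{A}_{2\epsilon}^{\star}}(X)+1$ for the procedure that refines $[U_{(t)},U_{(t)}^{+})$ until it lies inside a single cell $[Q_{k-1},Q_k)$, assuming only that the cell probabilities are nonincreasing. That inequality is false. Take three equal cells $p_1=p_2=p_3=1/3$: the breakpoints $1/3$ and $2/3$ are never dyadic, so $\mathbf{P}\{T>t\}=2\cdot 2^{-t}$ for $t\geq 1$ and $\mathbf{E}(T)=3$, while the entropy is $\log_2 3\approx 1.585$, an overhead of about $1.415$. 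The same overhead $3-\log_2 3$ persists with $3\cdot 2^{n}$ equal cells, so this is not an artifact of having few cells; it survives in the $\epsilon\downarrow 0$ regime. The auxiliary inequality you invoke, $\sum_j jx_j/2^{j}\leq x\log_2(1/x)+x$, is also false in general (take $x=2^{1-m}-2^{-M}$ with $M$ large; the excess approaches $2x$), and monotonicity of the sequence $(p_k)$ does not repair it, since the overhead of the cell-selection method depends on the binary expansions of the cumulative sums $Q_k$, not on the ordering of the $p_k$. The saving of the missing $+2$ therefore cannot come from a sharper analysis of the cell-selection stopping rule.

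Where the theorem actually gets its gain is that inversion stops strictly \emph{before} the dyadic interval is contained in a single cell: the criterion is $F^{-1}(U_{(t)}^{+})-F^{-1}(U_{(t)})\leq 2\epsilon$, which typically holds even while $[U_{(t)},U_{(t)}^{+})$ still straddles a breakpoint. The paper's proof works directly with this criterion and uses monotonicity of the \emph{density}, not of the cell probabilities: writing $X_t=F^{-1}(U_{(t)})$, $X_t^{+}=F^{-1}(U_{(t)}^{+})$, one has $2^{-t}=\int_{X_t}^{X_t^{+}}f\geq f(X_t^{+})(X_t^{+}-X_t)$, so $\{T>t\}\subseteq\{f(X_t^{+})<1/(2^{t}2\epsilon)\}$. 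Summing over $t$, the main term $\sum_t\mathbf{P}\{f(X)<1/(2^{t}2\epsilon)\}\leq \mathbf{E}\big(1+\log_2\frac{1}{2\epsilon f(X)}\big)=\log_2(1/\epsilon)+\int f\log_2(1/f)$, and the cross term $\sum_t\mathbf{P}\{f(X_t^{+})<1/(2^{t}2\epsilon)\leq f(X_t)\}$ is at most $\sum_t 2^{-t}\mathds{1}\{2^{-t}\leq 2\epsilon f(0)\}\leq 4\epsilon f(0)=o(1)$ by boundedness of $f$. Your final paragraph's idea of bounding $\sum_t\mathbf{P}\{T>t\}$ pathwise is the right starting point, but you bound $\mathbf{P}\{T>t\}$ by neighborhoods of the breakpoints $Q_k$, which is again the cell-containment event rather than the image-length event; to close the argument you must switch to the image-length criterion and bring in the monotonicity of $f$ as above.
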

\begin{proof}[Proof of theorem \ref{thm_inv_mono_bornee}]
Define $X_{t}=F^{-1}\big(U_{(t)}\big)$, $X_{t}^{+}=F^{-1}\big(U_{(t)}^{+}\big)$ as in figure \ref{fig_inversion}. Then
\begin{align*}
\mathbf{E}(T)&=\sum_{t=0}^{\infty}{\mathbf{P}\big\{X_{t}^{+}-X_{t}>2\epsilon\big\}}\leq \sum_{t=0}^{\infty}{\mathbf{P}\Big\{f(X_{t}^{+})<\frac{1}{2^{t}2\epsilon}\Big\}}\\
%&\quad\Bigg(\text{because }\frac{1}{2^{t}}=\int_{X_{t}}^{X_{t}^{+}}{f}\geq f(X_{t}^{+})(X_{t}^{+}-X_{t})\Bigg)\\
&\quad\leq\sum_{t=0}^{\infty}{\mathbf{P}\Big\{f(X)<\frac{1}{2^{t}2\epsilon}\Big\}}+\sum_{t=0}^{\infty}{\mathbf{P}\Big\{f(X_{t}^{+})<\frac{1}{2^{t}2\epsilon}<f(X)\Big\}}=\textrm{I}+\textrm{II}.
\end{align*}
Now we observe that
\begin{align*}
\textrm{I}&\leq\mathbf{E}\bigg(1+\log_{2}\frac{1}{2\epsilon f(X)}\bigg)=\log_{2}\bigg(\frac{1}{\epsilon}\bigg)+H(f),
\end{align*}
even if $H(f)=\infty$. The theorem follows if we can show that $\textrm{II}=o(1)$. To this end, note that
\begin{displaymath}
\textrm{II}\leq \sum_{t=0}^{\infty}{\mathbf{P}\Big\{f(X_{t}^{+})<\frac{1}{2^{t}2\epsilon}\leq f(X_{t})\Big\}}.
\end{displaymath}
For a fixed value of $t$, we see that $f(X_{t}^{+})<\frac{1}{2^{t}2\epsilon}\leq f(X_{t})$ only if $X$ falls in the interval that ``captures'' the value $\frac{1}{2^{t}2\epsilon}$, if such an interval exists. The probability of each interval is precisely $1\slash 2^{t}$. If $\frac{1}{2^{t}2\epsilon}>f(0)$, then no such interval exists from which the proof is completed since
\begin{align*}
\textrm{II}&\leq \sum_{t=0}^{\infty}{\frac{1}{2^{t}}\mathds{1}\bigg\{t\geq \log_{2}\Big(\frac{1}{2\epsilon f(0)}\Big)\bigg\}}\leq 4\epsilon f(0) = o(1).
\end{align*}
\end{proof}

In the next two sections, we evaluate the expected complexities of the inversion method for specific distributions, namely the exponential distribution with mean one and the standard normal distribution.

\subsubsection{The exponential law}
For the exponential density, the inversion method yields
\begin{displaymath}
\mathbf{E}(T)\leq \log_{2}\bigg(\frac{1}{\epsilon}\bigg)+H(f)+o(1)\leq \log_{2}\bigg(\frac{1}{\epsilon}\bigg)+\log_{2}(e)+o(1),
\end{displaymath}
where $\log_{2}(e)=1.443\ldots$ Flajolet and Saheb \cite{FlaSah_1986} proposed a method for the exponential law that has
\begin{displaymath}
\mathbf{E}(T)=\log_{2}\bigg(\frac{1}{\epsilon}\bigg)+5.4+\varphi(\epsilon)\quad\text{where $|\varphi(\epsilon)|\leq 0.2$ when $\epsilon\to 0^{+}$.}
\end{displaymath}

Using batch generation, we propose a simple method in section \ref{sect_convolution} with expected complexity $-\log_{2}(\epsilon)+1.2353\ldots$ by using a convolutional method for the truncated exponential. An exponential random variable is the sum of a truncated exponential variable and of a geometric random variable. The mantissa part is distributed according to geometric random variable with parameter $1/e$ independent of the fractional part which is distributed as a truncated exponential variable over the interval $(0,1)$.

\subsubsection{The normal law}

For the normal law, Karney \cite{Kar_2016} proposes a method that addresses the variable approximation issue but does not offer explicit bounds. The inversion method yields an explicit upper bound of
\begin{displaymath}
\mathbf{E}(T)=\log_{2}\bigg(\frac{1}{\epsilon}\bigg)+\log_{2}\sqrt{2\pi{}e}+o(1),
\end{displaymath}
but the drawback is that this requires the presence of (an oracle for) $F^{-1}$, the inverse gaussian distribution function. Even the partition method requires a nontrivial oracle, namely $F$. To sidestep this, one can use a slightly more expensive method based on the Box-M\"{u}ller \cite{BoxMul_1958}, which states that the pair of random variables $\big(\sqrt{2E}V_{1},\sqrt{2E}V_{2}\big)$ with $E$ exponential and $(V_1,V_2)$ uniform on the unit circle, provides a standard gaussian in $\mathbb{R}^{2}$ of zero mean and unit covariance matrix. The random variable $\sqrt{2E}$ is a Maxwell random variable, that is, it has density $re^{-r^{2}\slash 2}=f_{M}(r)$ for $r>0$. The differential entropy of $f_{M}$ is given by
\begin{align*}
H(f_{M})&=\int_{0}^{\infty}{f_{M}(r)\log_{2}\bigg(\frac{1}{f_{M}(r)}\bigg)dr}=\frac{1}{\log2}\int_{0}^{\infty}{f_{M}(r)\bigg(\log\bigg(\frac{1}{r}\bigg)+\frac{r^{2}}{2}\bigg)dr}\\
&=\frac{1}{\log2}\bigg(\frac{1}{2}\bigg(\gamma-\log(2)\bigg)+1\bigg)=1.359068\ldots,
\end{align*}
where $\gamma=0.577215\ldots$ is the Euler-Mascheroni constant.

We sketch the procedure, which also serves as an example for more complicated random variate generation problems. We choose $d=2$ and $p=\infty$ so that the accuracy for two independent normals is $\epsilon$. We first generate a Maxwell random variable $M$ by inversion since $F_{M}(r)=1-e^{-\frac{r^{2}}{2}}$ and $F_{M}^{-1}(u)=\sqrt{-2\log(1-u)}$. The precision needed for $M$ is $\frac{\epsilon}{2}$. The Maxwell law is unimodal with mode at $r=1$. Its left piece has probability $1-\frac{1}{\sqrt{e}}$. So we pick a piece randomly using on average no more than two bits, and then we apply inversion on the appropriate piece. By theorem \ref{thm_inv_mono_bornee}, we use $T_1$ random bits where
\begin{displaymath}
\mathbf{E}(T_1)\leq \log_{2}\bigg(\frac{2}{\epsilon}\bigg)+H(f_{M})+2+o(1).
\end{displaymath}
The generated approximation is called $M'$.

Second, we generate a uniform random variable $U\in[0,2\pi)$ with accuracy $\frac{\epsilon\slash 2}{M'+\big(\epsilon\slash 2\big)}$. The generated value $U'\in[0,2\pi)$ has $|U-U'|\leq \frac{\epsilon\slash 2}{M'+(\epsilon\slash 2)}$. Since $U$ has differential entropy $\log_{2}(2\pi)$, and by using the Lebesgue's theorem on dominated convergence, we have that
\begin{align*}
\mathbf{E}(T_{2})&\leq \mathbf{E}\bigg(\log_{2}\bigg(\frac{M'+\big(\epsilon\slash 2\big)}{\epsilon\slash 2}\bigg)\bigg)+\log_{2}(2\pi)+o(1)\\
&\leq \mathbf{E}\bigg(\log_{2}\bigg(\frac{M+\big(\epsilon\slash 2\big)}{\epsilon\slash 2}\bigg)\bigg)+\log_{2}(2\pi)+o(1)\\
&=\log_{2}\bigg(\frac{2}{\epsilon}\bigg)+\mathbf{E}\big(\log_{2}(M)\big)+\log_{2}(2\pi)+o(1).
\end{align*}
Finally we then return $\big(M'\sin(U'), M'\cos(U')\big)$ for which both $|M'\sin(U')-M\sin(U)|\leq \epsilon$ and $|M'\cos(U')-M\cos(U)|\leq \epsilon$ holds simultaneously. Indeed, we observe that
\begin{displaymath}
|\sin(U')-\sin(U)|\leq |U-U'|\leq \frac{\epsilon\slash 2}{M'+\big(\epsilon\slash 2\big)}
\end{displaymath}
and similarly for the cosine. Next, we have that
\begin{align*}
|M'\sin(U')-M\sin(U)|&\leq |M'-M||\sin(U')|+M|\sin(U')-\sin(U)|\\
&\leq |M'-M|+M|U'-U|\\
&\leq \frac{\epsilon}{2}+M\frac{\epsilon\slash 2}{M'+\big(\epsilon\slash 2\big)}\leq \epsilon.
\end{align*}
Combining everything together, the total expected number of bits is not more than
\begin{align}
\mathbf{E}(T_1)+\mathbf{E}(T_2)&\leq 2\log_{2}\bigg(\frac{2}{\epsilon}\bigg)+\mathbf{E}\big(\log_{2}(M)\big)+H(f_{M})+2+\log_{2}(2\pi)+o(1)\nonumber\\
&=2\log_{2}\bigg(\frac{1}{\epsilon}\bigg)+2+\log_{2}(2\pi{}e)+o(1)\nonumber\\
&=2\log_{2}\bigg(\frac{1}{\epsilon}\bigg)+ 6.094191\ldots +o(1)\label{peanut_buster}.
\end{align}
We point out that the difference between (\ref{peanut_buster}) and the lower bound to generate two independent standard normal random variables is less that $4+o(1)$.

\subsection{Bisection}\label{sect_bisection}

Bisection is another the building block for continuous random variate generation. We develop a flavour of bisection that is convenient for our needs as exposed in section \ref{sect_vN}. Suppose we have a continuous distribution $F$ over a compact interval $[a,b]$ for $b>a$. A bisection method is a method that halves the support of $F$. A bisection method yields very naturally a full binary tree, that is, one in which all internal nodes have two children. In comparison to the inversion method from section \ref{sect_inversion}, the distribution is halved at every call to \texttt{RandomBit} and for which the resulting binary tree is not necessarily full.

We present algorithm \ref{bisecto_algo} from Devroye and Gravel \cite{DevGra_2017} which assumes an access to both $F$ and $F^{-1}$. The tree structure beneath algorithm \ref{bisecto_algo} is such that each internal node corresponds to a subinterval of $[a,b]$ of length greater than $2\epsilon$, the root represents the original interval $[a,b]$ of volume $|b-a|$, and leaves represent intervals of length less than or equal to $2\epsilon$.

\begin{breakablealgorithm}
\caption{A bisection algorithm (Devroye and Gravel \cite{DevGra_2017})}\label{bisecto_algo}
\begin{algorithmic}[1]
\Require Algorithms to compute $F$ and $F^{-1}$
\Require $a$, $b>a$, and $\epsilon>0$
\Ensure $X_{\epsilon}$ such that $|X-X_{\epsilon}|<\epsilon$
\State $I\leftarrow [a,b]$
\State $J\leftarrow [F(a),F(b)]=[0,1]$
\Loop
\If {$|I|= b-a\leq 2\epsilon$}
\State $X_{\epsilon}\leftarrow \frac{a+b}{2}$
\State{\textbf{Return} $X_{\epsilon}$} \Comment{Exit}
\Else
\State $B\leftarrow \texttt{RandomBit}$
\State $z\leftarrow F^{-1}\Big(\frac{F(a)+F(b)}{2}\Big)$
\If {$B=0$}
\State $I\leftarrow[a,z]$
\State $J\leftarrow\Big[F(a),\frac{F(a)+F(b)}{2}\Big]=\big[F(a),F(z)\big]$
\Else\Comment{$B=1$}%[$B=1$]
\State $I\leftarrow[z,b]$
\State $J\leftarrow\Big[\frac{F(a)+F(b)}{2}, F(b)\Big]=\big[F(z),F(b)\big]$
\EndIf
\EndIf
\EndLoop
\end{algorithmic}
\end{breakablealgorithm}

\begin{theorem}\label{thm_bisect}
Let $X$ be a continuous distribution with distribution $F:[a,b]\to [0,1]$ and let $T$ be the number of calls to \texttt{RandomBit} in algorithm \ref{bisecto_algo}.
\begin{enumerate}
\item[(i)] Upon returning the center of the halting interval, that is $X_{\epsilon}$, the random variables $X$ and $X_{\epsilon}$ are coupled such that $|X-X_{\epsilon}|\leq \epsilon$.
\item[(ii)] The expected complexity $\mathbf{E}(T)$ is such that
\begin{equation}
\mathbf{E}(T)\leq 3+\log_{2}^{+}\bigg(\frac{b-a}{2\epsilon}\bigg).\label{caca_boulettes}
\end{equation}
\end{enumerate}
\end{theorem}

Before the proof of theorem \ref{thm_bisect}, we observe that the bound (\ref{caca_boulettes}) cannot be improved in general by more than $3$ bits. Indeed, we just consider the uniform distribution on $[a,b]$. Since all intervals have length to $\frac{b-a}{2^{i}}$ after $i$ calls to \texttt{RandomBit}, we have
\begin{displaymath}
T=\min\bigg\{i\geq 0\colon\frac{b-a}{2^{i}}\leq 2\epsilon\bigg\}=\max\bigg\{0,\bigg\lceil\log_{2}\frac{b-a}{2\epsilon}\bigg\rceil\bigg\}.
\end{displaymath}

\begin{proof}[Proof of theorem \ref{thm_bisect}]
To show part (i), as pointed out before, the bisection method yields a full binary tree. Each internal node corresponds to a subinterval of $[a,b]$ of length greater than $2\epsilon$, the root represents the original interval $[a,b]$ of length $L$, and halting-leaves represent intervals of length less than or equal to $2\epsilon$.

At every call of \texttt{RandomBit}, the random binary choice picks either $[a,z]$ or $[z,b]$ as an interval with probability $\frac{1}{2}$. Define $X$ as the limit when the algorithm runs without halting. Upon exit, $X_{\epsilon}$ defined to be the midpoint of an interval of length at most $2\epsilon$ that also contains $X$, we must have $|X-X_{\epsilon}|\leq\epsilon$. This shows part (i).

To prove part (ii), denote by $\mathcal{L}$ and by $\mathcal{I}$, the set of leaves and the set of internal nodes of the underlying full binary tree, respectively. The depth of a node $u\in \mathcal{I}\cup\mathcal{L}$ is denoted by $\texttt{d}(u)$. It is of course possible that $\mathcal{I}$ and $\mathcal{L}$ are both infinite. Because the leaves form a non-overlapping covering of $[a,b]$, the random walk produced by the algorithm \ref{bisecto_algo} always stops. For all possible random walks, we have that
\begin{equation}
\sum_{u\in\mathcal{L}}{\frac{1}{2^{\mathtt{d}(u)}}}\leq 1\quad\text{and}\quad\mathbf{E}(T)=\sum_{u\in\mathcal{L}}{\frac{\mathtt{d}(u)}{2^{\mathtt{d}(u)}}}.\label{citozine1}
\end{equation}
The inequality from (\ref{citozine1}) follows also from Kraft's inequality.

We show a chain of inequalities in order to complete the proof, but we introduce some notation before. We denote by $N_{\ell}$ the number of internal nodes at depth $\ell$ in the tree, that is,
\begin{displaymath}
N_{\ell}=\sum_{v\in\mathcal{I}}{\mathds{1}\{\mathtt{d}(v)=\ell\}}\quad\text{for $\ell\geq 0$}.
\end{displaymath}
We denote also by $A(u)$ the set of ancestors of $u$ and by $D(v)$ the set of descendants of $v$. For any node $u$, we have $\mathtt{d}(u)=\sum{\mathds{1}\{v\in A(u)\setminus\{u\}\}}$. We then deduce the following inequalities:
\begin{align*}
&\mathbf{E}(T)=\sum_{u\in\mathcal{L}}\mathop{\sum_{v \in A(u)}}_{v\neq u}{\frac{1}{2^{\mathtt{d}(v)}}\frac{1}{2^{\mathtt{d}(u)-\mathtt{d}(v)}}}=\sum_{v\in\mathcal{I}}{\frac{1}{2^{\mathtt{d}(v)}}}\mathop{\sum_{u\in D(v)}}_{u\in\mathcal{L}}{\frac{1}{2^{\mathtt{d}(u)-\mathtt{d}(v)}}}\\
&\qquad\quad\leq \sum_{v\in\mathcal{I}}{\frac{1}{2^{\mathtt{d}(v)}}}=\sum_{\ell=0}^{\infty}{\frac{N_{\ell}}{2^{\ell}}}\leq \sum_{\ell=0}^{\infty}{\frac{b-a}{2^{\ell}}}
\end{align*}
The last inequality follows from the fact that, at depth $\ell$, all intervals associated with nodes are disjoint and each interval node corresponds to an interval strictly larger than $2\epsilon$. Also because we have a binary tree, we necessarily have that $N_{\ell}\leq 2^{\ell}$ from which $N_{\ell}\leq \min\big\{\big\lfloor\frac{b-a}{2\epsilon}\big\rfloor,2^{\ell}\big\}$. Let $\ell_0>0$ be the threshold depth for which $b-a<2\epsilon$, that is
\begin{displaymath}
\ell_{0}=\max\bigg\{0,\bigg\lceil\log_{2}\frac{b-a}{2\epsilon}\bigg\rceil\bigg\}.
\end{displaymath}
The proof is completed because
\begin{displaymath}
\sum_{\ell=0}^{\infty}{\frac{N_{\ell}}{2^{\ell}}}\leq\sum_{\ell=0}^{\ell_{0}}{1}+\sum_{\ell=\ell_{0}+1}^{\infty}{\bigg\lfloor\frac{b-a}{2\epsilon}\bigg\rfloor\frac{1}{2^{\ell}}}=\ell_{0}+1+\bigg\lfloor\frac{b-a}{2\epsilon}\bigg\rfloor\frac{1}{2^{\ell_0}}\\
\leq\log_{2}^{+}\bigg(\frac{b-a}{2\epsilon}\bigg)+3.
\end{displaymath}
\end{proof}

\subsection{Von Neumann's sampling algorithm extended}\label{sect_vN}

This section discusses Von Neumann's \cite{vN_1951} sampling method which we extend to: the context of multiple precision arithmetic, and the fact that have an access to a discrete source of unbiased i.i.d.\ random bits. The work of this section can be found in Devroye and Gravel \cite{DevGra_2017}. In this section, distributions are necessarily absolutely continuous. Given two densities $f$ and $g$ with identical support and such that $f(x)\leq Cg(x)$ for some constant $C>1$, Von Neumann's original idea is to test whether $UCg(X)\leq f(X)$ or not given that $U$ is uniformly distributed on the interval $(0,1)$. We sometimes refer to $f$ as the targeted density and to $g$ as the easy density. We observe that one must have a way to sample $g$ and hence the qualifier easy. Also when $f$ has a compact support, the easy density can be the uniform density in which case the cutoff inequality is $f(x)\leq C$. In theory, the test supposed the capability to compare random quantities with infinite precision and store unbounded quantities such as $U$ and $X$. Current classical algorithms approximate the quantities involved in the evaluation of the test. As more and more random bits are obtained through $\texttt{RandomBit}$, more and more accurate quantities can be computed until a decision can be made exactly without computational error for the inequality test, and up to the desired accuracy for the random output.

We split our extension in two cases: the compact support case in section \ref{sect_vN_compact} and the non-compact support case in section \ref{sect_vn_non_compact}. In both cases, we assume the ability to compute infima and suprema or their ratios over given compact subset of the support. This is a very reasonable assumption given the current state of the art in terms of libraries and software to compute with multiple precision as briefly mentioned in section \ref{sect_concrete_imple}. We use a quadtree as a data structure to represent the partition of the space under the test $UCg(X)\leq f(X)$. Our extension is guaranteed to deliver an output with the requested desired accuracy. Our approach assumes that $f$ is Riemann-integrable. We derive the expected complexity of the number of random bits and indeed observe that it is near the universal lower bounds from section \ref{sect_cts_gen}. We may use the noun complements Von Neumann or rejection interchangeably when referring to algorithms or methods in this section.

\subsubsection{An algorithm for densities with compact support}\label{sect_vN_compact}

In this section, we assume that $f$ is Riemann-integrable and supported on $[0,1]^{d}$ which is equivalent to the assumption that $f$ is almost-everywhere continuous, bounded, and supported on $[0,1]^{d}$. Our algorithm requires a method, denoted by $\mathtt{M}$, such that on input $R\subseteq[0,1]^{d}$ computes
\begin{equation}
\inf\big\{f(x)\colon x\in R\big\}\quad\text{and}\quad\sup\big\{f(x)\colon x\in R\big\}\quad\text{for any $R\subseteq\mathbb{R}^{d}$.}\label{inf_sup_machin}
\end{equation}
If $R=\{x\}$, the method \texttt{M} returns $f(x)$. Without the possibility to compute quantities from (\ref{inf_sup_machin}), sampling using the rejection method seems impossible in total generality. Without loss of generality, the description of $f$ can be hardcoded in \texttt{M}. An invocation of \texttt{M} returns $C=\sup\{f(x)\colon x\in[0,1]^{d}\}$ which is a finite number by assumption since Riemann-integrable functions are bounded by definition. At once, we have a simple bound for applying the original Von Neumann sampling algorithm \ref{orig_vnabc} given as follow, and which is subsequently modified into algorithm \ref{algo_reject_compact} to match with more practical and realistic scenarios.

\begin{breakablealgorithm}
\caption{Von Neumann's original rejection algorithm}\label{orig_vnabc}
\begin{algorithmic}[1]
\Loop
\State Generate $X$ uniformly on $[0,1]^{d}$.
\State Generate $U$ uniformly on $[0,1]$.
\If{$UC\leq f(X)$}
\State{\textbf{Return} $X$}
\EndIf
\EndLoop
\end{algorithmic}
\end{breakablealgorithm}

Since we cannot generate $X$ and $U$ with infinite precision, at least two modifications are needed. One modification is to take into account the precision $\epsilon$ desired for $X$, and the other modification is to take into account the discreteness of the random source. We consider the rectangle $[0,1]^{d}\times [0,C]$, denoted by $R_0$,  and its $2^{d+1}$ sub-rectangles defined by the $2^{d+1}$ quadrants centered at $\big(\frac{1}{2},\frac{1}{2},\ldots,\frac{1}{2},\frac{C}{2}\big)$. In the data structure literature, the latter partition, when applied recursively, leads to a quadtree as in Samet \cite{Sam_book_2006} for instance. Let us denote by $Q$ the infinite size quadtree obtained by recursively refining $R_0$. A rectangle is recursively split around its center point, and so forth, as illustrated by figures \ref{fig_2dim_decomp_for_quadtree} and \ref{fig_quadtree_decomp_example}.

\begin{figure}
\begin{centering}
\includegraphics[scale=0.75]{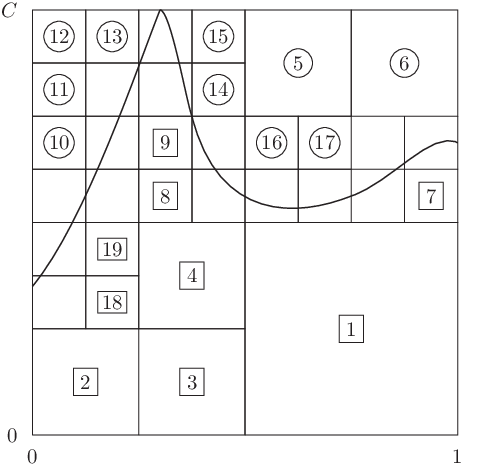}
\caption{A possible decomposition of $[0,1]\times [0,C]$ after $4$ divisions.}\label{fig_2dim_decomp_for_quadtree}
\end{centering}
\end{figure}

\begin{figure}
\begin{centering}
\includegraphics[scale=0.75]{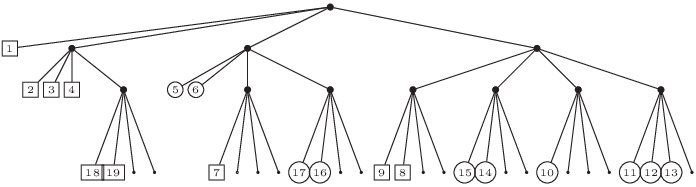}
\caption{Quadtree for decomposition on figure \ref{fig_2dim_decomp_for_quadtree}.}\label{fig_quadtree_decomp_example}
\end{centering}
\end{figure}

In Von Neumann's algorithm, to decide if $UC\leq f(X)$ for $(X,U)\in R_{0}$ is equivalent to find a rectangle $R$ in the quadtree $Q$ with the property that either
\begin{align*}
&R\subseteq \big\{(x,y)\in R_{0}\colon y\leq f(x)\big\}\quad\text{(we accept since $UC\leq f(X)$)}\\
\text{or}&\\
&R\subseteq \big\{(x,y)\in R_{0}\colon y> f(x)\big\}\quad\text{(we reject since $UC> f(X)$)}.
\end{align*}
However, the former and the latter must be done carefully without overlapping rectangles so as to trim $Q$ such that leaves are associated to halting rectangles. Thus we have
\begin{align}
&\big\{(x,y)\in R_{0}\colon y\leq f(x)\big\}=\bigcup\{R\colon\text{$R$ is an accepting rectangle}\}\label{accept_rect}\\
\text{and}&\nonumber\\
&\big\{(x,y)\in R_{0}\colon y> f(x)\big\}=\bigcup\{R\colon\text{$R$ is a rejecting rectangle}\}.\label{reject_rect}
\end{align}

Below, we will see that Riemann-integrability of $f$ suffices for the decomposition given by (\ref{accept_rect}) and (\ref{reject_rect}). If a rejecting rectangle is found, then the procedure is repeated. If an accepting rectangle is found, say $\prod_{i=1}^{d}{[a_i,b_i]}\times [\alpha,\beta]$ with $0<b_i-a_i<1$, then it suffices to generate an $\epsilon$-accurate uniform random variables over the projection $\prod_{i=1}^{d}{[a_i,b_i]}$ by using bisection, as presented in section \ref{sect_bisection}, over each dimension. We recall from theorem \ref{thm_bisect} that the expected complexity for the bisection part of the algorithm is
\begin{displaymath}
3+\sum_{i=1}^{d}{\log_{2}^{+}\bigg(\frac{b_i-a_i}{\epsilon}\bigg)}\leq 3+d\log_{2}^{+}\bigg(\frac{1}{\epsilon}\bigg).
\end{displaymath}
We note that the checks
\begin{align}
&R\subseteq \{(x,y)\in R_0\colon f(x)\leq y\}\label{check_one}\\
\text{and}&\nonumber\\
&R\subseteq \{(x,y)\in R_0\colon f(x) > y\}\label{check_two}
\end{align}
can be carried out using method \texttt{M}. For convenience, let us write
\begin{align*}
&f^{+}=\sup\big\{f(x)\hspace{1pt}\colon (x,y)\in R\text{ for some $y$}\big\},\hspace{2pt} y^{+}=\sup\big\{y\hspace{1pt}\colon (x,y)\in R\text{ for some $x$}\big\},\hspace{2pt}\\
&f^{-}=\inf\big\{f(x)\hspace{1pt}\colon(x,y)\in R\text{ for some $y$}\big\},\hspace{4pt}
y^{-}=\inf\big\{y\hspace{1pt}\colon(x,y)\in R\text{ for some $x$}\big\}.
\end{align*}
Then (\ref{check_one}) holds if $f^{+}\leq y^{-}$, and (\ref{check_two}) holds if $f^{-}\geq y^{+}$.

%\begin{flushright}
%\color{gray}{\textbf{The algorithm is at the beginning of the next page.}}\color{black}{}
%\end{flushright}
%\vfill
%\clearpage

\begin{breakablealgorithm}
\caption{Extension of algorithm \ref{orig_vnabc}}\label{algo_reject_compact}
\begin{algorithmic}[1]
\Require{\texttt{M}, $\epsilon$}\Comment{Note that \texttt{M} on input $x$ returns $f(x)$.}
\Ensure{$X_{\epsilon}$ such that $\|X-X_{\epsilon}\|_{\infty}<\epsilon$}\Comment{$X$ has density $f$.}
\State{$C\leftarrow\sup\{f(x)\colon x\in[0,1]^{d}\}$}\Comment{By using \texttt{M}}
\State{$R\leftarrow R_0=[0,1]^{d}\times\big[0,C\big]$}\label{init_line_reject_algo}
\State{$\textrm{Decision}\leftarrow \textrm{None}$}
\Repeat
\State{$f^{-}\leftarrow\inf\{f(x)\colon (x,y)\in R\quad\text{for some $y$}\}$}\Comment{By using \texttt{M}}
\State{$f^{+}\leftarrow\sup\{f(x)\colon (x,y)\in R\quad\text{for some $y$}\}$}\Comment{By using \texttt{M}}
\State{$y^{-}\leftarrow\inf\{y\colon (x,y)\in R\quad\text{for some $x$}\big\}$}
\State{$y^{+}\leftarrow\sup\{y\colon (x,y)\in R\quad\text{for some $x$}\big\}$}
\If{$f^{+}\leq y^{-}$}\Comment{$R\subseteq \{(x,y)\in R_0 \colon f(x)\leq y\}$}
\State{$\textrm{Decision}\leftarrow \textrm{Accept}$}
\ElsIf{$f^{-}\geq y^{+}$}\Comment{$R\subseteq \{(x,y)\in R_0 \colon f(x)> y\}$}
\State{$\textrm{Decision}\leftarrow \textrm{Reject}$}
\Else\Comment{Update $R$}
\State{$z\leftarrow \text{center of $R$}$}
\State{Select uniformly one vertex $v$ of $R$ among its $2^{d+1}$ vertices.}\Comment{This requires $d+1$ calls to \texttt{RandomBit}.}
\State{Update $R$ by selecting the unique rectangle containing the line segment made from $v$ and $z$.}
\EndIf
\Until{$\textrm{Decision} \neq \textrm{None}$}
\If{$\textrm{Decision}=\textrm{Reject}$}
\State{Goto line (\ref{init_line_reject_algo})}\Comment{Restart the algorithm.}
\Else
\State{$R^{*}\leftarrow \{x\colon (x,y)\in R\quad\text{for some $y$}\}$}\Comment{$R^{*}$ is the projection of $R$ onto its first $d$ coordinates.}
\State{Use bisection to generate $X_{\epsilon}$.}\label{bisecto_sub_routine_line_reject_algo}\Comment{Apply algorithm \ref{bisecto_algo} on every one dimensional subspace of $R^{*}$.}
\State{\textbf{Return} $X_{\epsilon}$}
\EndIf
\end{algorithmic}
\end{breakablealgorithm}

\begin{theorem}\label{thm_quadtree}
Let $X$ be a continuous random variable on $[0,1]^{d}$ with a Riemann-integrable density $f$. Algorithm \ref{algo_reject_compact} halts with probability one and outputs $X_{\epsilon}$ such that $\|X-X_{\epsilon}\|_{\infty}<\epsilon$.
\end{theorem}
Theorem \ref{thm_quadtree} says that algorithm \ref{algo_reject_compact} is correct. We analyze the expected complexity of algorithm \ref{algo_reject_compact} more in details after the following proof. However we point out that in order to show that algorithm \ref{algo_reject_compact} halts, we end up to obtain indirectly its complexity excluding the bisection part of it.
\begin{proof}[Proof of theorem \ref{thm_quadtree}]
Initially, the hyper-rectangle is $[0,1]^{d}\times [0,\sup{f}]$. The quadtree underlying algorithm \ref{algo_reject_compact} allows to partition the initial hyper-rectangle into a collection of smaller hyper-rectangles for which either (\ref{accept_rect}) or (\ref{reject_rect}) holds. Let $T$ be the number of iterations of the algorithm before halting. In other words, $T$ is the depth of the leaf reached upon halting by randomly walking down the quadtree. We show that $\lim_{k\to\infty}{\mathbf{P}\{T>k\}}=0$
and therefore that either (\ref{accept_rect}) or (\ref{reject_rect}) holds. The partition is made of $2^{(d+1)k}$ hyper-rectangles, each of Lebesgue measure $\squash{\frac{1}{2^{k}}}\squash{\frac{1}{2^{k}}}\cdots\squash{\frac{1}{2^{k}}}\squash{\frac{C}{2^{k}}}$. Let $N_{k}$ be the number of cells in the partition for which we cannot decide, that is, for which
\begin{align*}
&\sup\{f(x)\colon (x,y)\in R\text{ for some $y$}\}\geq \inf\{y\colon (x,y)\in R\text{ for some $x$}\}\\
\text{and}&\\
&\inf\{f(x)\colon (x,y)\in R\text{ for some $x$}\}\leq \sup\{y\colon (x,y)\in R\text{ for some $x$}\}.
\end{align*}
Then we have that
\begin{equation}
\mathbf{P}\{T>k\}=\frac{N_k}{2^{(d+1)k}}.\label{von_Neumann_tail}
\end{equation}
For every rectangle $R$, let us write $R^{\star}$ as the projection of $R$ onto $\mathbb{R}^{d}$, that is, $R^{\star}=\{x\colon (x,y)\in R\text{ for some $y$}\}$. For a fixed one-dimensional subspace, group the $2^{k}$ cells $R^{\star}$ with the same projection (equivalence classes), and verify that, among these $2^{k}$ cells, the number of cells that intersect the graph of $f$ is at most
\begin{displaymath}
\bigg(\frac{\sup\{f(x)\colon x\in R^{\star}\}-\inf\{f(x)\colon x\in R^{\star}\}}{C}\bigg)2^{k}+2.
\end{displaymath}
Let us write $\mathcal{P}_{k}^{\star}$ as the collection of all projections $R^{\star}$ after $k$ iterations. Since there are $2^{dk}$ rectangles $R^{\star}$, we have that
\begin{equation}
N_{k}\leq \sum_{R^{\star}\in\mathcal{P}_{k}^{\star}}{\bigg(\bigg(\frac{\sup\{f(x)\colon x\in R^{\star}\}-\inf\{f(x)\colon x\in R^{\star}\}}{C}\bigg)2^{k}+2\bigg)}.\label{upbnd_Nk}
\end{equation}
Consider the Riemann approximations, $I^{-}$ and $I^{+}$, for the integral of $f$ which are given by
\begin{equation}
I_{k}^{-}=\sum_{R^{\star}\in\mathcal{P}_{k}^{\star}}{\inf\{f(x)\colon x\in R^{\star}\}\lambda(R^{\star})}\text{  and  } I_{k}^{+}=\sum_{R^{\star}\in\mathcal{P}_{k}^{\star}}{\sup\{f(x)\colon x\in R^{\star}\}\lambda(R^{\star})},\label{riemann_approx}
\end{equation}
where $\lambda(R^{\star})=\frac{1}{2^{dk}}$. Combining (\ref{von_Neumann_tail}), (\ref{upbnd_Nk}) and (\ref{riemann_approx}), we finally obtain that
\begin{equation}
\mathbf{P}\{T>k\}\leq \frac{2}{2^{k}}+\frac{I_{k}^{+}-I_{k}^{-}}{C},\label{non_halting_prob_vn}
\end{equation}
and by the Riemann integrability of $f$, the latter (\ref{non_halting_prob_vn}) tends to $0$ as $k\to\infty$ which implies that the algorithm halts with either accepting or rejecting the rectangle.

The last part of the theorem which states that $X_{\epsilon}$ is such that $\|X-X_{\epsilon}\|_{\infty}<\epsilon$ follows from the correctness of algorithm \ref{bisecto_algo} in section \ref{sect_bisection}.
\end{proof}

\begin{remark}
For the complexity of algorithm \ref{algo_reject_compact}, we need to consider the total number of trials before deciding that we denote by $N$ here. Let $T_i$ is the number of iterations in the $i$-th trial. The random variables $T_i$ are clearly i.i.d.\ The number of random bits used is $(d+1)\sum_{i=1}^{N}{T_i}$. Since $\mathbf{E}(N)=C=\sup\{f(x)\colon x\in R_0\}$, the expected number of random bits, excluding the bisection phase, is $C(d+1)\mathbf{E}(T_1)$.
\end{remark}

We call $f$ a monotone density on $[0,1]^{d}$ if it decreases along at least one of the dimensions, that is, there exists $i\in\{1,\ldots,d\}$ such that for all vectors $(x_1,\ldots,x_i,\ldots,x_d)$, $(x_1,\ldots,x'_i,\ldots,x_d)\in[0,1]^{d}$ such that $x_i\leq x'_i$, then $f(x_1,\ldots,x_i,\ldots,x_d)\geq f(x_1,\ldots,x'_i,\ldots,x_d)$.
\begin{corollary}\label{cor_mono_dense}
Let $f$ be a monotone density, and let $T$ as before, then we have
\begin{displaymath}
\mathbf{P}\{T>k\}=\frac{N_k}{2^{(d+1)k}}\leq\frac{2}{2^{k}}\quad\text{for $k\geq 0$, and thus}\quad\mathbf{E}(T)=\sum_{k=0}^{\infty}{\mathbf{P}\{T>k\}}\leq 4.
\end{displaymath}
In other words, for monotone densities, the inner loop of algorithm \ref{algo_reject_compact} has a guaranteed uniform performance.
\end{corollary}
\begin{proof}[Proof of corollary \ref{cor_mono_dense}]
As before, let $N_{k}$ be the number of cells at level $k$ that are visited by $f$, that is, that are intersecting the graph of $f$. Therefore we have $N_{k}\leq 2\cdot2^{k}\cdot2^{(d-1)k}$ because the domain of $f$ is divided into $2^{dk}$ cells and the $2^{k}$ cells along the $i^{\text{th}}$ dimension yields a walk which is at most of length $2\cdot2^{k}$ as illustrated on figure \ref{fig_mono_curve}.
\end{proof}

\begin{figure}
\begin{centering}
\includegraphics[scale=0.75]{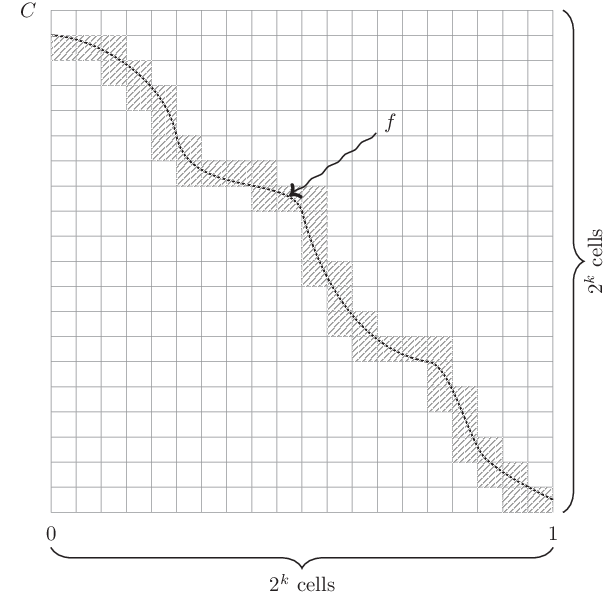}
\caption{The number of cells visited by the monotone curve $f$ is at most $2\cdot2^{k}$ cells.}\label{fig_mono_curve}
\end{centering}
\end{figure}
As noted earlier, for any coordinate-wise monotone density on $[0,1]^{d}$, we have $\mathbf{E}(T)\leq 4$. However, for general Riemann-integrable densities we cannot insure that $\mathbf{E}(T)$ converges, and we address this point hereafter.
\begin{theorem}\label{thm_compact_case}
Let $f$ be a Riemann-integrable probability density function defined over $[0,1]^{d}$ and let $C$ be such that $C=\sup\{f(x)\colon x\in[0,1]^{d}\}$.
\begin{enumerate}
\item If $f$ is monotone in at least one coordinate, then the expected number of perfect coin flips required to generate an $\epsilon$-accurate approximation is not more than $4C(d+1)+3+d\log_{2}^{+}\big(\frac{1}{2\epsilon}\big)$.
\item If $I_{k}^{+}$ and $I_{k}^{-}$ are the Riemann approximations for an equally spaced partition with parts of size $2^{dk}$, that is, each coordinate is split into $2^{k}$ equal intervals, then the expected number of bits needed to generate an $\epsilon$-approximation is not more than $4C(d+1)+(d+1)\Delta(f)+3+d\log_{2}^{+}\big(\frac{1}{2\epsilon}\big)$ where $\Delta(f)= \sum_{k=0}^{\infty}{\big(I_{k}^{+}-I_{k}^{-}\big)}$.
\end{enumerate}
\end{theorem}

\begin{proof}[Proof of theorem \ref{thm_compact_case}]
Just recall the estimates of $\mathbf{E}(T)$ obtained above and recall the upper bound $\mathbf{P}\{T>k\}$ in terms of $I_{k}^{+}-I_{k}^{-}$.
\end{proof}

\begin{remark}
Part (2) of the theorem \ref{thm_compact_case} is useful only if $\Delta(f)<\infty$. For most densities, we have $\Delta(f)<\infty$, and simple sufficient condition is obtained using an argument involving the modulus of continuity as in \cite{Whi_Wat_1996} defined by
\begin{displaymath}
\omega_{f}(\delta)=\sup\{|f(x)-f(y)|\colon \|x-y\|_{\infty}\leq\delta\}\quad\text{for $\delta>0$}.
\end{displaymath}
We observe that
\begin{align*}
&\sum_{k=0}^{\infty}{\omega_{f}\bigg(\frac{\sqrt{d}}{2^{k}}\bigg)}<\infty\quad\Longrightarrow\quad \Delta(f)<\infty,\\
\text{because}\\
&I_{k}^{+}-I_{k}^{-}=\frac{1}{2^{dk}}\sum_{R^{\star}\in\mathcal{P}_{k}^{\star}}{\big(\sup\{f(x)\colon x\in R^{\star}\}-\inf\{f(x)\colon x\in R^{\star}\}\big)}\\
&\quad\leq \frac{1}{2^{dk}}\sum_{R^{\star}\in\mathcal{P}_{k}^{\star}}{\sup\{|f(x)-f(x')|\colon x,x'\in R^{\star}\}}\leq \omega_{f}\bigg(\frac{\sqrt{d}}{2^{k}}\bigg).
\end{align*}
To conclude this remark, it suffices that as $\delta\to 0^{+}$, $\omega(\delta)=O\big(\rfrac{1}{\log^{1+\alpha}(\delta^{-1})}\big)$ or $\omega(\delta)=O(\delta^{\alpha})$ for some $\alpha>0$.
\end{remark}

\begin{remark}
The expected number of calls to \texttt{RandomBit} behaves as $d\log_{2}^{+}\big(\frac{1}{\epsilon}\big)+O(1)$ as $\epsilon\to 0^{+}$, and thus matches the lower bound mentioned earlier for a Riemann-integrable density on a compact support.
\end{remark}

\subsubsection{An algorithm for densities with non-compact support}\label{sect_vn_non_compact}

Given a density $f:I\to [0,1]$ with $I\subseteq \mathbb{R}^{d}$, we use generally use the rejection method when we know a density $g:I\to[0,1]$ for which random variate generation is ``easy'' and for which we know a constant $C>1$ such that $\sup\big\{\frac{f(x)}{g(x)}\colon x\in\mathbb{R}\big\}=C<\infty$. The former affirmation is especially relevant when $I$ is not compact. When $I$ is compact, $g$ can be the uniform density most often. We recall Von Neumann's method in algorithm \ref{gen_algo}.

\begin{breakablealgorithm}
\caption{General rejection algorithm}\label{gen_algo}
\begin{algorithmic}
\Loop
\State Generate $X$ with density $g$
\State Generate $U$ uniformly on $[0,1]$
\If{$Cg(X)U<f(X)$}
\State{\textbf{Return} $X$}
\EndIf
\EndLoop
\end{algorithmic}
\end{breakablealgorithm}
We offer a generalization of algorithm \ref{gen_algo} under certain assumptions:
\begin{enumerate}
\item[1.] Assume for now that $d=1$, and that we can compute both $G$ and $G^{-1}$, where $G$ is the c.d.f.\ for $g$.
\item[2.] Assume furthermore that, for all $R\subset I\subseteq \mathbb{R}$, the subroutine \texttt{M} can compute $\sup\big\{\frac{f(x)}{g(x)}\colon x\in R\big\}$ and $\inf\big\{\frac{f(x)}{g(x)}\colon x\in R\big\}$.
\end{enumerate}
We observe that it is possible to do not modify \texttt{M} in (2) just above, but we opt to take the more convenient approach. We explain and analyze in the remainder of this section, that by a suitable transformation, it is necessary only to replace line (\ref{bisecto_sub_routine_line_reject_algo}) from algorithm \ref{sect_vN_compact} to obtain a rejection method for the non-compact case. We also detail in the remainder of this section the replacement in question which is algorithm \ref{algo_inv_bisect_for_vN_non_compact} below. Algorithm \ref{algo_inv_bisect_for_vN_non_compact} shares many features with the Han and Hoshi algorithm as we will see.

Define $C=\sup\big\{\frac{f(x)}{g(x)}\colon x \in I\big\}$, which is known thanks to \texttt{M}. As before in section \ref{sect_vN_compact}, the goal is to decompose the graph $\{(x,y)\colon y\leq f(x)\}$ into regions for which random variate generation is ``easy'', that is, where an instance of $g$ leads to acceptation. This can be mimicked by transforming the $x$-axis with $x\mapsto G(x)$ since $G$ is monotone and continuous. Using this transformation, we note that if $X$ has density $g$, then $G(X)$ is uniform on $[0,1]$. Furthermore, note that if $u=G(x)$, then
\begin{displaymath}
\frac{f\circ G^{-1}(u)}{g\circ G^{-1}(u)}=\frac{f(x)}{g(x)}= \tilde{f}(u)\quad\text{for $0\leq u\leq 1$,}
\end{displaymath}
where $\tilde{f}$ is a density on which we can use \texttt{M} properly modified as mentioned before. Since $\tilde{f}\leq C$, we can use a quadtree method similar to the one from algorithm \ref{algo_reject_compact} in order to select randomly a rectangle $R_i$ with probability $\lambda(R_i)$ from the decomposition
\begin{equation}
\big\{(u,v)\colon 0\leq u\leq 1\hspace{4pt}v\leq \tilde{f}(u)\big\}=\bigcup_{i\in\mathbb{N}}{\big\{R_{i}\colon\text{$R_{i}$ is an accepting rectangle}\big\}}.\label{decomp_vn_non_compact}
\end{equation}
We observe that if $\tilde{f}$ is Riemann integrable, then decomposition (\ref{decomp_vn_non_compact}) is valid so that we can decide whether to reject or accept with probability one. The expected number of coin flips required to decide is $(d+1)C\mathbf{E}(T)=2C\mathbf{E}(T)$ where $\mathbf{E}(T)\leq 4 + \sum_{k=0}^{\infty}{\big(I_{k}^{+}-I_{k}^{-}\big)}$, and this time, $I_{k}^{+}$ and $I_{k}^{-}$ are the Riemann approximations as of the integral of $\tilde{f}$ on an equally spaced partition of $[0,1]$. The quantity $\sum_{k=0}^{\infty}{\big(I_{k}^{+}-I_{k}^{-}\big)}$ is finite under smoothness conditions on $\tilde{f}$, and depends also on $C$, but clearly does not depend on $\epsilon$. We need therefore to analyze a method, shown below as algorithm \ref{algo_inv_bisect_for_vN_non_compact}, to generate an $\epsilon$-accurate outcome upon acceptance. Once a leaf that leads to acceptance is reached, say with label $[u_1,u_2]\times[v_1,v_2]=R$, algorithm \ref{algo_inv_bisect_for_vN_non_compact} generates an output $X_{\epsilon}$ such that $|X_{\epsilon}-G^{-1}(U)|\leq\epsilon$ and $(U,V)$ is uniform over $R$. We observe that, upon acceptance, $G^{-1}(U)$ has distribution function $G$ restricted to $[G^{-1}(u_1),G^{-1}(u_2)]$.

\begin{breakablealgorithm}
\caption{A modified Han and Hoshi's method for continuous distributions}\label{algo_inv_bisect_for_vN_non_compact}
\begin{algorithmic}[1]
\Require{$u_1$, $u_2$ such that $u_2>u_1$}\Comment{$u_2-u_1$ is the width of an accepting rectangle.}
\Require{$\epsilon>0$}
\Ensure{An $\epsilon$-accurate outcome}
\State{$x_1\leftarrow G^{-1}(u_1)$}
\State{$x_2\leftarrow G^{-1}(u_2)$}
\Loop
\If{$|x_2-x_1|\leq 2\epsilon$}
\State{$X_{\epsilon}\leftarrow\frac{x_1+x_2}{2}$}
\State{\textbf{Return} $X_{\epsilon}$}
\Else%[$|x_2-x_1|> 2\epsilon$]
\State{$\gamma\leftarrow \frac{u_1+u_2}{2}$}
\State{$B\leftarrow\texttt{RandomBit}$}
\If{$B=0$}
\State $u_2\leftarrow \gamma$
\State $x_2\leftarrow G^{-1}(u_2)$
\Else
\State $u_1\leftarrow \gamma$
\State $x_1\leftarrow G^{-1}(u_1)$
\EndIf
\EndIf
\EndLoop
\end{algorithmic}
\end{breakablealgorithm}

\begin{remark}
Algorithm \ref{algo_inv_bisect_for_vN_non_compact} is valid for singular distributions as well.
\end{remark}

\begin{figure}
\begin{centering}
\includegraphics{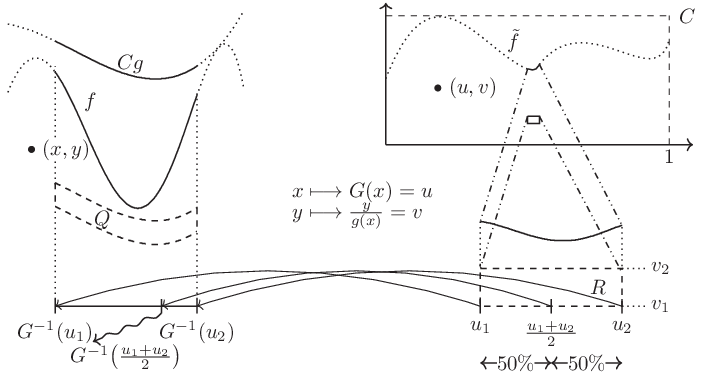}
\caption{An accepting uniform random rectangle $R$ and its pullback $Q$ for which $(u,v)\in R$ if and only if $(x,y)\in Q$.}\label{fig_bisect}
\end{centering}
\end{figure}

\begin{remark}
Note that with $x_1=G^{-1}(u_1)$ and $x_2=G^{-1}(u_2)$, we have
\begin{displaymath}
\lambda(R)=(u_2-u_1)(v_2-v_1)=\int_{u_1}^{u_2}{(v_2-v_1){d}u}=\int_{x_1}^{x_2}{(v_2-v_1)(g(x){d}x)}=\lambda(Q).
\end{displaymath}
Also if $(u,v)$ is such that $v<\tilde{f}(u)$, then the corresponding $(x,y)$ point is such that
\begin{displaymath}
y=vg(x)<\tilde{f}(u)g(x)=\bigg(\frac{f(x)}{g(x)}\bigg)g(x)=f(x),
\end{displaymath}
and similarly for $v>\tilde{f}(u)$.
\end{remark}

Algorithm \ref{algo_inv_bisect_for_vN_non_compact} chooses randomly a uniform subinterval and, if permitted to run forever, would produce a random variable with distribution function $G$ restricted to \mbox{$[G^{-1}(u_1),G^{-1}(u_2)]$} as is illustrated in figure \ref{fig_bisect}. So, for random variate generation, we only replace line (\ref{bisecto_sub_routine_line_reject_algo}) of algorithm \ref{algo_reject_compact} by algorithm \ref{algo_inv_bisect_for_vN_non_compact} with $[u_1,u_2]=R^{*}$, and note that elsewhere in algorithm \ref{algo_reject_compact}, $f$ must be replaced by $\tilde{f}$.

\begin{theorem}\label{thm_gen_case_I}
Let $\epsilon>0$ and, for $j\in\mathbb{Z}$, let $I_j=[2\epsilon j, 2\epsilon(j+1))$. For $b>a$, recall that $F\big([a,b)\big)= F(b)-F(a)$. The expected number of bits used by algorithm \ref{algo_inv_bisect_for_vN_non_compact} is bounded above by
\begin{equation}
3+\sum_{j\in\mathbb{Z}}{F(I_j)\log_{2}\bigg(\frac{1}{F(I_j)}\bigg)},\label{sum_vn_non_compacte}
\end{equation}
If R\'{e}nyi's condition, as from section \ref{sect_cts_gen}, holds together with $H(f)>-\infty$, then, as $\epsilon \to 0^{+}$, the expected complexity is bounded above by
\begin{displaymath}
\log_{2}\bigg(\frac{1}{\epsilon}\bigg)+H(f)+5+o(1).
\end{displaymath}
\end{theorem}

Before proving theorem \ref{thm_gen_case_I}, here are a few remarks.
\begin{remark}
R\'{e}nyi's condition holds if and only if the sum (\ref{sum_vn_non_compacte}) is finite for $\epsilon=1$.
\end{remark}
\begin{remark}
Theorem \ref{thm_gen_case_I} establishes that algorithm \ref{algo_inv_bisect_for_vN_non_compact} is optimal to within an additive constant. In particular, its main term, $-\log_{2}(\epsilon)$, and second term, the differential entropy $H(f)$, match our lower bound from theorem \ref{lowerbndthm}.
\end{remark}

For convenience, our notation in the proof hereafter differs slightly from algorithm \ref{algo_inv_bisect_for_vN_non_compact} and illustration \ref{fig_bisect}. Indeed, we use two letters and one index, namely $[u_i,v_i]$ for some $i\in\mathbb{Z}$, to denote the $x$-coordinates of a rectangle while we used one letter and two indices, namely $[u_1,u_2]$, in algorithm \ref{algo_inv_bisect_for_vN_non_compact} and figure \ref{fig_bisect}.

\begin{proof}[Proof of theorem \ref{thm_gen_case_I}]
Let us denote an accepting rectangle by $R_i$ and its projection by $R_{i}^{\star}$. So, if $R_{i}^{\star}=[u_i,v_i]$, then $R_i=[u_i,v_i]\times [\alpha_i,\alpha_i+Cq_i]$, where $0\leq \alpha_i\leq \alpha_i+Cq_i\leq C$, $q_i\in[0,1]$. The probability mass of $R_i$ is given by $p_i= (v_i-u_i)C q_i$. By the mapping $G^{-1}$, $R_i$ gets mapped to a contiguous region $Q_i$ with projection $Q_i^{\star}= [a_i,b_i]$ such that
\begin{align*}
&a_i=G^{-1}(u_i),\quad b_i=G^{-1}(v_i),\quad\text{and thus }v_i-u_i = \int_{a_i}^{b_i}{g}=G(Q_i^{\star})=\frac{p_i}{C{}q_i}.
\end{align*}
We observe also that
\begin{displaymath}
\sum_{i:\, x\in Q_{i}^{\star}}{C q_i g(x)}=f(x)\quad\text{for all $x\in\mathbb{R}$.}
\end{displaymath}
We have that $\{I_j\}_{j\in\mathbb{Z}}$ defines a regular grid with intervals of length $2\epsilon$. Given some $i\in\mathbb{Z}$, algorithm \ref{algo_inv_bisect_for_vN_non_compact} on input $[u_i,v_i]$ takes at most
\begin{displaymath}
3+\sum_{j\in\mathbb{Z}}{\xi_{ji}\log_{2}\bigg(\frac{1}{\xi_{ji}}\bigg)}\quad\text{where}\quad
\xi_{ji}=\frac{G\big(I_j\cap Q_{i}^{\star}\big)}{G(Q_{i}^{\star})}\quad\text{and}\quad j\in\mathbb{Z}.
\end{displaymath}
Given some $i\in\mathbb{Z}$, it is important to observe that $(\xi_{ji})_{j\in\mathbb{Z}}=\mathbf{\xi}_{i}$ is a probability vector. Conditional on some $i\in\mathbb{Z}$, algorithm \ref{algo_inv_bisect_for_vN_non_compact} on inputs $[u_i,v_i]$ is equivalent to the algorithm by Han and Hoshi \cite{HanHos_1997} on input vector $\mathbf{\xi}_{i}$. To obtain unconditionnally an upper bound on the expected number of coin flips, we average over all $R_i$ which yields to
\begin{equation}
3+\sum_{i\in\mathbb{Z}}{p_i\sum_{j\in\mathbb{Z}}{\xi_{ji}\log_{2}\bigg(\frac{1}{\xi_{ji}}\bigg)}}\leq 3+\sum_{j\in\mathbb{Z}}\Bigg(\sum_{i\in\mathbb{Z}}{p_i\xi_{ji}}\Bigg)\log_{2}\Bigg(\frac{1}{\sum_{i\in\mathbb{Z}}{p_i\xi_{ji}}}\Bigg).\label{ikuiku_no1}
\end{equation}
The inequality from \ref{ikuiku_no1} is due to the concavity of $u\log_{2}\big(1\slash u\big)$ in $u$ and by Jensen's inequality.

We observe also that
\begin{align*}
&\sum_{i\in\mathbb{Z}}{p_i\xi_{ji}}=\sum_{i\in\mathbb{Z}}{p_i\frac{G\big(I_j\cap Q_{i}^{\star}\big)}{G(Q_{i}^{\star})}}=\sum_{i\in\mathbb{Z}}{C q_i G(I_j \cap Q_{i}^{\star})}\\
&\quad=\sum_{i\in\mathbb{Z}}{C q_i \int_{I_j}{\mathds{1}\{x\in Q_i^{\star}\}g(x){d}x}}=\int_{I_j}{\bigg(\sum_{i\in\mathbb{Z}}{C q_i \mathds{1}\{x\in Q_{i}^{\star}\}}\bigg)g(x){d}x}\\
&\quad=\int_{I_j}{f(x){d}x}= F(I_{j}),
\end{align*}
where $F$ is the distribution function of $f$. Thus the expected number of coin flips does not exceed
\begin{align}
&3+\sum_{j\in\mathbb{Z}}{F(I_j)\log_{2}\bigg(\frac{1}{F(I_j)}\bigg)}\label{ikuiku_no2}.
\end{align}
In (\ref{ikuiku_no2}), we recognize the entropy defined by the probability vector $\big(F(I_j)\big)_{j\in\mathbb{Z}}$. We recall again results from Csisz\'{a}r \cite{Csi_1961}, \cite{Csi_1962} which state that if $\big(F(I_j)\big)_{j\in\mathbb{Z}}$ has a finite entropy for some $\epsilon>0$, and if $H(f)>-\infty$, then
\begin{displaymath}
3+\sum_{j\in\mathbb{Z}}{F(I_j)\log_{2}\bigg(\frac{1}{F(I_j)}\bigg)}\leq H(f)+\log_{2}\frac{1}{\epsilon}+5+o(1)\quad\text{as $\epsilon \to 0^{+}$.}
\end{displaymath}
The discrete distribution $\big(F(I_j)\big)_{j\in\mathbb{Z}}$ has a finite entropy if and only if R\'{e}nyi's condition holds. The ``5'' can be replaced by ``3'' if in addition $f$ is bounded and decreasing on its support, $[0,\infty)$.
\end{proof}

\subsection{Convolutional sampling methods}\label{sect_convolution}

A convolutional sampling method is based on sampling the sum of independent random variables. Let $\{X_{i}\}_{i\in\mathbb{Z}}$ be a family of independently non-identically random variables, and let $I\subseteq \mathbb{Z}$, if $\sum_{i\in I}{X_{i}}$ converges in distribution to some random variable $X$, then we say that $X$ is the convolution of the $X_{i}$ for $i\in I$. The distribution of $X$ may be singular, absolutely continuous or discrete. For the case when $X$ is continuous, let $b>1$ be an integer, $\{X_{i}\}_{i\in\mathbb{Z}}$ as before, we are interested in cases when
\begin{displaymath}
X=\sum_{j=-\infty}^{n}{X_{j}b^{j}}\quad\text{for some $n\in\mathbb{Z}$.}
\end{displaymath}

When $b=2$, then $X_i$ is a Bernoulli random variable for $i\in I$. When $n=-1$ and $b=2$, Kakutani's result \cite{Kak_1948} characterizes the type of distribution that the $X_j$'s yield.
\begin{theorem}[Kakutani \cite{Kak_1948}]\label{thm_kakutani}
For all $i\in\mathbb{N}$, let $p_{i}\in[0,1]$ and let $X_{i}$ be independent Bernoulli random variables such that $\mathbf{P}\{X_{i}=1\}=p_{i}$. If $X=\sum_{i=1}^{\infty}{X_{i}2^{-i}}$, then
\begin{align*}
\text{$X$ is singular} &\Leftrightarrow \sum_{i=1}^{\infty}{\bigg(p_{i}-\frac{1}{2}\bigg)^{2}}\text{ diverges,}\\
\text{$X$ is absolutely continuous} &\Leftrightarrow \sum_{i=1}^{\infty}{\bigg(p_{i}-\frac{1}{2}\bigg)^{2}}\text{ converges,}\\
\text{$X$ is discrete} &\Leftrightarrow \prod_{i=1}^{\infty}{\bigg(\frac{1}{2}+\Big|p_{i}-\frac{1}{2}\Big|\bigg)}>0.
\end{align*}
\end{theorem}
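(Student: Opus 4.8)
The plan is to recognize $\mathcal L(X)$ as the pushforward of an infinite product measure and apply Kakutani's dichotomy for such measures, which is the content of \cite{kakutani1948}. Write $\Omega=\{0,1\}^{\mathbb N}$, set $\mu_i=\mathrm{Ber}(p_i)$ and $\nu_i=\mathrm{Ber}(1/2)$, let $\mu=\bigotimes_{i\ge1}\mu_i$ and $\nu=\bigotimes_{i\ge1}\nu_i$ on $\Omega$, and let $\Phi\colon\Omega\to[0,1]$ be the continuous map $\Phi(\omega)=\sum_{i\ge1}\omega_i2^{-i}$. The facts I would use are: $\Phi_{*}\nu=\lambda$, Lebesgue measure on $[0,1]$; $\Phi$ is injective off the countable set $D$ of eventually-constant sequences, and $\Phi(D)$ is the $\lambda$-null set of dyadic rationals; and $\Phi_{*}\mu=\mathcal L(X)$. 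First I would dispose of deterministic coordinates: if $p_i\in\{0,1\}$ for infinitely many $i$, then with probability one $X$ has a prescribed binary digit at each of those positions, so $\mathcal L(X)$ is carried by a $\lambda$-null set and is singular, while $\sum_i(p_i-\tfrac12)^2=\infty$; and finitely many deterministic coordinates only shift $X$ by a constant, affecting none of the three criteria. So I may assume $p_i\in(0,1)$ for all $i$.

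\emph{The discrete alternative.} Put $m_i=\max(p_i,1-p_i)=\tfrac12+|p_i-\tfrac12|$ and pick $\xi_i\in\{0,1\}$ attaining $m_i$. If $\prod_i m_i>0$, then $\sum_i\min(p_i,1-p_i)=\sum_i(1-m_i)<\infty$, so by Borel--Cantelli, with probability one $X_i=\xi_i$ for all but finitely many $i$; hence $X$ lies a.s.\ in the countable set $\{\sum_i a_i2^{-i}:a_i=\xi_i\text{ eventually}\}$ and is discrete. Conversely, if $\prod_i m_i=0$, then for every $x=\sum_i a_i2^{-i}$ we get $\mathbf P\{X=x\}\le\prod_{i\le n}\mathbf P\{X_i=a_i\}\le\prod_{i\le n}m_i\to0$ (summing over the at most two expansions of a dyadic $x$), so $X$ is atomless, i.e.\ continuous; this settles the discrete clause.

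\emph{Absolutely continuous versus singular.} Since $\mu_i\sim\nu_i$ for each $i$, Kakutani's theorem gives $\mu\sim\nu$ or $\mu\perp\nu$, with $\mu\sim\nu$ iff $\prod_i H_i>0$, i.e.\ $\sum_i(1-H_i)<\infty$, where
\begin{displaymath}
H_i=\int\!\sqrt{\tfrac{d\mu_i}{d\nu_i}}\,d\nu_i=\tfrac1{\sqrt2}\big(\sqrt{p_i}+\sqrt{1-p_i}\big),\qquad H_i^2=\tfrac12+\sqrt{p_i(1-p_i)}=\tfrac12\Big(1+\sqrt{1-4(p_i-\tfrac12)^2}\Big).
\end{displaymath}
As $1-\sqrt{1-x}$ is comparable to $x$ on $[0,1]$, $1-H_i$ is comparable to $(p_i-\tfrac12)^2$, so $\sum_i(1-H_i)<\infty\iff\sum_i(p_i-\tfrac12)^2<\infty$. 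If $\mu\ll\nu$, then any $\lambda$-null $A$ has $\Phi^{-1}(A)$ $\nu$-null, hence $\mu$-null, hence $\mathcal L(X)(A)=0$, so $X$ is absolutely continuous. If $\mu\perp\nu$, choose $S\subseteq\Omega$ with $\mu(S)=1$, $\nu(S)=0$; in the non-discrete case $\mu(D)=0$ (an eventually-constant sequence carries positive $\mu$-mass only when $\sum p_i<\infty$ or $\sum(1-p_i)<\infty$, either of which makes $\prod_i m_i>0$), so $\Phi(S\setminus D)$ carries $\mathcal L(X)$, and by injectivity of $\Phi$ on $\Omega\setminus D$ and $\nu(D)=0$ it is $\lambda$-null; hence $X$ is singular. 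Finally $\prod_i m_i>0$ forces $\sum_i\min(p_i,1-p_i)<\infty$, hence $(p_i-\tfrac12)^2\ge\tfrac1{16}$ for all large $i$ and $\sum_i(p_i-\tfrac12)^2=\infty$; so ``discrete'' is subsumed under ``singular,'' and the three equivalences are mutually consistent.

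\emph{Expected obstacle.} The Borel--Cantelli step and the Hellinger estimate are routine; the delicate point is transferring Kakutani's product-measure dichotomy on $\Omega$ to a statement about $\mathcal L(X)$ on $[0,1]$ — controlling the non-injectivity of $\Phi$ at the dyadic rationals and checking that the exceptional sets ($D$, and images of $\mu$- or $\nu$-null sets) are null on the relevant side — together with the bookkeeping for coordinates with $p_i\in\{0,1\}$. Everything else is a direct application of \cite{kakutani1948}.
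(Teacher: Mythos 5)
The paper does not prove this statement: it quotes it as Kakutani's 1948 result with only the citation \cite{kakutani1948}, so there is no internal proof to compare yours against. Your derivation is the standard and correct route to this corollary: identify $\mathcal L(X)$ as the pushforward under the binary-expansion map $\Phi$ of the product measure $\bigotimes_i\mathrm{Ber}(p_i)$, apply Kakutani's dichotomy against $\bigotimes_i\mathrm{Ber}(1/2)$ with the Hellinger affinity $H_i=\tfrac{1}{\sqrt2}(\sqrt{p_i}+\sqrt{1-p_i})$ (your computation $1-H_i\asymp(p_i-\tfrac12)^2$ is right, since $m_i\ge\tfrac12$ forces $H_i\ge 1/\sqrt2$), and settle the discrete clause by Borel--Cantelli in one direction and the vanishing of all atom masses in the other. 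Two points deserve the explicit care you already flag: first, ``singular'' in the statement must be read as ``mutually singular with Lebesgue measure'' rather than ``singular continuous,'' since the discrete case also has $\sum_i(p_i-\tfrac12)^2=\infty$; your closing consistency check handles this correctly, and note that you never need the Jessen--Wintner purity theorem because you prove $\mathcal L(X)\ll\lambda$ and $\mathcal L(X)\perp\lambda$ directly in the respective cases. Second, in the singular case $\Phi(S\setminus D)$ is the continuous image of a Borel set, hence analytic rather than Borel; to be fully rigorous either invoke universal measurability or replace $S$ by a compact subset of $\mu$-measure $\ge 1-\delta$ and let $\delta\downarrow 0$. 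With those routine caveats the argument is complete.
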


First of all, as shown in \cite{Dev_book1986}, if $X$ is an exponential random variable with unit mean parameter, then $\lfloor X\rfloor$ is distributed as a geometric random variable with parameter $1\slash e$, and $X-\lfloor X\rfloor$, the fractional part of $X$, is distributed as a truncated exponential random variable on the interval $[0,1)$; moreover $\lfloor X\rfloor$ and $X-\lfloor X\rfloor$ are independent. We concentrate on the fractional part therefore. The following theorem tells us that the fractional part is the convolution of independent Bernoulli random variables.

\begin{theorem}\label{gravelconvolbernoexpo}
Let $(X_1,\ldots,X_j,\ldots)$ be a sequence of scaled independent Bernoulli distributed random variables with
\begin{displaymath}
\mathbf{P}\{X_j=2^{-j}\}=p_j=\frac{e^{-1\slash2^{j}}}{e^{-1\slash2^{j}}+1}\quad\text{and}\quad\mathbf{P}\{X_j=0\}=1-p_j\quad\text{for $j\in\mathbb{N}$.}
\end{displaymath}
If $X=\sum_{j=1}^{\infty}{X_j}$, then $X$ is a truncated exponential random variable, that is,
\begin{displaymath}
f(x)=\frac{e^{-x}}{1-e^{-1}}\textnormal{ for $x\in(0,1)$.}
\end{displaymath}
\end{theorem}
\begin{proof}[Proof of theorem \ref{gravelconvolbernoexpo}]
The Fourier transform of $X_j$ is
\begin{align*}
&\mathbf{E}(e^{\imath{}X_j{}t})=p_j{}e^{\imath{}t{}\slash2^{j}}+(1-p_j)=\frac{e^{((-1+\imath{}t)\slash2^{j})}+1}{e^{-1\slash2^{j}}+1}.
\end{align*}
Since $X$ is the sum of the independent $X_j$'s, we have
\begin{align*}
&\mathbf{E}(e^{\imath{}X{}t})=&\prod_{j=1}^{\infty}{\mathbf{E}(e^{\imath{}X_j{}t})}=\prod_{j=1}^{\infty}{\frac{e^{((-1+\imath{}t)\slash2^{j})}+1}{e^{-1\slash2^{j}}+1}}=\frac{1-e^{(-1+\imath{}t)}}{1-e^{-1}}\frac{\phantom{-}1}{-1+\imath{}t}\,
\end{align*}
which is the Fourier transform of $f(x)$.
\end{proof}
We can thus generate $X-\lfloor X\rfloor$ with precision $\epsilon$ if we set \mbox{$k=\lceil\log_{2}\big(\frac{1}{\epsilon}\big)\rceil$}, and let $Y=\sum_{j=1}^{k}{X_j}$. The differential entropy of $f$ is given by $H(f)=\frac{e}{e-1}\log_{2}(e-1)$. By using batch generation from section \ref{sect_batch_gen}, the expected complexity of the number random bits is asymptotically given by
$\log_{2}\big(\frac{1}{\epsilon}\big)+H(f)=\log_{2}\big(\frac{1}{\epsilon}\big)+1.2354\ldots$; we do not know currently a sampling algorithm with a better expected complexity.

\section{Conclusion and further research}\label{sect_conclusion}

We conclude our work with a few questions for further research and for which the source of randomness can be of another type than the type we have assumed which produces unbiased, i.i.d.\ random bits.
\begin{enumerate}
\item Besides the DDG-tree based algorithms based on mass functions (Knuth and Yao) or cumulative functions (Han and Hoshi), are there other natural DDG-tree based algorithms for discrete distributions? Perhaps following the line of research from section \ref{sect_generic_DDG}.
\item Is there a connection between an $(\epsilon, p)$-sampling graph as in definition \ref{defn_samplinggraph} from section \ref{sect_cts_gen} and transition graph of some probabilistic automata? What does the cycle structure of a sampling graph reveal, if it has cycles excluding loops?
\item Is it possible to further extend our rejection algorithms from section \ref{sect_vN} to densities that are not Riemann-integrable?
\item The study of upper bounds for singular distributions that cannot be expressed as a convolution as in section \ref{sect_convolution}.
\item Is there a general framework for mixtures of distributions which are neither continuous nor discrete?
\item Besides sampling distributions with coin flips, generating combinatorial objects from unbiased i.i.d.\ such as permutations as done in Bacher, Bodini, Hwang and Tsai \cite{BacBodHwaHsi_2017} would be very practical.
\end{enumerate}
%\section*{Acknowledgment}

\bibliographystyle{plain}

\begin{thebibliography}{50}

\bibitem{BoxMul_1958}
\SortNoop{Box}{George~E.~P. Box and Mervin~E. Muller}.
\newblock A note on the generation of random normal deviates.
\newblock \emph{Ann. Math. Stat}, vol.~29, pp. 610--611, 1958.

\bibitem{BraDevGra_2016}
\SortNoop{Brassard}{Gilles Brassard, Luc Devroye and Claude Gravel}.
\newblock Exact classical simulation of the quantum-mechanical \uppercase{GHZ} distribution.
\newblock {\em {IEEE} Trans. Inf. Theory}, 62(2):876--890, 2016.

\bibitem{Csi_1961}
\SortNoop{Csiszar}{Imre Csisz\'{a}r}.
\newblock Some remarks on the dimension and entropy of random variables.
\newblock \emph{Acta Mathematica Academiae Scientiarum Hungarica}, vol.~12, pp. 399--408, 1961.

\bibitem{Csi_1962}
\SortNoop{Csiszar}{Imre Csisz\'{a}r}.
\newblock On the dimension and entropy of order $\alpha$ of the mixture of probability distributions.
\newblock \emph{Acta Mathematica Academiae Scientiarum Hungarica}, vol.~13, pp. 245--255, 1962.

\bibitem{CovTho_book1991}
\SortNoop{Cover}{Thomas~M. Cover and Joy~A. Thomas}.
\newblock \emph{Elements of Information Theory}, Wiley, New-York, 1991.

\bibitem{Dev_book1986}
\SortNoop{Devroye}{Luc Devroye}.
\newblock {\em Non-Uniform Random Variate Generation}.
\newblock Springer-Verlag, 1986.

\bibitem{DevGra_2017}
\SortNoop{Devroye}{Luc Devroye and Claude Gravel}.
\newblock The expected bit complexity of the Von Neumann rejection algorithm.
\newblock {\em Statistics and Computing}, 27(3):699–710, 2017.

\bibitem{FlaSah_1986}
\SortNoop{Flajolet}{Philippe Flajolet and Nasser Saheb}.
\newblock The complexity of generating an exponentially distributed variate.
\newblock \emph{Journal of Algorithms}, vol.~7, pp. 463--488, 1986.

\bibitem{GMP}
\SortNoop{GMP}{\uppercase{GMP}: The GNU Multiple Precision Arithmetic Library}.
\newblock \url{https://gmplib.org/}.

\bibitem{HanHos_1997}
\SortNoop{Han}{Te Sun Han and Mamoru Hoshi}.
\newblock Interval algorithm for random number generation.
\newblock \emph{IEEE Transactions on Information Theory}, vol.~43, no.~2, pp. 599--611, 1997.

\bibitem{HarRis_BookNewEd2005}
\SortNoop{Hardy}{George~H. Hardy and Marcel Riesz}.
\newblock {\em The General Theory of Dirichlet's Series}.
\newblock Cambridge Tracts in Mathematics and Mathematical Physics. Dover
  Publications, 2005.

\bibitem{BacBodHwaHsi_2017}
\SortNoop{Hwang}{Axel Bacher, Olivier Bodini, Hsien-Kuei Hwang and Tsung-Hsi Tsai}.
\newblock Generating random permutations by coin tossing: Classical algorithms, new analysis, and modern implementation.
\newblock {\em ACM Trans. Algorithms}, 13(2), 2017.

\bibitem{Kar_2016}
\SortNoop{Karney}{Charles~F.F. Karney}.
\newblock Sampling Exactly from the Normal Distribution.
\newblock \emph{ACM Transactions on Mathematical Software}, vol.~42, no.~1, 2016.

\bibitem{Kak_1948}
\SortNoop{Kakutani}{Shizuo Kakutani}.
\newblock On equivalence of infinite product measures.
\newblock {\em Annals of Mathematics}, 49(1):214--224, 1948.

\bibitem{KnuYao_1976}
\SortNoop{Knuth}{Donald~E. Knuth and Andrew~C. Yao}.
\newblock {\em Algorithms and Complexity: New Directions and Recent Results},
  chapter The complexity of nonuniform random number generation, pages
  357--428.
\newblock Academic Press, New York, 1976.

\bibitem{Kul_book_1997ed}
\SortNoop{Kullback}{{Solomon Kullback}}.
\newblock {\em Information Theory and Statistics}.
\newblock Dover Publications, 1997.

\bibitem{LinZeg_1994}
\SortNoop{Linder}{Tam\'{a}s Linder and Kenneth Zeger}.
\newblock Asymptotic entropy-constrained performance of tessellating and universal randomized lattice quantization.
\newblock \emph{IEEE Transactions of Information Theory}, vol.~40, no.~2, 1994.

\bibitem{Lum_PHD_2012}
\SortNoop{Lumbroso}{J{\'{e}}r{\'{e}}mie Lumbroso}.
\newblock {\em Probabilistic Algorithms for Data Sreaming and Random
  Generation}.
\newblock PhD thesis, Universit\'{e} Pierre et Marie Curie - Paris 6, 2012.

\bibitem{MPFR}
\SortNoop{MPFR}{\uppercase{MPFR}: The GNU MPFR Library}.
\newblock \url{https://www.mpfr.org/}.

%\bibitem{Pei_2016}
%\SortNoop{Peikert}{Chris Peikert}.
%\newblock A decade of lattice cryptography.
%\newblock {\em Foundations and Trends in Theoretical Computer Science},
%  10(4):283--424, 2016.

\bibitem{RacRus_volI}
\SortNoop{Rachev}{Svetlozar T.~Rachev and Ludger R{\"u}schendorf}.
\newblock \emph{Mass Transportation Problems: Volume I: Theory}, Springer, New York, 1998

\bibitem{RacRus_volII}
\SortNoop{Rachev}{Svetlozar T.~Rachev and Ludger R{\"u}schendorf}.
\newblock \emph{Mass Transportation Problems: Volume II: Applications}, Springer, New York, 2006

\bibitem{Ren_1959}
\SortNoop{R\'{e}nyi}{Alfr\'{e}d R\'{e}nyi}.
\newblock On the dimension and entropy of probability distributions.
\newblock \emph{Acta Mathematica Academiae Scientiarum Hungarica}, vol.~10, pp. 193--215, 1959.

\bibitem{Sam_book_2006}
\SortNoop{Samet}{{Hanan Samet}}.
\newblock {\em Foundations of Multidimensional and Metric Data Structures}.
\newblock Morgan Kaufmann, Elsevier/Morgan Kaufmann, San Mateo, 2006.

\bibitem{Sha_1948}
\SortNoop{Shannon}{Claude~E. Shannon}.
\newblock A mathematical theory of communication.
\newblock \emph{Bell. Sys. Tech. Journal}, vol.~27, pp. 379--423, 623--656, 1948.

\bibitem{NTL}
\SortNoop{Shoup}{Victor Shoup}.
\newblock \uppercase{NTL}: A library for doing number theory.
\newblock \url{https://www.shoup.net/ntl/}.

\bibitem{vN_1951}
\SortNoop{Neumann}{{John Von Neumann}}.
\newblock {\em Various techniques used in connection with random digits. \uppercase{M}onte \uppercase{C}arlo
  \uppercase{M}ethods}.
\newblock National Bureau of Standards, 12 (1951), pp.~36--38.

\bibitem{Whi_Wat_1996}
\SortNoop{Watson}{George N. Watson and Edmund T. Whittaker}.
\newblock{\em A Course of Modern Analysis}.
\newblock Cambridge University Press, 4th edition, 1996.

%\bibitem{Pae_2010}
%\SortNoop{Pae}{\sc{Sung-il Pae}}.
%\newblock Interval Algorithms for Coin Flipping.
%\newblock International Journal of Computer Science and Network Security.
%\newblock Vol. 10, No. 2, 2010, pp. 55-60.
\end{thebibliography}
\newcommand{\SortNoop}[1]{}

\end{document}